\renewcommand\footnotetextcopyrightpermission[1]{} 
\newtheorem{theorem}{Theorem}
\newtheorem{lemma}{Lemma}
\newtheorem{definition}{Definition}
\newcommand{\tool}{C11Tester\xspace}
\newcommand{\Tool}{C11Tester\xspace}
\newcommand{\papertitle}{\Tool: A Race Detector for C/C++ Atomics}
\newcommand{\mo}{\textit{mo}}
\newcommand{\rf}{\textit{rf}}
\newcommand{\hb}{\textit{hb}}
\newcommand{\sbo}{\textit{sb}}
\newcommand{\sco}{\textit{sc}}
\newcommand{\sw}{\textit{sw}\xspace}
\newcommand{\asw}{\textit{asw}\xspace}
\newcommand{\Mograph}{\hbox{\textit{mo-graph}}\xspace}
\newcommand{\code}[1]{{\tt #1}}
\newcommand{\br}[1]{\left( #1 \right) }
\newcommand{\TCV}{\mathbb{C}\xspace}
\newcommand{\CV}{\hbox{\textit{CV}}\xspace}
\newcommand{\Tid}{\hbox{\textit{Tid}}\xspace}
\newcommand{\Seq}{\hbox{\textit{Seq}}\xspace}
\newcommand{\Frel}[1]{\mathbb{F}^{\textit{rel} #1 }}
\newcommand{\Facq}[1]{\mathbb{F}^{\textit{acq} #1 }}
\newcommand{\RF}{\mathbb{RF}\xspace}
\newcommand{\System}{\hbox{$\Sigma$}\xspace}
\newcommand{\Epoch}{\textit{Epoch}\xspace}
\newcommand{\Val}{\hbox{\textit{Val}}\xspace}
\newcommand{\ThrState}{\hbox{\textit{ThrState}}\xspace}
\newcommand{\ALocs}{\hbox{\textit{ALocs}}\xspace}
\newcommand{\NALocs}{\hbox{\textit{NALocs}}\xspace}
\newcommand{\ALocInfo}{\hbox{\textit{ALocInfo}}\xspace}
\newcommand{\FenceInfo}{\hbox{\textit{FenceInfo}}\xspace}
\newcommand{\SysState}{\hbox{\textit{State}}\xspace}
\def\Yields{\Rightarrow}
\definecolor{listinggray}{gray}{0.9}
\definecolor{lbcolor}{rgb}{0.9,0.9,0.9}
\newcommand{\paper}[1]{}
\newcommand{\techreport}[1]{#1}
\newcommand{\Naively}{Naively\xspace}
\newcommand{\etal}{\hbox{\emph{et al.}}\xspace}
\newcommand{\ie}{\hbox{\emph{i.e.},}\xspace}
\newcommand{\eg}{\hbox{\emph{e.g.},}\xspace}
\newcommand{\mycomment}[1]{}
\newcommand{\lasw}{\textit{lasw}}
\newcommand{\lsb}{\textit{lsb}}
\newcommand{\lsc}{\textit{lsc}}
\newcommand{\lift}{\hbox{\textit{lift}}}
\newcommand{\olift}{\overline{\textit{lift}}}
\newcommand{\myState}[1]{\textit{s}_{#1}}
\newcommand{\myTrans}[1]{\textit{t}_{#1}}
\newcommand{\trace}[1]{{\sigma}_{#1}}
\newcommand{\traces}{\hbox{\textit{traces}}}
\newcommand{\event}[1]{\textit{e}_{#1}}
\newcommand{\Exec}[1]{\textit{E}_{#1}}
\newcommand{\oExec}[1]{\overline{E}_{#1}}
\newcommand{\partialmo}[2]{\textit{S}_{#1}^{#2}}
\newcommand{\Consistent}{\hbox{\textit{Consistent}}}
\newcommand{\rConsistent}{\hbox{\textit{rConsistent}}}
\newcommand{\vcenterarrow}{\ensuremath{\Longrightarrow}}
\newcolumntype{M}{>{\centering\arraybackslash}m{\dimexpr.44\linewidth-2\tabcolsep}}
\newcolumntype{C}{>{\centering\arraybackslash}m{\dimexpr.12\linewidth-2\tabcolsep}}
\definecolor{listinggray}{gray}{0.9}
\definecolor{lbcolor}{rgb}{0.9,0.9,0.9}
\begin{document}

\begin{abstract}
Writing correct concurrent code that uses atomics under the C/C++
memory model is extremely difficult.  We present \tool, a
race detector for the C/C++ memory model that can explore executions in a larger fragment
of the C/C++ memory model than previous race detector
tools.  Relative to previous work, \tool's larger fragment includes
behaviors that are exhibited by ARM processors.
\Tool uses a new constraint-based algorithm to implement modification order that is optimized
to allow \tool to make decisions in terms of application-visible behaviors.
We evaluate \tool on several benchmark applications, and compare \tool's performance to both
tsan11rec, the
state of the art tool that controls
scheduling for C/C++; and 
tsan11, the state of the art tool that does not control
scheduling.

\end{abstract}

\title{\papertitle{} \\ \textbf{Technical Report}}

\author{Weiyu Luo}
\affiliation{%
  \institution{University of California, Irvine}
  \city{Irvine, California}
  \country{USA}}
\email{weiyul7@uci.edu}

\author{Brian Demsky}
\affiliation{%
  \institution{University of California, Irvine}
  \city{Irvine, California}
  \country{USA}}
\email{bdemsky@uci.edu}

\date{}
\maketitle

\thispagestyle{empty}

\section{Introduction\label{sec:intro}}

The C/C++11 standards added a weak memory model with support for
low-level atomics operations~\cite{cpp11spec,c11spec} that allows
experts to craft efficient concurrent data structures that scale
better or provide stronger liveness guarantees than lock-based data
structures.  \techreport{The potential benefits of atomics can lure both experts
and novice developers to use them.} However, writing correct
concurrent code using these atomics operations is extremely difficult.

Simply executing concurrent code is not an effective approach to
testing.  Exposing concurrency bugs often requires 
executing a specific path that might only occur when the program
is heavily loaded during deployment, executed on a specific processor,
or compiled with a specific compiler.
Some prior work helps record and replay buggy executions~\cite{deloren}.
Debuggers like Symbiosis~\cite{symbiosis} and Cortex~\cite{cortex} focus on
sequential consistency and test programs by
modifying thread scheduling of given initial executions.
However, both the thread scheduling and relaxed behavior of C/C++
atomics are sources of nondeterminism in a C/C++ programs that use atomics.
Thus, it is necessary to develop tools
to help test for concurrency bugs.  We present the \Tool tool for testing
C/C++ programs that use atomics.

Figure~\ref{fig:overview} presents an overview of the \Tool system.
\Tool is implemented as a dynamically linked library together with an
LLVM compiler pass, which instruments atomic operations,
non-atomic accesses to shared memory locations, and fence
operations with function calls into the \tool dynamic library.  The C++
and pthread library functions are overridden by the \tool library---\tool
implements its own threading library using fibers to precisely control
the scheduling of each thread.  The \tool library implements a race detector and \tool reports any races or assertion violations that it discovers.

\begin{figure}[!htb]
\includegraphics[scale=0.35]{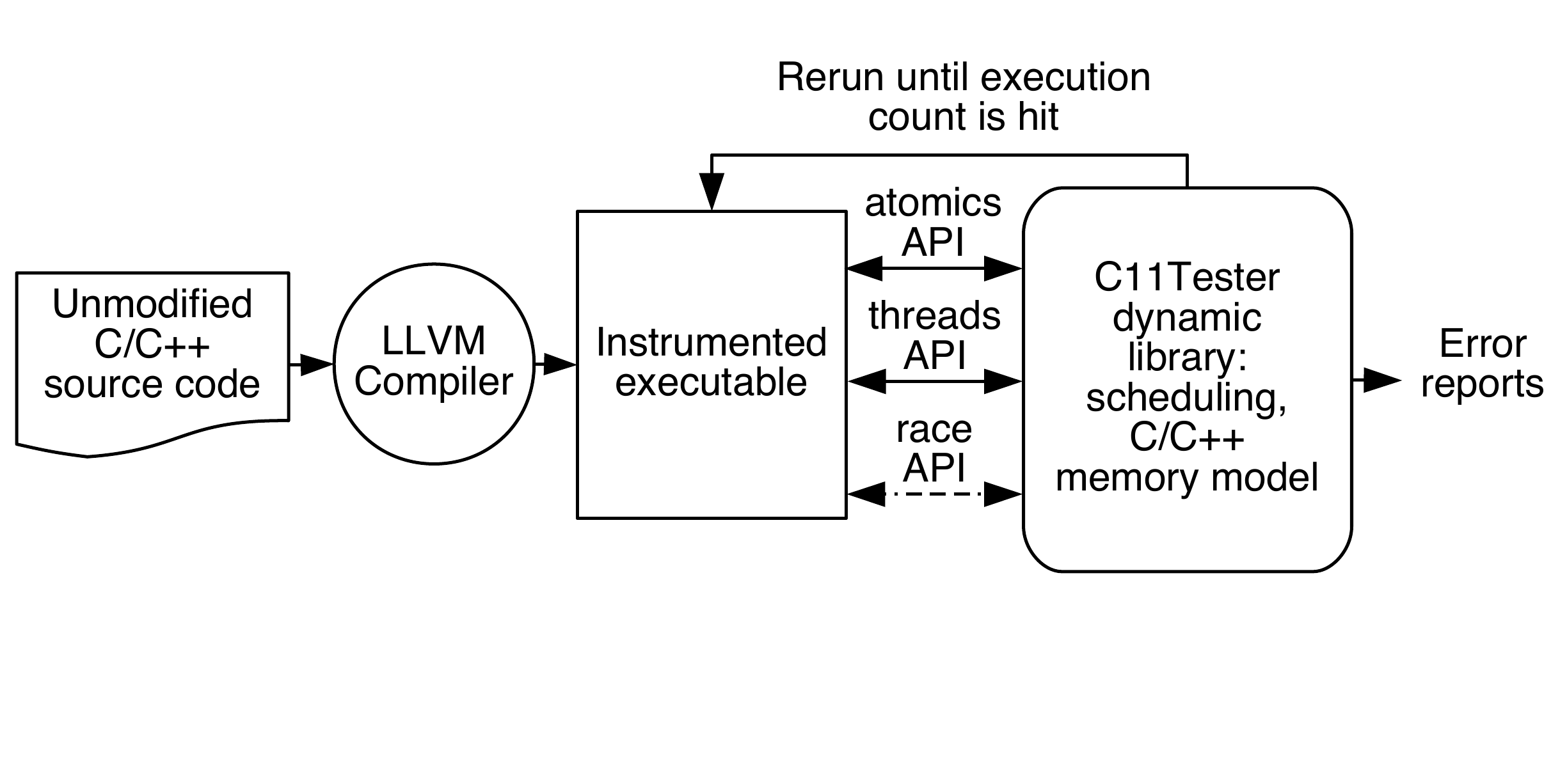}\vspace{-1.0cm}
\caption{\Tool system overview\label{fig:overview}}
\end{figure}

The C/C++ memory model defines the \emph{modification order} relation
to totally order all atomic stores to a memory location.  This
relation captures the notion of cache coherence.  The modification
order is not directly observable by the program execution --- it is
only observed indirectly through its effects on program visible
behaviors such as the values returned by loads.  Under the C/C++
memory model, modification order cannot be extended to be a total
order over all stores that is consistent with the happens-before relation.

This paper presents a new technique for scaling a constraint-based
treatment of the modification order relation to long
executions. \emph{This technique allows \tool to support a larger
  fragment of the C/C++ memory model than previous race detectors.}
In particular, this technique can handle the full range of
modification orders that are permitted by the C/C++ memory model.

Constraint-based modification order delays decisions
about the modification order until the decisions have observable effects on the program's behavior.
For example, when an algorithm decides which store a load will read from, \tool adds
the corresponding constraints to the modification order.  This
approach allows testing algorithms to focus on program visible
behaviors such as the value a load reads and does not require them to
eagerly decide the modification order.

Fibers provide a more efficient means to control thread schedules than
kernel threads.  However, C/C++ programs commonly make use of thread
local storage (TLS) and fibers do not directly support TLS.
This paper presents a new technique, thread context
borrowing, that allows fiber-based scheduling to support thread local
storage without incurring dependencies on TLS implementation details
that can vary across different library versions.

\subsection{Comparison to Prior Work on Testing C/C++11}
Prior work on data race detectors for C/C++11 such as
tsan11~\cite{tsan11} and tsan11rec~\cite{tsan11rec} require
$\hb \cup \rf \cup \mo \cup \sco$ be acyclic and thus miss
potentially bug-revealing executions that both are allowed by the C/C++
memory model and can be produced by mainstream hardware including ARM
processors.  We have found examples of bugs that \tool can detect 
but tsan11 and tsan11rec miss
due to the set of $\hb \cup \rf$ edges orders writes in the modification order.

\Tool's constraint-based approach to modification order supports a
larger fragment of the C/C++ memory model than tsan11 and
tsan11rec.  \Tool adds minor
constraints to the C/C++ memory model to forbid out-of-thin-air (OOTA) executions for relaxed atomics.
Furthermore, these constraints appear to incur minimal
overheads on existing ARM processors~\cite{oota} while x86 and PowerPC
processors already implement these constraints.

\subsection{Contributions}

This paper makes the following contributions:

\begin{itemize}
\item {\bf Scalable Concurrency Testing Tool:} It presents a 
  tool for the C/C++ memory model that can test full
  programs.
  
\item {\bf Supports a Larger Fragment of the C/C++ Memory Model:} It
  presents a tool that supports a larger fragment of the C/C++
  memory model than previous tools.
    
\item {\bf Constraint-Based Modification Order:} The modification
  order relation is not directly visible to the application, instead
  it constrains the behaviors of visible relations such as the reads-from
  relation.  \techreport{Eagerly selecting the modification order limits the
  choices of stores that a load can read from and thus limits the
  information available to algorithms.}  We develop a
  scalable constraint-based approach to modeling the modification
  order relation that allows algorithms to ignore the
  modification order relation and focus on program
  visible behaviors.

\item {\bf Support for Limiting Memory Usage:} The size of the C/C++
  execution graph and execution trace grows as the program executes and thus
  limits the length of executions that a testing tool can support.  Naively
  freeing portions of the graph can cause a tool to produce
  executions that are forbidden by the memory model.  We present
  techniques that can limit the memory usage of \tool while ensuring
  that \tool only produces executions that are allowed by the C/C++
  memory model. \mycomment{We present two approaches: (1) a conservative
  approach that does not change the set of executions that \tool can
  produce and (2) a more aggressive approach that allows users to
  decide how to trade off \tool's memory usage with shrinking the set of
  executions that \tool can generate.}
   
\item {\bf Fiber-based Support for Thread Local Storage:} Fibers are
  the most efficient way to control the scheduling of the application
  under test, but supporting thread local storage with fibers is
  problematic.  We develop a novel approach for borrowing the context
  of a kernel thread to support thread local storage.

\item {\bf Evaluation:} We evaluate \tool on several applications and
  compare against both tsan11 and tsan11rec.  We show that \tool can
  find bugs that tsan11 and tsan11rec miss.  We present a performance
  comparison with both tsan11 and tsan11rec.
\end{itemize}

\techreport{
\section{C/C++ Atomics\label{sec:background}}

In this section, we present general background on the C/C++ memory
model and then discuss the fragment of the C/C++ memory model that
\tool supports.  The C and C++ standards were extended in 2011 to
include a weak memory model that provides precise guarantees about the
behavior of both the compiler and the underlying processor.  The
standards divide memory locations into two types: normal types, which
are accessed using normal memory primitives; and atomic types, which
are accessed using atomic memory primitives.  The standards forbid
data races on normal memory types and allow arbitrary accesses to
atomic memory types.  Accesses to atomic memory types have an optional
\code{memory\_order} argument that explicitly specifies the ordering
constraints.  Any operation on an atomic object will have one of six
\textit{memory orders}, each of which falls into one or more of the
following categories.
Like all other tools for the C/C++ memory
model, compilers, and work on formalization to our knowledge, \Tool does not support
the consume memory order and thus we omit consume in our presentation.

\begin{description}
        \item[seq-cst:]
                \path{memory_order_seq_cst} --
                strongest memory ordering, there exists a total order of all operations with this memory ordering.  Loads that are seq\_cst either read from the last store in the seq\_cst order or from some store that is not part of seq\_cst total order.

        \item[release:]
                \path{memory_order_release},
                \path{memory_order_acq_rel}, and
                \path{memory_order_seq_cst} --
               when a load-acquire reads from a store-release, it establishes a happens-before relation between the store and the load.  Release sequences generalize this notion to allow intervening RMW operations to not break synchronization.

        \item[acquire:]
                \path{memory_order_acquire},
                \path{memory_order_acq_rel}, and
                \path{memory_order_seq_cst} --
                may form release/acquire synchronization.

        \item[relaxed:] \code{memory\_order\_relaxed} -- weakest
          memory ordering.  The only constraints for relaxed memory
          operations are a per-location total
          order, the modification order, that is equivalent to cache coherence.
\end{description}

The C/C++ memory model expresses program behavior in the form
of binary relations or orderings. We briefly summarize the relations:

\begin{itemize}
\item{\bf Sequenced-Before:}
The evaluation order within a program establishes an intra-thread
\textit{sequenced-before} (\textit{sb}) relation---a strict preorder of the atomic operations over the
execution of a single thread. 

\item{\bf Reads-From:}
The \textit{reads-from} ($\rf$) relation consists of store-load
pairs $(X, Y)$ such that $Y$ takes its value from 
$X$.  In the C/C++ memory model, this relation
is non-trivial, as a given load operation may read from one of many
potential stores in the execution.

\item{\bf Synchronizes-With:}
The \textit{synchronizes-with} (\textit{sw}) relation captures the
synchronization that occurs when certain atomic operations interact
across threads.

\item{\bf Happens-Before:}
In the absence of memory operations with the \code{consume} memory
ordering, the \textit{happens-before} ($\hb$) relation is the transitive
closure of the union of the \textit{sequenced-before} and the
\textit{synchronizes-with} relations.

\item{\bf Sequentially Consistent:}
All operations that declare the \path{memory_order_seq_cst} memory order
have a total ordering (\textit{sc}) in the program execution.

\item{\bf Modification Order:}
Each atomic object in a program has an associated \textit{modification
  order} ($\mo$)---a total order of all stores to that
object---which informally represents an ordering in which those stores
may be observed by the rest of the program.  
\end{itemize}

\subsection{Example} 
To explore some of the key concepts of the memory-ordering operations
provided by the C/C++ memory model, consider the example in
Figure~\ref{fig:ex}, assuming that two independent threads execute the
methods \code{threadA()} and \code{threadB()}.  This example uses the
C++ syntax for atomics; shared, concurrently-accessed variables are
given an \code{atomic} type, whose loads and stores are marked with an
explicit \code{memory\_order} governing their inter-thread ordering
and visibility properties.  In the example, the memory operations
are specified to have the \code{relaxed} memory ordering, which is the
weakest ordering in the C/C++ memory model and allows memory
operations to different locations to be reordered.

In this example, a few simple interleavings of \code{threadA()} and
\code{threadB()} show that we may see executions in which $\{\code{r1
  = r2 = 0}\}$, $\{\code{r1 = r2 = 1}\}$, or 
$\{\code{r1 = 0} \wedge \code{r2 = 1}\}$, but it is somewhat
counter-intuitive that we may also see $\{\code{r1 = 1} \wedge \code{r2 = 0}\}$,
in which the first load statement sees the second store
but the second load statement does not see the first store.
  While this latter behavior cannot occur under a
sequentially-consistent execution of this program, it is, in fact,
allowed by the \code{relaxed} memory ordering used in the example (and
achieved by compiler or processor reorderings).

Now, consider a modification of the same example, where the store and
load on variable \code{y} (Line~\ref{line:store-relaxed-example} and
Line~\ref{line:load-relaxed-example}) now use \code{memory\_order\_release} and
\code{memory\_order\_acquire}, respectively, so that when the
load-acquire reads from the store-release, they form a release/acquire
synchronization pair.  Then in any execution where \code{r1 = 1} and
thus the load-acquire statement (Line~\ref{line:load-relaxed-example}) reads from the
store-release statement (Line~\ref{line:store-relaxed-example}), the synchronization
between the store-release and the load-acquire forms an ordering
between \code{threadB()} and \code{threadA()}---particularly, that the
actions in \code{threadB()} after the \code{acquire} must observe the
effects of the actions in \code{threadA()} before the \code{release}.
In the terminology of the C/C++ memory model, we say that all actions
in \code{threadA()} sequenced before the \code{release} happen before
all actions in \code{threadB()} sequenced after the \code{acquire}.

So when \code{r1 = 1}, \code{threadB()} must see \code{r2 = 1}.  In
summary, this modified example allows only three of the four
previously-described behaviors: $\{\code{r1 = r2 = 0}\}$, $\{\code{r1
  = r2 = 1}\}$, and $\{\code{r1 = 0} \wedge \code{r2 = 1}\}$.

\begin{figure}[!ht]
\lstinputlisting[language=C++]{MP-example.cc}
\vspace{-.3cm}
\caption{\label{fig:ex}A Variant of Message Passing in C++}
\vspace{-.3cm}
\end{figure}

\subsection{\Tool's C/C++ Memory Model Fragment}
}

\paper{\section{C/C++ Memory Model Fragment}}
We next describe the fragment of the C/C++ memory model that \tool
supports.  Our memory model has the following changes based on
the formalization of Batty \etal~\cite{c11popl}:

{\bf 1) Use the C/C++20 release sequence definition: } Since the original C/C++11 memory model, the definition of
release sequences has been weakened~\cite{releasesequences}.  This
change is part of the C/C++20 standard~\cite{cpp-draft-n4849}.  \Tool uses the newly weakened definition.  The
new definition of release sequences does not allow
\code{memory\_order\_relaxed} stores by the thread that originally
performed the \code{memory\_order\_release} store that heads the release sequence
to appear in
the release sequence.

{\bf 2) Add $\hb \cup \sco \cup \rf$ is acyclic:} Supporting load buffering or out-of-thin-air executions is extremely
difficult and the existing approaches introduce high overheads in
dynamic tools~\cite{prescientmemory,oopsla2013,toplascdschecker}.  Thus, we
prohibit out-of-thin-air executions with a similar assumption made
by much work on the C/C++ memory model --- we add the constraint that the union of
happens-before, sequential consistency, and reads-from relations, \ie $\hb \cup \sco \cup \rf$,
is acyclic~\cite{vafeiadis2013relaxed}.\footnote{The C/C++11 memory model already requires that $\hb \cup \sco$ is acyclic.}
This feature of the C/C++ memory model is known
to be generally problematic and similar solutions have been proposed to
fix the C/C++ memory model~\cite{mspc14,N3786,N3710,oota}.

{\bf 3) Strengthen consume atomics to acquire:} No compilers support
the consume access mode.  Instead, all compilers strengthen consume
atomics to acquire.

\paper{We formalize the above changes in Section A.1
of our technical report~\cite{c11tester-arxiv}.}
\techreport{We formalize the above changes in Section~\ref{sec:restricted-model}
of the Appendix.}
Our fragment of the C/C++ memory model is larger than
that of tsan11 and tsan11rec~\cite{tsan11,tsan11rec}.
The tsan11 and tsan11rec tools add a very strong
restriction to the C/C++ memory model that requires that 
$\hb \cup \sco \cup \rf \cup \mo$ be acyclic.




\section{\Tool Overview}\label{sec:overview}

We present our algorithm in this section. In our presentation,
we adapt some terminology and symbols from stateless model
checking \cite{dpor}. We denote the initial state with $s_0$. We associate
every state transition $t$ taken by thread $p$ with the dynamic
operation that affected the transition. We use $\textit{enabled}(s)$ to denote the set of all
threads that are enabled in state $s$ (threads can be disabled
when waiting on a mutex, condition variable, or when completed). We
say that $\textit{next}(s, p)$ is the next transition in thread $p$ at 
state $s$.

\begin{figure}[!htb]
{\small
  \begin{algorithmic}[1]
    \Procedure{Explore}{}
      \State $s := s_0$
      \While{$\textit{enabled}(s)$ is not empty}
        \State Select $p$ from $\textit{enabled}(s)$
        \State $t := \textit{next}(s, p)$
        \State $\textit{behaviors}(t) := \{ \text{Initial behaviors} \}$
        \State Select a behavior $b$ from $\textit{behaviors}(t)$
        \State $s := \textit{Execute}(s, t, b)$
      \EndWhile
    \EndProcedure
  \end{algorithmic}
  \caption{\label{fig:testing-algorithm}Pseudocode for \tool's Algorithm}
  \vspace{-.3cm}
}
\end{figure}

Figure~\ref{fig:testing-algorithm} presents pseudocode for \tool's exploration
algorithm.
\Tool calls \textsc{Explore} multiple times---each time generates one program execution.
\paper{The thread schedule does not uniquely define the behavior of C/C++ atomics,
due to the weak behaviors of C/C++ atomics.}
\techreport{Recall from Section~\ref{sec:background} that the thread
schedule does not uniquely define the behavior of C/C++ atomics.}
Therefore, we split the exploration into two components: (1) selecting
the next thread to execute and (2) selecting the behavior of that
thread's next operation.  \Tool has a pluggable framework for testing 
algorithms---\tool generates a set of legal choices for the next thread and behavior, and
then the plugin selects the next thread and behavior.
The default plugin implements a random strategy.

\paragraph{Scheduling}

Thread scheduling decisions are made at each atomic operation, 
threading operation, or synchronization operation (such as locking a mutex).  
Every time a thread
finishes a visible operation, the next thread to execute is randomly
selected from the set of enabled threads.  However, when a thread
performs several consecutive stores with memory order release or
relaxed, the scheduler executes these stores consecutively without
interruption from other threads.  Executing these stores
consecutively does not limit the set of possible executions and
provides \tool with more stores to select from when deciding
which store a load should read from.
\techreport{This decision also reduces bias
in comparison to a purely randomized algorithm.

For example, in Figure ~\ref{fig:randomized}, under a purely randomized algorithm,
the probability that \code{r1 = 1} is much greater than that of \code{r1 = 2}, 
because in order for \code{r1 = 2}, the scheduler must schedule \code{threadA()} twice before
\code{threadB()} is scheduled.
However, under \tool's strategy, once \code{threadA} is scheduled to run,
both stores at line~\ref{line:bias-first-store} and line~\ref{line:bias-second-store}
will be performed consecutively.
So when the load is encountered, the \textit{may-read-from} set
(defined in the paragraphs below)
either only contains the initial store at line~\ref{line:bias-initial-store} or contains all
three stores.  Thus, \code{r1} is equally likely to read 1 or 2.

\begin{figure}[t]
\lstinputlisting[language=C++]{bias-example.cc}
\vspace{-.3cm}
\caption{\label{fig:randomized}Bias of a Purely Randomized Algorithm}
\end{figure}
}

\paragraph{Transition Behaviors}

The source of multiple behaviors for a given schedule arises from the
reads-from relation---in C/C++, loads can read from 
stores besides just the ``last'' store to an atomic object.

We use the concept of a \textit{may-read-from} set, which is
an overapproximation of the stores that a given atomic load may read from that just
considers constraints from the happens-before relation.
The \textit{may-read-from} set for a load $Y$ is constructed as:
\begin{align*}
\textit{may-read-from}(Y) &= \{X \in \textit{stores}(Y) \mid \neg (Y \stackrel{\hb}{\rightarrow} X) \wedge \\
&(\nexists Z \in \textit{stores}(Y) \text{ . } X \stackrel{\hb}{\rightarrow} Z \stackrel{\hb}{\rightarrow} Y) \}\text{,}
\end{align*}
where $\textit{stores}(Y)$ denotes the set of all stores to the same object
from which $Y$ reads. \Tool selects a store from the
\textit{may-read-from} set. \Tool then checks that establishing this
$\rf$ relation does not violate constraints imposed by the modification order, as described
in Section~\ref{sec:modification-order}.  If the given selection is
not allowed, \tool repeats the selection process.  \Tool delays the
modification order check until after a selection is made to optimize
for performance.

\section{Memory Model Support \label{sec:modification-order}}

In this section, we present how \tool efficiently supports key aspects
of the C/C++ memory model.  


CDSChecker~\cite{oopsla2013} initially introduced the technique of using a
constraint-based treatment of modification order to remove redundancy
from the search space it explores.  There are essentially two types of
constraints on the modification order: (1) that a store $s_A$ is 
modification ordered before a store $s_B$ and (2) that a store $s_A$
immediately precedes an RMW $r_B$ in the modification order.

CDSChecker models these constraints using a \emph{modification order
  graph}.  Two types of edges correspond to these two
types of constraints.  Edges only exist between two nodes if 
they both represent memory accesses to the same location.  There
is a cycle in the modification order graph if and only if the graph
corresponds to an unsatisfiable set of constraints.  Otherwise, a
topological sort of the graph (with the additional constraint that an RMW
node immediately follows the store that it reads from) yields a
modification order that is consistent with the observed program
behavior.  CDSChecker used depth first search to check for
cycles in the graph.  CDSChecker would add edges to the modification
order graph to determine whether a given reads-from edge was plausible
--- if the edge made the set of constraints unsatisfiable, CDSChecker
would rollback the changes that the edge made to the graph.

This approach works well for model checking where the graphs are small---the fundamental scalability limits of model checking ensure
that the executions always contain a very small number of stores.
\emph{This approach is infeasible when executions (and
  thus the modification order graphs) can contain millions of atomic
  stores, because the graph traversals become extremely expensive.}

\subsection{Modification Order Graph}

We next describe the modification order graph in more detail.  We
represent modification order ($\mo$) as a set of constraints, built as a constraint
graph, namely the modification order graph ($\Mograph$).  A node in the
$\Mograph$ represents a single store or RMW in the execution.
There are two types of edges in the graph.  An $\mo$ edge from
node $A$ to node $B$ represents the constraint $A
\stackrel{\mo}{\rightarrow} B$.  A \textit{rmw} edge from node $A$ to
node $B$ represents the constraint that $A$ must immediately precede $B$
or formally that: $A \stackrel{\mo}{\rightarrow} B$ and $\forall C.
C \neq A \wedge C\neq B \Rightarrow (A \stackrel{\mo}{\rightarrow} C \Rightarrow B \stackrel{\mo}{\rightarrow} C)
\wedge (C \stackrel{\mo}{\rightarrow} B \Rightarrow C
\stackrel{\mo}{\rightarrow} A)$.

\Tool must only ensure that there exists some $\mo$ that
satisfies the set of constraints, or equivalently an acyclic
$\Mograph$.  \Tool dynamically adds edges to $\Mograph$
when new $\rf$ and $\hb$ relations are formed.  
We briefly summarize the properties of $\mo$ as implications~\cite{oopsla2013} in Figure~\ref{fig:mo_implications}.
\Tool maintains a per-thread list of atomic memory accesses to each memory location.
Whenever a new atomic load or store is executed, \tool uses this list to evaluate the implications in Figure~\ref{fig:mo_implications} as well as additional implications for fences.  

\begin{figure}[!htbp]
        \centering
        \begin{tabular}{MCM}
                \multicolumn{3}{c}{\textsc{Read-Read Coherence}} \\
                \includegraphics[scale=0.4]{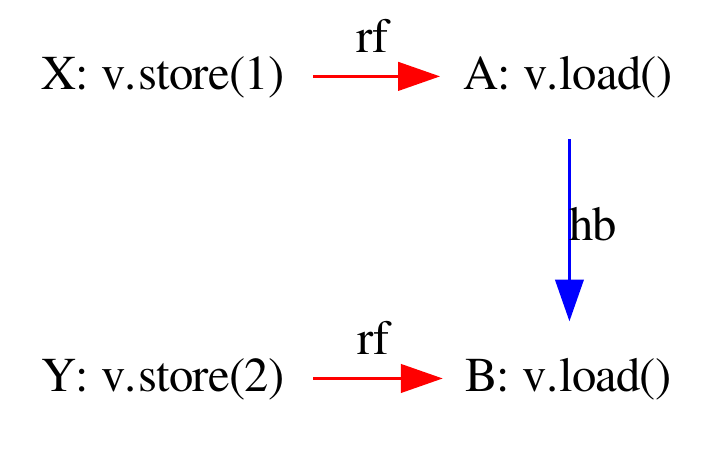} &
                \vcenterarrow &
                \includegraphics[scale=0.4]{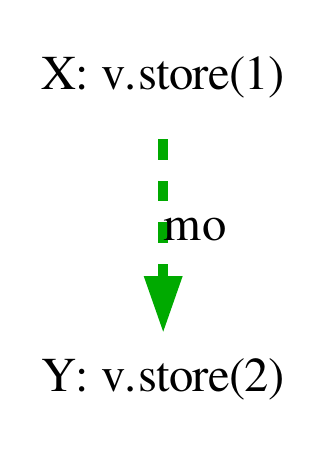} \vspace{3pt} \\
                \multicolumn{3}{c}{\textsc{Write-Read Coherence}} \\
                \includegraphics[scale=0.4]{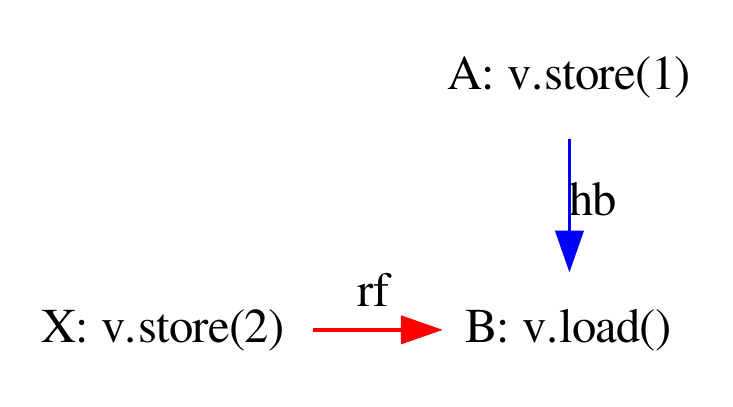} &
                \vcenterarrow &
                \includegraphics[scale=0.4]{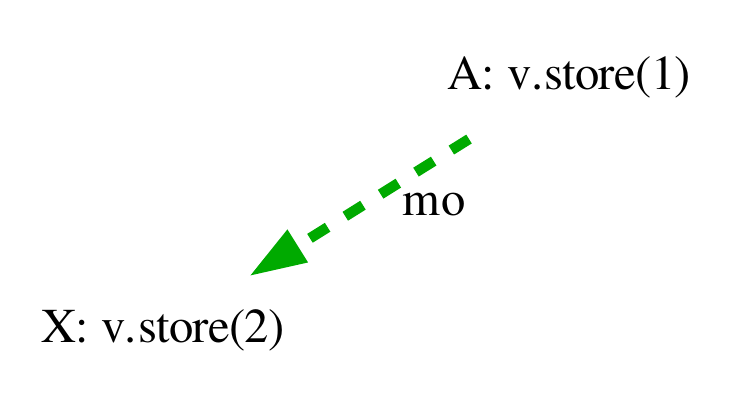} \vspace{3pt} \\
                \multicolumn{3}{c}{\textsc{Read-Write Coherence}} \\
                \includegraphics[scale=0.4]{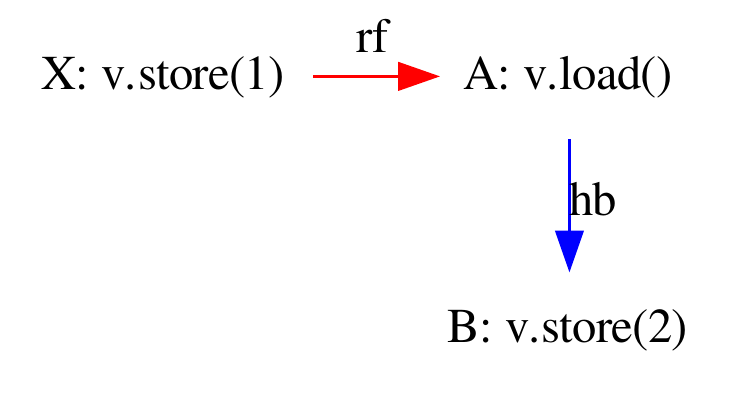} &
                \vcenterarrow &
                \includegraphics[scale=0.4]{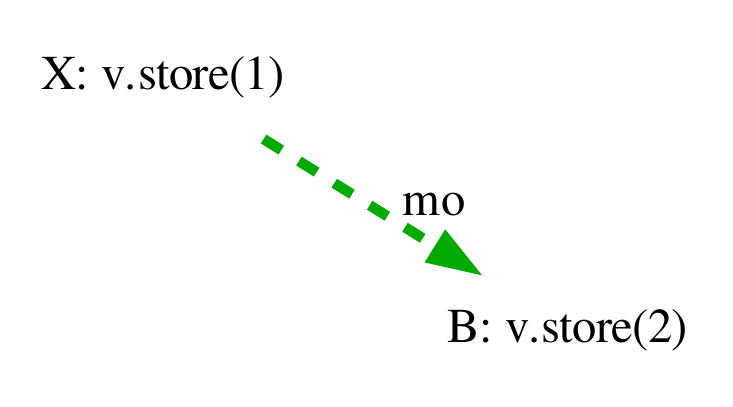} \vspace{3pt} \\
                \multicolumn{3}{c}{\textsc{Write-Write Coherence}} \\
                \includegraphics[scale=0.4]{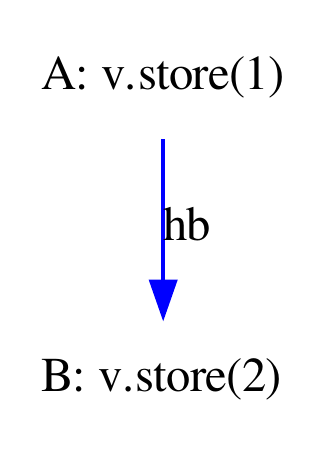} &
                \vcenterarrow &
                \includegraphics[scale=0.4]{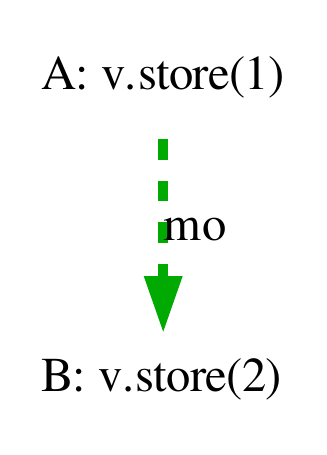} \vspace{3pt} \\
                \multicolumn{3}{c}{\textsc{Seq-cst / MO Consistency}} \\
                \includegraphics[scale=0.4]{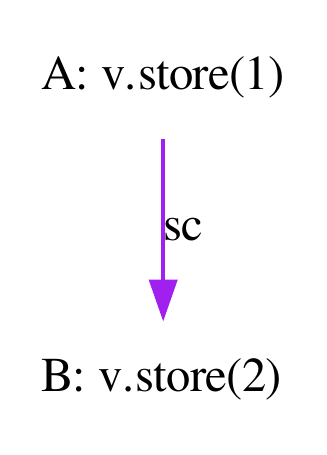} &
                \vcenterarrow &
                \includegraphics[scale=0.4]{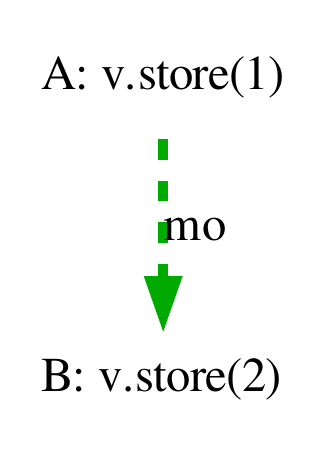} \\
                \multicolumn{3}{c}{\textsc{Seq-cst Write-Read Coherence}} \\
                \includegraphics[scale=0.4]{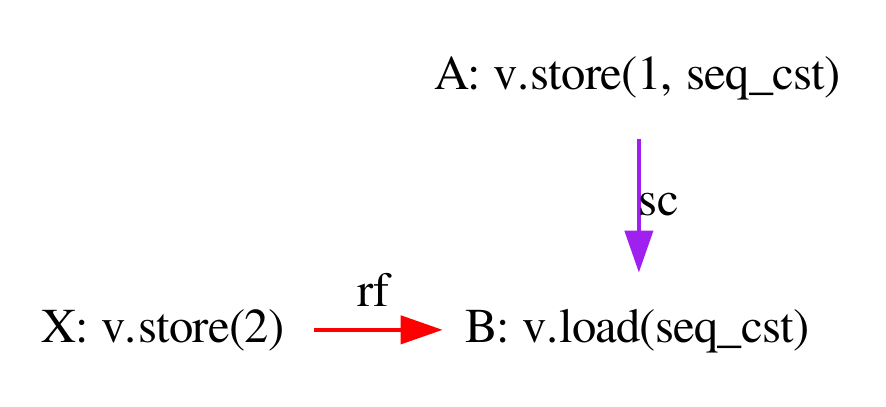} &
                \vcenterarrow &
                \includegraphics[scale=0.4]{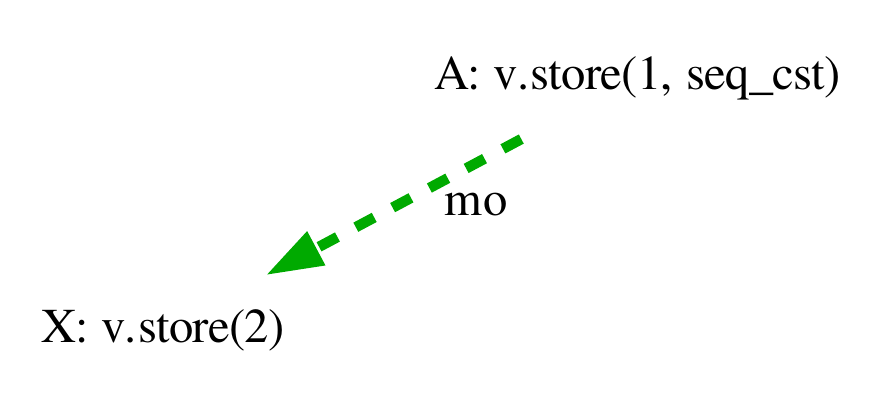} \\
                \multicolumn{3}{c}{\textsc{RMW / MO Consistency}} \\
                \includegraphics[scale=0.4]{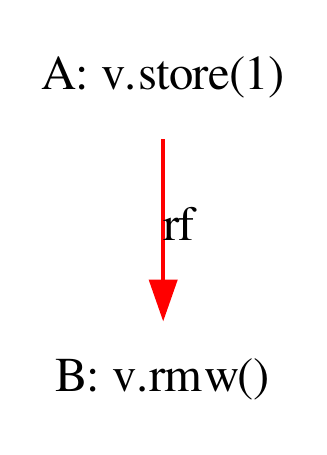} &
                \vcenterarrow &
                \includegraphics[scale=0.4]{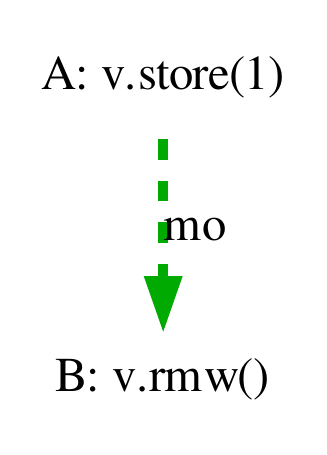} \\
                \multicolumn{3}{c}{\textsc{RMW Atomicity}} \\
                \includegraphics[scale=0.4]{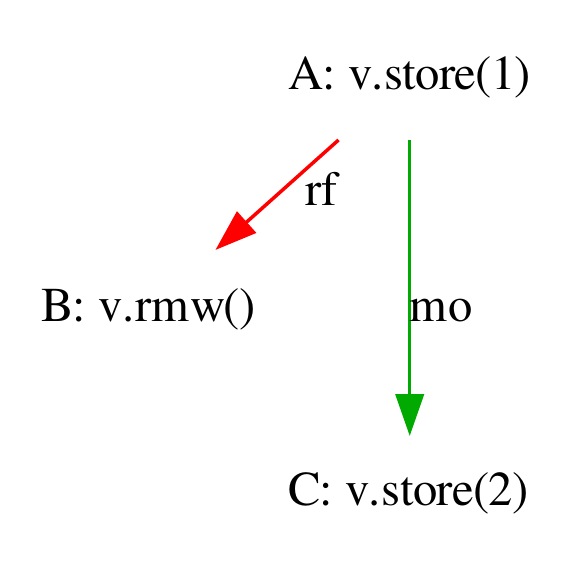} &
                \vcenterarrow &
                \includegraphics[scale=0.4]{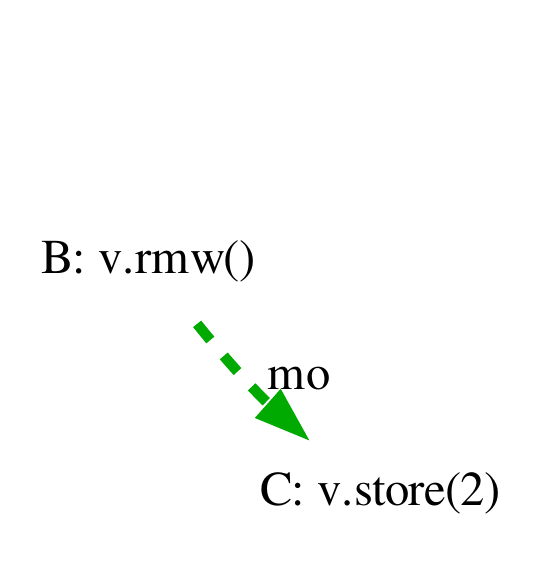} \\
        \end{tabular}
        \caption{\label{fig:mo_implications} Modification order implications. On the left side of each implication, $A$, $B$, $C$, $X$, and $Y$ must be distinct.}
\end{figure}

\subsection{Clock Vectors}\label{sec:mo-clock-vector}

Due to the high cost of graph traversals for large graphs, graph traversals are not a
feasible implementation approach for \tool.
We next describe how we adapt clock vectors~\cite{l-clocks} to efficiently compute
reachability in the $\Mograph$ and scale the constraint-based
modification order approach to large executions.  We associate a clock
vector with each node in the $\Mograph$.  \emph{It is important
  to note that our use of clock vectors in the $\Mograph$ is not to
  track the happens-before relation.  Instead we use clock vectors to
  efficiently compute reachability between nodes in the $\Mograph$.
  Thus, our $\Mograph$ clock vectors model a partial order that contains the
  current set of ordering constraints on the modification order.}

Each event $E$ \footnote{Events in each thread consist of atomic
  operations, thread creation and join, mutex lock and unlock, and
  other synchronization operations.} in \tool has a unique sequence
number $s_E$.  Sequence numbers
are a global counter of events across all threads, which is
incremented by one at each event.  We denote the thread that executed
$E$ as $t_{E}$.
Each node in the $\Mograph$ represents an atomic store.
The initial $\Mograph$ clock vector $\perp_{CV_A}$ associated with the node
representing an atomic store $A$, the union operator $\cup$,
and the comparison operator $\leq$ for $\Mograph$ 
clock vectors are defined as follows:
\begin{align*}
\perp_{CV_A}   &= \lambda t. \text{ if } t == t_A \text{ then } s_A \text{ else } 0,\\
CV_1 \cup CV_2 & \triangleq \lambda t.max(CV_1(t), CV_2(t)), \\
CV_1 \leq CV_2 & \triangleq \forall t. CV_1(t) \leq CV_2(t).
\end{align*}
\emph{Note that two $\Mograph$ clock vectors can only be compared
if their associated nodes represent atomic stores to the same memory location.}

The $\Mograph$ clock vectors are updated when new
$\mo$ relations are formed. 
For example, if $A \stackrel{\mo}{\rightarrow} B$ is a newly formed
$\mo$ relation, then the node $B$'s $\Mograph$ clock vector is merged with 
that of node $A$, \ie $CV_B := CV_A \cup CV_B$.
If $CV_B$ is updated by this merge, the change in $CV_B$ must be
propagated to all nodes reachable from $B$ using the union operator.

Figure~\ref{fig:mopseudo} presents pseudocode for updating the
modification order graph.  The \textsc{Merge} procedure merges the
$\Mograph$ clock vector of the \code{src} node into the \code{dst} node and returns true if the
\code{dst} $\Mograph$ clock vector changed.  The \textsc{AddEdge} procedure adds a new
modification order edge to the graph.  It first compares $\Mograph$ clock vectors to
check if the edge is redundant and if so drops the edge update. Recall
that RMW operations are ordered immediately after the stores that they
read from.  To implement this, \textsc{AddEdge} checks to see if the
\code{from} node has a \code{rmw} edge, and if so, follows the \code{rmw} edge.
\textsc{AddEdge} finally adds the relevant edge, and then propagates any
changes in the $\Mograph$ clock vectors.  The \textsc{AddRMWEdge} procedure has two
parameters, where the \code{rmw} node reads from the \code{from} node. 
It first adds an \code{rmw} edge and then migrates any outgoing edges from the source of the
edge to the \code{rmw} node.  Finally, it calls the \textsc{AddEdge}
procedure to add a normal modification order edge and to propagate
$\Mograph$ clock vector changes.

Figure~\ref{fig:mopseudo2} presents pseudocode for the helper method \textsc{AddEdges}
that adds a set of edges to the $\Mograph$.
The parameter \textit{set} is a set of atomic stores or RMWs,
and $S$ is an atomic store or RMW.
The \code{GetNode} method converts an atomic action to the corresponding node
in the $\Mograph$.  If such node does not exist yet, then the method will
create a new node in the $\Mograph$.

\begin{figure}[!htb]
{\small
  \begin{algorithmic}[1]
    \Procedure{Merge}{Node dst, Node src}
    \If{src.cv $\leq$ dst.cv}
    \State \textbf{return false}
    \EndIf
    \State dst.cv := dst.cv $\cup$ src.cv
    \State \textbf{return true}
    \EndProcedure
  \end{algorithmic}
  \begin{algorithmic}[1]
    \Procedure{AddEdge}{Node from, Node to}
    \State mustAddEdge := (from.rmw == to $\vee$ from.tid == to.tid)
  \If{from.cv $\leq$ to.cv $\wedge \neg$ mustAddEdge}
  \State \textbf{return}
  \EndIf
  \While{from.rmw $\neq$ null}
  \State next := from.rmw
  \If{next == to}
    \State \textbf{break}
    \EndIf
    \State from := next
    \EndWhile
    \State from.edges := from.edges $\cup$ to
    \If{\Call{Merge}{to, from}}	\label{addedge:merge}
    \State Q := \{ to \}
    \While{Q is not empty} \label{addedge:while-start}
    \State{node := remove item from Q}
    \For{each dst in node.edges}
    \If{\Call{Merge}{dst, node}}
    \State Q := Q $\cup$ dst
    \EndIf
    \EndFor
    \EndWhile \label{addedge:while-end}
    \EndIf \label{addedge:merge-end}
    \EndProcedure
\end{algorithmic}
  \begin{algorithmic}[1]
    \Procedure{AddRMWEdge}{Node from, Node rmw}
    \State from.rmw := rmw
    \For{each dst in from.edges} \label{addrmwedge:migrate-start}
    \If{dst $\neq$ rmw}
    \State rmw.edges := rmw.edges $\cup$ dst
    \EndIf
    \EndFor \label{addrmwedge:migrate-end}
    \State{from.edges := $\emptyset$}
    \State{\Call{AddEdge}{from, rmw}}
    \EndProcedure
  \end{algorithmic}
  \caption{Pseudocode for Updating $\Mograph$\label{fig:mopseudo}}

  \begin{algorithmic}[1]
    \Procedure{AddEdges}{\textit{set}, $S$}
    \State $n_S := \code{GetNode}(S)$
    \For{each $e$ in \textit{set}}
      \State $n_e := \code{GetNode}(e)$
      \State \Call{AddEdge}{$n_e$, $n_S$}
    \EndFor
    \EndProcedure
  \end{algorithmic}
  \caption{Helper method for adding a set of edges to the $\Mograph$\label{fig:mopseudo2}}
}
\end{figure}

Theorem~\ref{thm:main} guarantees the soundness of our use of $\Mograph$ clock vectors.
We present the theorem and its proof in Section~\ref{sec:mographtheorem}.
This theorem states that we can solely rely on $\Mograph$ clock vectors
to compute reachability between nodes in $\Mograph$.

\subsection{Eliminating Rollback in \textit{Mo-graph}}

Prior work on constraint-based modification order utilized rollback
when it was determined that a given reads-from relation was not
feasible \cite{oopsla2013,toplascdschecker}.
\Tool may also hit such infeasible executions because the \textit{may-read-from} set
defined in Section~\ref{sec:overview} is an overapproximation
of the set of stores that a load can read from.
To determine precisely whether a load can read from a store, a naive approach is
to add edges to the $\Mograph$ and then utilize rollback if adding these edges
introduces cycles in the $\Mograph$.
However, the addition of clock
vectors and clock vector propagation makes rollback much more expensive.  It is
thus critical that \tool avoids the need for rollback.  We now discuss how
\tool avoids rollback.

The $\Mograph$ is updated whenever a new atomic store, atomic
load, or atomic RMW is encountered.  Processing a new atomic store, atomic
load, or atomic RMW can potentially add multiple edges to the
$\Mograph$.  We next analyze each case to understand how to
avoid rollback:
\begin{itemize}
\item {\bf Atomic Store:}
Since an atomic load can only read from past stores, a newly created
store node in $\Mograph$ has no outgoing edges. By the properties of
$\mo$, only incoming edges from other nodes to this new node will be
created. Hence, a new store node cannot introduce any cycles.

\item {\bf Atomic Load:}
Consider a new atomic load $Y$ that reads from a store $X_0$.
Forming a new $\rf$ relation may only cause edges to be created
from other nodes to the node representing the store $X_0$.  We denote
this set of "other nodes" as $\textit{ReadPriorSet}(X_0)$ and compute
it using the \textsc{ReadPriorSet} procedure in Figure~\ref{alg:priorset}.
Lines~\ref{line:rpriorset-s1},~\ref{line:rpriorset-s2}, and~\ref{line:rpriorset-s3}
in the \textsc{ReadPriorSet} procedure consider statements 5, 4, and 6 in
Section 29.3 of the C++11 standard.
Line~\ref{line:rpriorset-s4} in the procedure considers write-read and
read-read coherences.  Therefore, the set returned by the
\textsc{ReadPriorSet} procedure captures the
set of stores from where new $\mo$ relations are to be formed
if the $\rf$ relation is established.

Before forming the $\rf$ relation, \tool checks
whether any node in $\textit{ReadPriorSet}(X_0)$ is reachable from $X_0$. If
so, then having load $Y$ read from store $X_0$ will introduce a cycle
in the $\Mograph$, so we discard $X_0$ and try another store.  While it
is possible for a cycle to contain two or more edges in the set of newly
created edges, this also implies that there is a cycle with one edge
(since all edges have the same destination).

\item {\bf Atomic RMWs:}
An atomic RMW is similar to both a load and store, but with the
constraint that it must be immediately modification ordered after the
store it reads from.  We implement this by moving modification order
edges from the store it reads from to the RMW.  Thus, the same checks
used by the load suffice to check for cycles for atomic RMWs.
\end{itemize}
Thus, \tool first computes a set of edges that reading from a given
store would add to the $\Mograph$.  Then for each edge, it checks the
$\Mograph$ clock vectors to see if the destination of the edge can
reach the source of the edge.  If none of the edges would create a
cycle, it adds all of the edges to the $\Mograph$ using the
\textsc{AddEdge} and \textsc{AddRMWEdge} procedures.

\numberwithin{equation}{section}
\setcounter{theorem}{0}

\section{Correctness of \textit{Mo-graph}\label{sec:mographtheorem}}

To prove the correctness of \textit{mo-graphs}, we first prove three Lemmas and then prove Theorem~\ref{thm:main}.
Lemma~\ref{thm:lemma1} and Lemma~\ref{thm:lemma2} characterize some important
properties of \textit{mo-graph} clock vectors.  Lemma~\ref{thm:lemma-backward} proves
one direction in Theorem~\ref{thm:main}.  \textit{Mo-graph} clock vectors are
simply referred to as clock vectors in the following context. 

\begin{lemma} \label{thm:lemma1}
Let $C_0 \stackrel{\mo}{\rightarrow} C_1 \stackrel{\mo}{\rightarrow} ... \stackrel{\mo}{\rightarrow} C_n$
be a path in a modification order graph $G$, such that
$CV_{C_0} \leq ... \leq CV_{C_n}$.
Then if any new edge $E$ is added to $G$ using procedures in Figure~\ref{fig:mopseudo},
it holds that 
\begin{align}
  CV_{C_0}' \leq ... \leq CV_{C_n}' \label{eq:lemma1-main}
\end{align}
for the updated clock vectors.
We define $CV_{C_i}' := CV_{C_i}$ if the values of $CV_{C_i}$ are not actually updated.
\end{lemma}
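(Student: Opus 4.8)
The plan is to prove, for each fixed $i \in \{0,\dots,n-1\}$, the single inequality $CV_{C_i}' \leq CV_{C_{i+1}}'$; the chain in the statement is exactly the conjunction of these. Two standing observations will be used throughout. First, \emph{monotonicity}: every procedure in Figure~\ref{fig:mopseudo} alters a clock vector only through \textsc{Merge}, which replaces \code{dst.cv} by \code{dst.cv} $\cup$ \code{src.cv}; hence $CV_v \leq CV_v'$ for every node $v$. Second, I rely on the invariant, maintained by the cycle checks described in Section~\ref{sec:modification-order}, that the \Mograph is acyclic before and after $E$ is inserted. Acyclicity makes the worklist loop in \textsc{AddEdge} terminate and, more importantly, guarantees that the source of a newly added edge is never reachable, in the new graph, from its destination. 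I may assume $n \geq 1$, since the claim is vacuous when $n = 0$; then no $C_i$ is the freshly created RMW node (when $E$ is added by \textsc{AddRMWEdge}), because that node has no incident edges in $G$.

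I would then split on whether $CV_{C_i}$ is modified while $E$ is being inserted. If it is \emph{not}, the bound is immediate: $CV_{C_i}' = CV_{C_i} \leq CV_{C_{i+1}} \leq CV_{C_{i+1}}'$, where the middle step is the hypothesis of the lemma and the last is monotonicity. This case also absorbs the situation where $C_i$ is the store \code{from} that an \textsc{AddRMWEdge} call reads from, since \code{from} only gains an outgoing edge to the new RMW node and so its clock vector does not change. Suppose instead that $CV_{C_i}$ \emph{is} modified. Then $C_i$ is reachable, along \Mograph edges of the new graph, from the node at which propagation began --- the argument \code{to} of the \textsc{AddEdge} call that inserts $E$, which in the \textsc{AddRMWEdge} case is the new RMW node --- because every node placed on the worklist is reachable from that start node, and $CV_{C_i}$ changes only through a \textsc{Merge} issued while processing such a node, or through the initial \textsc{Merge} into \code{to} itself. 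By acyclicity, the source endpoint of the inserted edge is not reachable from the start node, so $C_i$ is not that source; and $C_i$ is not the new RMW node either. Since these are the only nodes whose outgoing adjacency list \textsc{AddEdge} or the migration step of \textsc{AddRMWEdge} can alter, the adjacency list of $C_i$ is identical in $G$ and in the updated graph, and the path edge $C_i \stackrel{\mo}{\rightarrow} C_{i+1}$, which is present in $G$, persists, so $C_{i+1}$ still lies in $C_i$'s outgoing adjacency list.

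With the path edge preserved, I finish the modified case by examining the \emph{last} \textsc{Merge} that changes $CV_{C_i}$: it returns \textbf{true}, so $C_i$ is placed on the worklist, and since the loop drains the worklist, $C_i$ is eventually dequeued, at which moment its clock vector already holds its final value. Processing $C_i$ invokes \textsc{Merge}$(C_{i+1}, C_i)$, which forces $CV_{C_{i+1}} \geq CV_{C_i}$ with $CV_{C_i}$ final, and since $CV_{C_{i+1}}$ can only grow afterwards, $CV_{C_{i+1}}' \geq CV_{C_i}'$, as required. When $E$ is added by \textsc{AddRMWEdge}, the same bookkeeping is carried out inside the internal \textsc{AddEdge}(\code{from}, \code{rmw}) call, after checking that the migration step both re-creates, through the new RMW node, any path edge it removes and leaves untouched every adjacency list relevant to the argument above.

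I expect the crux to be the middle step of the second paragraph: using acyclicity to show that a node whose clock vector actually moves cannot be the destination-side endpoint whose adjacency list gets rewritten, so that the particular path edge $C_i \stackrel{\mo}{\rightarrow} C_{i+1}$ survives. The \textsc{AddRMWEdge} case is where this is most delicate, because it both rewires edges --- via the migration loop and the \code{rmw} pointer --- and then recurses into \textsc{AddEdge}; one must verify that either a migrated path edge is faithfully represented by the two-edge detour through the new RMW node, or else the clock vectors at its endpoints do not move, so that the lemma's hypothesis already yields the inequality. The remaining worklist bookkeeping --- that an enqueued node is processed at least once after its final update --- is routine, but should be stated carefully, since a node can be enqueued several times.
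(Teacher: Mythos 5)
Your proof is correct, but it takes a genuinely different route from the paper's. The paper argues globally along the path: it picks the smallest index $i$ whose clock vector is updated, splits on whether the propagation ``stops'' at some later $C_{j_0}$ with $CV_i' \leq CV_{j_0}$, and pushes the recursion $CV_k' = CV_k \cup CV_{k-1}'$ down the path, then patches the argument for the two orientations of an \code{rmw} edge (including the possibility that the RMW node is $C_{i+1}$ itself). You instead prove each adjacent inequality $CV_{C_i}' \leq CV_{C_{i+1}}'$ separately, using only two generic facts: monotonicity of \textsc{Merge}, and the worklist invariant that any node whose clock vector changes is enqueued at its last change, hence later dequeued holding its final value and merged into all of its successors --- provided the path edge $C_i \stackrel{\mo}{\rightarrow} C_{i+1}$ still sits in $C_i$'s adjacency list, which you secure by showing (via acyclicity of the updated graph) that the only node whose adjacency list is pruned, the \code{from} node of \textsc{AddRMWEdge}, can never have its clock vector modified. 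What your route buys is robustness: you do not need the paper's claims that the propagation visits path nodes in order, that $CV_k'$ equals exactly $CV_k \cup CV_{k-1}'$ (other predecessors may also merge into $C_k$, so only $\geq$ is literally justified), or that the nodes $C_{j_0},\dots,C_n$ are never enqueued; your ``last change, then drain'' argument sidesteps all of that. What the paper's version buys is slightly broader applicability in the \code{rmw} case: it does not assume the RMW node is fresh and explicitly treats $X = C_{i+1}$, whereas you rule the RMW node off the path by appealing to how the tool actually invokes \textsc{AddRMWEdge} (the node has no incident edges in $G$); that assumption is faithful to the operational semantics, but it is worth stating it as an explicit hypothesis since the lemma is phrased for arbitrary uses of the procedures. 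One small wording nit: in your first case you say the \code{from} node's ``clock vector does not change'' because it only gains an outgoing edge --- the real reason is the acyclicity argument you give later (it is not reachable from the propagation start), so that sentence should point there rather than to the edge direction.
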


\begin{proof}
To simplify notation, we define $CV_i := CV_{C_i}$ for all $i \in \{0...,n\}$.
Let's first consider the case where no \code{rmw} edge is added, \ie the \textsc{AddRMWEdge}
procedure is not called.

By the definition of the union operator, each slot in clock vectors is monotonically increasing
when the \textsc{Merge} procedure is called. By the structure of procedure \textsc{AddEdge}'s
algorithm, a node $X$ is added to $Q$ if and only if this node's clock vector
is updated by the \textsc{Merge} procedure.

Let's assume that adding the new edge $E$ updates any of $CV_0, ..., CV_n$.
Otherwise, it is trivial.  Let $i$ be the smallest integer in $\{0, ..., n\}$
such that $CV_i$ is updated.  Then $CV_k' = CV_k$ for
all $k \in I := \{0, ..., i-1\}$, and we have
\begin{align}
  CV_0' \leq ... \leq CV_i'. \label{eq:lemma1-1}
\end{align}
If $i = 0$, then we take $I = \varnothing$. There are two cases.

\textbf{Case 1}: Suppose $CV_i' \leq CV_j$ for some $j \in \{i+1, ..., n\}$,
let $j_0$ be the smallest such integer.
Then $ CV_k' = CV_k$ for all $k \in \{j_0, ..., n\}$, as nodes $\{C_{j_0},...,C_n\}$ 
will not be added to $Q$ in the \textsc{AddEdge} procedure, and it holds trivially that
\begin{align}
  CV_{j_0}' \leq ... \leq CV_n'. \label{eq:lemma1-2}
\end{align}
By line~\ref{addedge:merge} to line~\ref{addedge:merge-end}
in the \textsc{AddEdge} procedure, we have
\begin{align}
  CV_k' = CV_k \cup CV_{k-1}', \label{eq:lemma1-3}
\end{align}
for all $k \in S := \{i+1, ..., j_0-1\}$.  If $j_0$ happens to be $i+1$,
then take $S = \varnothing$.
And we have for all $k \in S$, $CV_{k-1}' \leq CV_k'$. 
Then combining with inequality (\ref{eq:lemma1-1}), we have
\[ CV_0' \leq ... \leq CV_i \leq ... \leq CV_{j_0-1}'. \]
Together with inequality (\ref{eq:lemma1-2}), we only need to
show that $CV_{j_0-1}' \leq CV_{j_0}'$ to complete the proof.

If $j_0 = i+1$, then we are done, because by assumption
$CV_i' \leq CV_{j_0} = CV_{j_0}'$.
If $j_0 > i+1$, then $CV_i' \leq CV_{j_0}$ and $CV_{i+1} \leq CV_{j_0}$ imply
that $CV_{i+1}' = CV_{i+1} \cup CV_i' \leq CV_{j_0} = CV_{j_0}'$.
Based on equation (\ref{eq:lemma1-3}), we can deduce in a similar way that
$CV_{i+2}' \leq ... \leq CV_{j_0-1}' \leq CV_{j_0}'$.

\textbf{Case 2}:
Suppose $CV_i \nleq CV_j$ for all $j \in \{i+1, ..., n\}$.
Then by line~\ref{addedge:merge} to line~\ref{addedge:merge-end}
in the \textsc{AddEdge} procedure, all nodes $\{C_i, ..., C_n\}$ are added to
$Q$ in the \textsc{AddEdge} procedure, and
$ CV_k' = CV_k \cup CV_{k-1}'$
for all $k \in S := \{i+1, ..., n\}$.  This recursive formula guarantees that
for all $k \in S$, $CV_{k-1}' \leq CV_k'$.
Therefore, combining with inequality (\ref{eq:lemma1-1}), we have
$ CV_0' \leq ... \leq CV_n'$.

Now suppose the newly added edge $E$ is a \code{rmw} edge. 
If $E: X \xrightarrow{\textit{rmw}} C_i$ where $i \in \{0,...,n\}$
and $X$ is some node not in path $P$,
then the path $P$ remains unchanged and \Call{AddEdge}{$X$,$C_i$} is called.
Then the above proof shows that inequality (\ref{eq:lemma1-main}) holds. 
If $E:C_i \xrightarrow{\textit{rmw}} X$, then $C_i \stackrel{\mo}{\rightarrow} C_{i+1}$
is migrated to $X \stackrel{\mo}{\rightarrow} C_{i+1}$ by
line~\ref{addrmwedge:migrate-start} to line~\ref{addrmwedge:migrate-end}
in the \textsc{AddRMWEdge} procedure, and $C_i \stackrel{\mo}{\rightarrow} X$ is added.

If $X$ is not in path $P$, then path $P$ becomes
\[C_0 \stackrel{\mo}{\rightarrow} ... \stackrel{\mo}{\rightarrow} C_i \stackrel{\mo}{\rightarrow} X \stackrel{\mo}{\rightarrow} C_{i+1} \stackrel{\mo}{\rightarrow} ... \stackrel{\mo}{\rightarrow} C_n. \]
Since \Call{AddEdge}{$C_i$,$X$} is called, the same proof in the case without \code{rmw} edges
applies.  If $X$ is in path $P$, then $X$ can only be $C_{i+1}$
and the path $P$ remains unchanged.
Otherwise, a cycle is created and this execution is invalid. 
In any case, the same proof applies. 
\end{proof}

Let $\vec{x} = (x_1,x_2,...,x_n)$.
We define the projection function $U_i$ that extracts the $i^{\textit{th}}$
position of $\vec{x}$ as
$U_i(\vec{x}) = x_i,$ where we assume $i \leq n$.

\begin{lemma} \label{thm:lemma2}
Let $A$ be a store with sequence number $s_A$ performed
by thread $i$ in an acyclic modification order graph $G$.
Then $U_i(CV_A) = U_i(\perp_{CV_A}) = s_A$ throughout each execution that terminates. 
\end{lemma}

\begin{proof}
We will prove by contradiction.
Let $S = \{A_1, A_2, ...\}$ be the sequence of stores performed by thread $i$
with sequence numbers $\{s_1, s_2, ...\}$, respectively.
Suppose that there is a point of time in a terminating execution such that the first store $A_n$ 
in the sequence with $U_i(CV_{A_n}) > s_n$ appears.
Sequence numbers are strictly increasing and
by the \textsc{Merge} procedure, $U_i(CV_{A_n}) \in \{s_{n+1},s_{n+2},...,\}$.
Let $U_i(CV_{A_n}) = s_N$ for some $N > n$.

For $U_i(CV_{A_n})$ to increase to $s_N$ from $s_n$,
$CV_{A_n}$ must be merged with the clock vector of some node $X$ (\ie some store $X$)
in $G$ such that $U_i(CV_X) = s_N$.  Such $X$ is modification ordered before $A_n$.

If $X$ is performed by thread $i$, then $X$ has to be the store
$A_N$, because $U_i(CV_{A_j})$ is unique for all stores $A_j$
in the sequence $S$ other than $A_n$.
Then $\perp_{CV_X} \geq \perp_{CV_{A_n}}$.  By the definition of initial values
of clock vectors and sequence numbers, $X$ happens after
and is modification ordered after $A_n$.
However, $X$ is also modification ordered before $A_n$, and we have a cycle in $G$.
This is a contradiction.

If $X$ is not performed by thread $i$, then $U_i(\perp_{CV_X}) = 0$.
For $U_i(CV_X)$ to be $s_N$,
$X$ must be modification ordered after by some store
$Y$ in $G$ such that $U_i(CV_Y) = s_N$.
If $Y$ is done by thread $i$, then the same argument in the last paragraph
leads to a contradiction; otherwise, by repeating the same argument as in
this paragraph finitely many times (there are only a finite number
of stores in such a terminating execution), we would eventually deduce that
$X$ is modification ordered after some store by thread $i$. Hence, we would have
a cycle in $G$, a contradiction. 

\end{proof}

\begin{lemma} \label{thm:lemma-backward}
Let $A$ and $B$ be two nodes that write to the same location in
an acyclic modification order graph $G$.  If $B$ is reachable from $A$ in $G$,
then $CV_A \leq CV_B$.
\end{lemma}

\begin{proof}
Suppose that $B$ is reachable from $A$ in $G$.
Let $A \stackrel{\mo}{\rightarrow} C_1 \stackrel{\mo}{\rightarrow} ... \stackrel{\mo}{\rightarrow} C_{n-1} \stackrel{\mo}{\rightarrow} B$ be the shortest path $P$ from $A$ to $B$ in graph $G$.
To simplify notation, $X \stackrel{\mo}{\rightarrow} Y$ is abbreviated as
$X \rightarrow Y$ in the following.
As the \textsc{AddRMWEdge} procedure calls the \textsc{AddEdge} procedure
to create an \textit{mo} edge, we can assume that all the \textit{mo} edges
in $P$ are created by directly calling \textsc{AddEdge}. 

\textbf{Base Case 1}: Suppose the path $P$ has length 1, \ie $A$ immediately precedes $B$.
Then when the edge $A \rightarrow B$ was formed by calling \Call{AddEdge}{$A$,$B$},
$CV_B$ was merged with $CV_A$ in line~\ref{addedge:merge}
of the \textsc{AddEdge} procedure.  In other words,
$CV_B = CV_B \cup CV_A \geq CV_A.$

\textbf{Base Case 2}: Suppose the path $P$ has length 2, \ie $A \rightarrow C_1 \rightarrow B$. There are two cases:

(a) If $A \rightarrow C_1$ was formed first, then $CV_A \leq CV_{C_1}$.
When $C_1 \rightarrow B$ was formed, $CV_B$ was merged with $CV_{C_1}$ and $CV_{C_1} \leq CV_B$.
According to Lemma~\ref{thm:lemma1}, adding the edge $C_1 \rightarrow B$ or any edge
not in path $P$ (if any such edges were formed before $C_1 \rightarrow B$ was formed)
to $G$ would not break the inequality $CV_A \leq CV_{C_1}$.
It follows that $CV_A \leq CV_{C_1} \leq CV_B$.

(b) If $C_1 \rightarrow B$ was formed first, then 
$CV_{C_1} \leq CV_B$.  Based on Lemma~\ref{thm:lemma1}, this inequality remains true
when $A \rightarrow C_1$ was formed.  Therefore $CV_A \leq CV_{C_1} \leq CV_B$.

\textbf{Inductive Step}: 
Suppose that $B$ being reachable from $A$ implies that $CV_A \leq CV_B$ for
all paths with length $k$ or less, for some $k > 2$.
We want to prove that the same holds for paths with length $k + 1$.
Let $P$ be a path from $A$ to $B$  with length $k+1$,
\[ P: A = C_0 \rightarrow C_1 \rightarrow ... \rightarrow C_{k} \rightarrow C_{k+1} = B. \]
We denote $A$ as $C_0$ and $B$ as $C_{k+1}$ in the following.

Let $E: C_i \rightarrow C_{i+1}$ be the last edge formed in path $P$, where $i \in \{0,...,k\}$.
Then before edge $E$ was formed, the inductive hypothesis
implies that $CV_{C_0} \leq ... \leq CV_{C_i}$
and $CV_{C_{i+1}} \leq ... \leq CV_{C_{k+1}}$,
because both $C_0 \rightarrow ... \rightarrow C_i$ and
$C_{i+1} \rightarrow ... \rightarrow C_{k+1}$ have length $k$ or less.
Lemma~\ref{thm:lemma1} guarantees that 
\begin{align*}
  CV_{C_0} &\leq ... \leq CV_{C_i}, \\
  CV_{C_{i+1}} &\leq ... \leq CV_{C_{k+1}}
\end{align*}
remain true if any edge not in path $P$ was added to $G$
as well as the moment when $E$ was formed.
Therefore when the edge $E$ was formed,
we have $CV_{C_i} \leq CV_{C_{i+1}}$, and
\[ CV_A = CV_{C_0} \leq ... \leq CV_{C_{k+1}} = CV_B. \]
\end{proof}

\begin{theorem}  \label{thm:main}
Let $A$ and $B$ be two nodes that write to the same location in
an acyclic modification order graph $G$ for a terminating execution.
Then $CV_A \leq CV_B$ iff $B$ is reachable from $A$ in $G$.
\end{theorem}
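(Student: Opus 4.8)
The biconditional splits into two directions, and one of them is already in hand: Lemma~\ref{thm:lemma-backward} is precisely the statement that if $B$ is reachable from $A$ in $G$ then $CV_A \leq CV_B$. So the task is the converse: assuming $CV_A \leq CV_B$, exhibit a path from $A$ to $B$ in $G$.

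The plan is to read off such a path from a single coordinate of the clock vectors. Let $i = t_A$. By Lemma~\ref{thm:lemma2}, $U_i(CV_A) = s_A$, and since $CV_A \leq CV_B$ we get $U_i(CV_B) \geq s_A > 0$; write $s_N := U_i(CV_B)$. (If $A = B$ there is nothing to prove, so assume $A \neq B$.) The heart of the argument is the sub-claim: \emph{for every node $V$ of $G$, if $U_i(CV_V) = s_N$ then the event with sequence number $s_N$ is a store (or RMW) $A_N$ performed by thread $i$ to the same location as $V$, and there is a path from $A_N$ to $V$ in $G$.} Granting this for $V = B$, we obtain $A_N$, a store by thread $i$ to the location of $B$ (hence of $A$), with $s_{A_N} = s_N \geq s_A$, together with a path $A_N \rightarrow^{*} B$. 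Since $A$ and $A_N$ are stores by the same thread $i$ to the same location with $s_A \leq s_N$, either $A = A_N$, or $A$ is sequenced-before $A_N$ and write-write coherence (Figure~\ref{fig:mo_implications}) forces $A \stackrel{\mo}{\rightarrow} A_N$; since \textsc{AddEdge} never drops an edge whose endpoints lie on the same thread (its \code{mustAddEdge} test), $G$ contains a path from $A$ to $A_N$. Concatenating, $B$ is reachable from $A$.

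To prove the sub-claim I would induct over the sequence of edge insertions performed by the procedures of Figure~\ref{fig:mopseudo}. A coordinate $U_i(CV_V)$ changes only via a call \textsc{Merge}$(V,W)$ inside \textsc{AddEdge}, such a call occurs only while an edge $W \to V$ is present (the edge just added, or one traversed by the propagation worklist), and it can only lift $U_i(CV_V)$ to $U_i(CV_W)$. Hence, if currently $U_i(CV_V) = s_N$, then either it was already $s_N$ in $\perp_{CV_V}$ --- which by the definition of $\perp_{CV}$ forces $t_V = i$ and $s_V = s_N$, so $V$ is itself the event of sequence number $s_N$ and the path is empty --- or it was raised by some \textsc{Merge}$(V,W)$ with $U_i(CV_W) = s_N$ and an edge $W \to V$ present; the induction hypothesis then supplies a path from $A_N$ to $W$ with $A_N$ writing to the location of $W$, and because $\Mograph$ edges only join nodes of the same location, $V$ and $A_N$ share that location and $A_N \rightarrow^{*} W \to V$ is the required path.

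The one subtlety --- and the step I expect to be the main obstacle --- is the bookkeeping through \textsc{AddRMWEdge}, which, before calling \textsc{AddEdge}, migrates the outgoing edges of a node onto the freshly inserted RMW node. One has to verify, in the spirit of the RMW cases in the proof of Lemma~\ref{thm:lemma1}, that this rewiring never destroys reachability: an outgoing edge moved from the old node to the RMW node is compensated by the new \textit{rmw} edge into that RMW node, so every path present before the migration still has a counterpart afterwards, and both the inductive invariant of the sub-claim and the clock-vector monotonicity it relies on are preserved. Everything else reduces to Lemmas~\ref{thm:lemma2} and~\ref{thm:lemma-backward} together with write-write coherence.
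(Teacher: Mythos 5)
Your proof is correct in substance, but it reorganizes the argument rather than reproducing the paper's. The paper splits the forward direction into two cases: for $A$ and $B$ in the same thread it argues by contradiction, combining Lemma~\ref{thm:lemma2} with Lemma~\ref{thm:lemma-backward} to rule out $B$ preceding $A$; for different threads it informally ``repeats the argument of Lemma~\ref{thm:lemma2}'' to conclude that $B$ is modification ordered after $A$ or after some store sequenced after $A$, and then appeals to consistency of \mo{} with sequenced-before. You instead isolate the common mechanism as an explicit provenance sub-claim --- every occurrence of $s_N$ in the $i$-th coordinate of a clock vector is traceable, through the \textsc{Merge} calls of Figure~\ref{fig:mopseudo} (each of which happens only along a present edge), back to the thread-$i$ store $A_N$ with $\perp_{CV_{A_N}}(i)=s_N$, yielding an actual path $A_N \rightarrow^{*} B$ --- and then connect $A$ to $A_N$ by same-thread write-write coherence. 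This buys a uniform treatment (no case split, no contradiction step, Lemma~\ref{thm:lemma-backward} is not needed in the forward direction) and makes rigorous the step the paper only gestures at, including the observation, which the paper handles in its closing paragraph, that \textsc{AddRMWEdge}'s edge migration preserves reachability. One caveat: your justification that $A$ reaches $A_N$ via the \code{mustAddEdge} test is slightly misplaced --- that test only guarantees a same-thread edge is not dropped \emph{if} \textsc{AddEdge} is invoked on that pair, whereas the procedures of Figure~\ref{alg:priorset} only ever add an edge from the \emph{last} same-thread access (or the store it read from), so reachability between non-adjacent same-thread stores really follows from the per-thread chain of edges created by the $S_4$ terms of \textsc{WritePriorSet}/\textsc{ReadPriorSet} together with Figure~\ref{fig:mo_implications}. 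The paper leans on the same unstated invariant (``$A$ precedes $B$ in the modification order because they are in the same thread, hence $B$ is reachable from $A$ in $G$''), so this is not a gap relative to the paper's own level of rigor, but the appeal to \code{mustAddEdge} is not the real reason the invariant holds.
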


\begin{proof}
Lemma~\ref{thm:lemma-backward} proves the backward direction, so
we only need to prove the forward direction. 
Suppose that $CV_A \leq CV_B$. Let's first consider the situation
where the graph $G$ contain no \code{rmw} edges.

\textbf{Case 1}:
$A$ and $B$ are two stores performed by the same thread with thread id $i$.
Then it is either $A$ happens before $B$ or $B$ happens before $A$. 
If $A$ happens before $B$, then $A$ precedes $B$ in the modification order
because $A$ and $B$ are performed by the same thread.
Hence $B$ is reachable from $A$ in $G$.  We want to show that the other
case is impossible.

If $B$ happens before $A$ and hence precedes $A$ in the modification order,
then $A$ is reachable from $B$. 
By Lemma~\ref{thm:lemma-backward}, $A$ being reachable from $B$ implies that
$CV_B \leq CV_A$.  Since $CV_A \leq CV_B$ by assumption, we deduce that
$CV_A = CV_B$.  This is impossible according to Lemma~\ref{thm:lemma2},
because each store has a unique sequence number and
$U_i(CV_A) = s_A \neq s_B = U_i(CV_B)$, implying that $CV_A \neq CV_B$.

\textbf{Case 2}: $A$ and $B$ are two stores done by different threads.
Suppose that $A$ is performed by thread $i$.
Let $CV_A = (...,s_A,...)$ and $CV_B = (...,t_b,...)$ where both $s_A$
and $t_b$ are in the $i^{\textit{th}}$ position.
By assumption, we have $0 < s_A \leq t_b$.

Since $B$ is not performed by thread $i$, we have $U_i(\perp_{CV_B}) = 0$.
We can apply the same argument similar to the second, third and fourth paragraphs
in the proof of Lemma~\ref{thm:lemma2} and deduce that
$B$ is modification ordered after $A$ or some store sequenced after $A$.
Since modification order is consistent with \textit{sequenced-before} relation,
if follows that $B$ is reachable from $A$ in graph $G$.

Now, consider the case where \code{rmw} edges are present.
Adding a \code{rmw} edge from a node $S$ to a node $R$
first transfers to $R$ all outgoing \textit{mo} edges coming from $S$ and 
then adds a normal \textit{mo} edge from $S$ to $R$.  So, any updates in $CV_S$
are propagated to all nodes that are reachable from $S$.
Therefore, the above argument still applies.
\end{proof}

\section{Operational Model}

We present our operational model with respect to the tsan11~\cite{tsan11} core language
described by the grammar in Figure~\ref{fig:grammar}.
A program is a sequence of statements.  \code{LocNA} and \code{LocA}
denote disjoint sets of non-atomic and atomic memory locations.
A statement can be one of these forms: an \code{if} statement, assigning the
result of an expression to a non-atomic location, forking a new
thread, joining a thread via its thread handle, and atomic statements.
The symbol $\epsilon$ denotes an empty statement.  Atomic statements
denoted by \code{StmtA} include atomic loads, store, \code{RMW}s, and
fences. An \code{RMW} takes a functor, \code{F}, to implement \code{RMW}
operations, such as \code{atomic\_fetch\_add}.  We omit loops 
for simplicity and leave the details of an expression unspecified.
We omit lock and unlock operations because they can be 
implemented with atomic statements.

\begin{figure}[!htb]
  \begin{lstlisting}[numbers=none, escapeinside={(*}{*)}]
Prog  ::= Stmt ; (*$\epsilon$*)
Stmt  ::= Stmt ; Stmt
        | if (LocNA) {Stmt} else {Stmt}
        | LocNA := Expr
        | LocNA = Fork(Prog)
        | Join(LocNA)
        | StmtA
        | (*$\epsilon$*)
StmtA ::= LocNA = Load(LocA, MO)
        | Store(LocNA, LocA, MO)
        | RMW(LocA, MO, F)
        | Fence(MO)
MO    ::= relaxed | release | acquire | rel_acq
        | seq_cst
Expr  ::= <literal> | LocNA | Expr op Expr
  \end{lstlisting}
  \caption{Syntax for our core language \label{fig:grammar}}
\end{figure}

\begin{figure}[!htb]
{\footnotesize
\textbf{States:}
\begin{align*}
  \Tid & \triangleq \mathbb{Z} & \Seq & \triangleq \mathbb{Z} & \TCV &: \Tid \rightarrow \CV \\
  \Frel{} &: \Tid \rightarrow \CV & \RF  &: \Seq \rightarrow \CV & \Facq{} &: \Tid \rightarrow \CV
\end{align*}
\indent [RELEASE STORE]
\begin{mathpar}
  \inferrule* 
  { \RF' = \RF[ s := \TCV_t] }
  {\br{\TCV, \RF, \Frel{}, \Facq{}} \Yields^{\textit{store}_\textit{rel}(s, t)} \br{\TCV, \RF', \Frel{}, \Facq{}}}
\end{mathpar}

[RELAXED STORE]
\begin{mathpar}
  \inferrule*
  { \RF' = \RF[ s := \Frel{}_t] }
  {\br{\TCV, \RF, \Frel{}, \Facq{}} \Yields^{\textit{store}_\textit{rlx}(s, t)} \br{\TCV, \RF', \Frel{}, \Facq{}}}
\end{mathpar}


[RELEASE RMW]
\begin{mathpar}
  \inferrule*
  { \RF' = \RF[ s := \TCV_t \cup \RF_{s'}] }
  {\br{\TCV, \RF, \Frel{}, \Facq{}} \Yields^{\textit{rmw}_\textit{rel}(s, t), \rf(s', t')}  \br{\TCV, \RF', \Frel{}, \Facq{}}}
\end{mathpar}

[RELAXED RMW]
\begin{mathpar}
  \inferrule*
  { \RF' = \RF[ s := \Frel{}_t \cup \RF_{s'}] }
  {\br{\TCV, \RF, \Frel{}, \Facq{}} \Yields^{\textit{rmw}_\textit{rlx}(s, t), \rf(s', t')}  \br{\TCV, \RF', \Frel{}, \Facq{}}}
\end{mathpar}

\indent [ACQUIRE LOAD]
\begin{mathpar}
  \inferrule*
  { \TCV' = \TCV[ t := \TCV_t \cup \RF_{s'} ] }
  {\br{\TCV, \RF, \Frel{}, \Facq{}} \Yields^{\textit{load}_\textit{acq}(s, t), \rf(s', t')} \br{\TCV', \RF, \Frel{}, \Facq{}}}
\end{mathpar}

[RELAXED LOAD]
\begin{mathpar}
  \inferrule*
  { \Facq{'} = \TCV[ t := \Facq{}_t \cup \RF_{s'} ] }
  {\br{\TCV, \RF, \Frel{}, \Facq{}} \Yields^{\textit{load}_\textit{rlx}(s, t), \rf(s', t')} \br{\TCV, \RF, \Frel{}, \Facq{'}}}
\end{mathpar}

\indent [RELEASE FENCE]
\begin{mathpar}
  \inferrule*
  { \Frel{'} = \Frel{}[ t := \TCV_t ] }
  {\br{\TCV, \RF, \Frel{}, \Facq{}} \Yields^{\textit{fence}_\textit{rel}(t)} \br{\TCV', \RF, \Frel{'}, \Facq{}}}
\end{mathpar}

[ACQUIRE FENCE]
\begin{mathpar}
  \inferrule*
  { \TCV' = \TCV[ t := \TCV_t \cup \Facq{}_t ] }
  {\br{\TCV, \RF, \Frel{}, \Facq{}} \Yields^{\textit{fence}_\textit{acq}(t)} \br{\TCV', \RF, \Frel{}, \Facq{}}}
\end{mathpar}
\caption{Semantics for tracking happens-before clock vectors for atomic loads,
stores, RMWs, and fences.  An RMW also triggers a load rule initially. \label{fig:sem-hbcv}}
}
\end{figure}

\subsection{Happens-Before Clock Vectors}
We next discuss the various happens-before clock vectors that \tool
uses to implement happens-before relations.
Figure~\ref{fig:sem-hbcv} presents our algorithm for updating clock vectors
used to track happens-before relations for atomic loads, stores, RMWs,
and fences.
The union operator $\cup$ between clock vectors is defined the same way as in
Section~\ref{sec:mo-clock-vector}.

For each thread $t$, the algorithm maintains the thread's own clock vector
$\TCV_t$, and release- and acquire-fence clock vectors $\Frel{}_t$ and $\Facq{}_t$. 
The algorithm also records a reads-from clock vector $\RF_s$ for each atomic store and RMW. 
Recall that the sequence number is a global counter of events across all threads,
and thus uniquely identifies an event. 
We use $\TCV, \Frel{}, \Facq{}$ and $\RF$ to denote these clock vectors
across all threads, and atomic stores and RMWs. 
The rules for atomic loads and RMWs also require the stores or RMWs that are
read from to be specified, which are denoted as $\rf$.

\paragraph{Release Sequences}

The 2011 standard used a complicated definition of
release sequences that allowed the possibility of relaxed writes
blocking release sequences~\cite{tsan11}.  The 2020
standard simplifies and weakens the definition of release
sequences.  In a recently approved draft~\cite{cpp-draft-n4849}, a store-release heads
a release sequence and an RMW is part of the release sequence if and only if it
reads from a store or RMW that is part of the release sequence.  A
load-acquire synchronizes with a store-release $S$ if the load reads
from a store or RMW in the release sequence headed by $S$.

We first discuss \tool's treatment of release sequences in the absence
of fences.  \Tool uses two clock vectors for store/RMW operations:
both the current thread clock vector $\TCV_t$ and a second
\emph{reads-from} clock vector $\RF_S$ that tracks the happens-before
relation for all release sequences that the RMW/store $S$ is part of.  For
a normal store release, these two clock vectors are the same.  When a
relaxed or release RMW $A$ reads from another store $B$, \tool
computes the RMW's reads-from clock vector $\RF_A$ as the union of: (1)
the store $B$'s reads-from clock vector $\RF_B$ and (2) the
RMW $A$'s current thread clock vector $\TCV_{t_A}$ if $A$ is a release.
When a load-acquire $A$ reads from a store-release or RMW, \tool computes
the load-acquire's new thread clock vector as the union of: (1) the
load-acquire's current thread clock vector $\TCV_{t_A}$ and (2) the store
release/RMW's reads-from clock vector.

\paragraph{Fences}
The C/C++ memory model also contains fences.  Fences can have one of
four different memory orders: acquire, release, acq\_rel, and
seq\_cst.  Release fences effectively make later relaxed stores into
store-releases, but the happens-before relation is established at the
fence-release.  \Tool maintains a release fence clock vector $\Frel{}_t$
for each thread and uses this clock vector when
computing the clock vector for release sequences.  Acquire fences
effectively make previous relaxed loads into load-acquires, but the
happens-before relation starts at the fence.  When a relaxed load
reads from a release sequence, \tool updates the per-thread
acquire-fence clock vector $\Facq{}_t$.  When \tool processes an acquire
fence, it uses $\Facq{}_t$ to
update the thread's clock vector $\TCV_t$.  Seq\_cst fences
constrain the interactions between sequentially consistent atomics and
non-sequentially consistent atomics.  The behavior of seq\_cst fences
can be represented as rules for generating modification order
constraints~\cite{c11popl}.  \Tool maintains a list of all seq\_cst
fences for each thread so that \tool can quickly locate the relevant
fence instructions.  It then generates the relevant modification order
edges to implement the fence semantics.

\subsection{Formal Operational Model}

Figure~\ref{fig:opstate} formalizes the operational state of a program. 
The state of system $\SysState$
consists of the list of $\ThrState$, the mapping $\ALocs$ from memory locations
to atomic information, the mapping $\NALocs$ from memory locations to values stored
at non-atomic locations, the mapping $\FenceInfo$, and
the $\Mograph$ described in Section~\ref{sec:modification-order}.
$\ALocInfo$ records the list of atomic loads, stores, and RMWs
performed at a given atomic location.
$\FenceInfo$ records the list of fences performed by each thread.
\textit{Prog} is a program described by the grammar
in Figure~\ref{fig:grammar}. 
The initial state of the system has empty mappings $\ALocs$ and $\NALocs$,
and $\FenceInfo$, only one thread representing the main function, and an empty
$\Mograph$. 

\begin{figure}[!htb]
{\footnotesize
  \begin{align*}
  \Tid & \triangleq \mathbb{Z} \quad \Epoch \triangleq \mathbb{Z} \quad
  \Val \triangleq \mathbb{Z}   \quad \Seq \triangleq \mathbb{Z} \\
  \CV & \triangleq \Tid \rightarrow \Epoch \\
  \ThrState & \triangleq (t: \Tid) \times (\TCV: \CV) \times (\mathbb{F}^{ \{\textit{rel}, \textit{acq}\} }: \CV) \times (\RF: \Seq \rightarrow \CV)\\ & \times (P: \textit{Prog}) \\
  \textit{StoreElem} & \triangleq (t: \Tid) \times (s: \Seq) \times (a: \textit{LocA}) \times (\mo: \textit{MemoryOrder}) \\ & \quad \times (v: \Val) \\
  \textit{LoadElem} & \triangleq (t: \Tid) \times (s: \Seq) \times (a: \textit{LocA}) \times (\mo: \textit{MemoryOrder}) \\ & \quad \times (\rf: \textit{StoreElem}) \\
  \textit{RMWElem} & \triangleq (t: \Tid) \times (s: \Seq) \times (a: \textit{LocA}) \times (\mo: \textit{MemoryOrder}) \\ & \quad \times (\rf: \textit{StoreElem or RMWElem}) \times (v: \Val) \\
  \textit{FenceElem} & \triangleq (t: \Tid) \times (s: \Seq) \times (\mo: \textit{MemoryOrder}) \\
  \ALocInfo & \triangleq (\textit{StoreElem or LoadElem or RMWElem}) \text{ list} \\
  \FenceInfo & \triangleq \Tid \rightarrow \textit{FenceElem} \text{ list} \\
  \ALocs  & \triangleq \textit{LocA} \rightarrow \ALocInfo \\
  \NALocs & \triangleq \textit{LocNA} \rightarrow \Val \\
  \SysState  & \triangleq \ThrState \text{ list} \times \ALocs \times \NALocs \\ & \quad \times \FenceInfo \times (M: \Mograph)
  \end{align*}
  \caption{Operational State \label{fig:opstate}}
  \vspace{-.3cm}
}
\end{figure}

\subsection{Operational Semantics}

Figures~\ref{fig:atomic-statement} to~\ref{alg:priorset} present
state transitions and related algorithms for our operational model.
A system under evaluation is a triple of the form ($\System$, \textit{ss}, $T$),
where $\System$ represents the state of the system $\SysState$,
\textit{ss} is the program being executed,
and $T$ represents $\ThrState$ of the thread currently running the program.
The current thread only updates its own state $T$ when the program \textit{ss} executes,
which causes the copy of $T$ in $\System$ to become outdated.
However, the updated $T$ will replace the old copy in $\System$
when the thread switching function $\delta$ is called at the end of each
atomic statement. 
The $\Mograph$ is a data structure in $\SysState$ and represented as $\System.M$.
The $\Mograph$ has methods \textsc{Merge}, \textsc{AddEdge},
\textsc{AddRMWEdge}, and \textsc{AddEdges} described in Figure~\ref{fig:mopseudo}
and Figure~\ref{fig:mopseudo2}.

Figure~\ref{fig:atomic-statement} shows semantics for atomic statements.
Every time an atomic statement is encountered, a corresponding \textit{LoadElem},
\textit{StoreElem}, \textit{RMWElem}, or \textit{FenceElem} is created
with the sequence number auto-assigned.  The process of assigning sequence numbers
are omitted in Figure~\ref{fig:atomic-statement}.
Function calls [LOAD], [STORE], [RMW], and [FENCE] invokes the corresponding
inference rules for updating clock vectors described in Figure~\ref{fig:sem-hbcv}
based on the type of atomic statements and the memory orders. 
Atomic statements with \code{seq\_cst} or \code{acq\_rel} memory
orderings invoke both acquire and release clock vector rules if they apply. 
[LOAD], [STORE], [RMW], and [FENCE] take the current state of the system,
the current atomic element, and the state of the
current thread as arguments, pass necessary input into the
inference rules for updating clock vectors, and finally return the updated
state of the current thread.

For atomic loads and RMWs, the store that is read from is randomly selected
from the \textit{may-read-from} set computed using the algorithm \textsc{BuildMayReadFrom}
presented in Figure~\ref{alg:may-read-from},
and the store must satisfy the constraint that the second return value of
\textsc{ReadPriorSet} is true, \ie having the load reading from the selected store
does not create a cycle in the $\Mograph$.
The atomic RMW rule first triggers an atomic load rule, and the store/RMW $S$
that is read from is recorded in the $\rf$ field of the \textit{RMWElem}.
Then, the $\Mograph$ is updated using the procedure \textsc{AddRMWEdge},
and the atomic RMW rules is finally finished by invoking an atomic store rule.
Both atomic load and atomic store rules call the helper method \textsc{AddEdges}
in Figure~\ref{fig:mopseudo2} to add edges to the $\Mograph$.

Figure~\ref{alg:priorset} presents the procedures \textsc{ReadPriorSet}
and \textsc{WritePriorSet} which compute the set of atomic actions
($\Mograph$ nodes) from where new $\mo$ edges will be formed.

We use the following helper functions in Figure~\ref{alg:may-read-from} and Figure~\ref{alg:priorset}:
\begin{itemize}
  \item $\textit{last\_sc\_fence}(t)$ returns the last \code{seq\_cst} fence in thread $t$;
  \item $\textit{last\_sc\_store}(a, S)$ returns the last \code{seq\_cst} store performed at location $a$ and is different from $S$;
  \item $\textit{sc\_fences}(t)$ returns the list of \code{seq\_cst} fences performed by thread $t$;
  \item $\textit{sc\_stores}(t, a)$ returns the list of \code{seq\_cst} stores and RMWs performed by thread $t$ at location $a$;
  \item $\textit{stores}(t, a)$ returns the list of stores and RMWs performed by thread $t$ at location $a$;
  \item $\textit{loads\_stores}(t, a)$ returns the list of loads, stores, and RMWs performed by thread $t$ at location $a$;
  \item $\textit{last}(\text{list})$ returns the element with the largest sequence number in the list, excluding null elements;
  \item $\textit{get\_write}(A)$ returns $A$ if $A$ is an atomic store or RMW and returns $A.\rf$ if $A$ is an atomic load.
\end{itemize}
All the above functions return null if the result does not exist. 

\begin{figure}[!htb]
{\footnotesize
  \textbf{[ATOMIC LOAD]}
  \begin{mathpar}
    \inferrule
    { \br{\System, T} \rightarrow_{\textit{load}} \br{\System, T'} \\
      L.t = T'.t \\ L.a = a \\ L.\mo = \mo \\ S \in \Call{BuildMayReadFrom}{L} \\\\
      L.\rf = S \\
      (\textit{pset}, \textit{ret}) = \Call{ReadPriorSet}{L, S} \\
      \textit{ret} == True \\
      T'' = [\text{LOAD}](\System, L, T') \\\\
      \System' = \System[M := \System.M.\Call{AddEdges}{\textit{pset}, S}] \\
      \System'' = \System'[\NALocs := \System'.\NALocs[l := S.v]] \\
      \System''' = \System''[\ALocs := \System''.\ALocs(a).\code{pushback}(L)]
    }
    {\br{\System, l = \code{Load}(a, \mo); \textit{ss}, T} \Yields \br{\System'', \delta; \textit{ss}, T''}}
  \end{mathpar}

  \textbf{[ATOMIC STORE]}
  \begin{mathpar}
    \inferrule*
    { \br{\System, T} \rightarrow_{\textit{store}} \br{\System', T} \\
      S.t = T.t \\ S.a = a \\ S.\mo = \mo \\ S.v = \System'.\NALocs(l) \\
      \textit{pset} = \Call{WritePriorSet}{S} \\\\
      T' = [\text{STORE}](\System', S, T) \\\\
      \System'' = \System'[M := \System'.M.\Call{AddEdges}{\textit{pset}, S}] \\
      \System''' = \System''[\ALocs := \System''.\ALocs(a).\code{pushback}(S)]
    }
    {\br{\System, \code{Store}(l, a, \mo); \textit{ss}, T} \Yields \br{\System''', \delta; \textit{ss}, T'}}
  \end{mathpar}

  \textbf{[ATOMIC RMW]}
  \begin{mathpar}
    \inferrule*
    { \br{\System, T} \rightarrow_{\textit{rmw}} \br{\System', T'} \\
      R.t = T'.t \\ R.a = a \\ R.mo = \mo \\\\
      \br{\System', l = \code{Load}(a, \mo), T'} \rightarrow \br{\System'', \textit{ss}, T''}  \\\\
      R.\rf = S \\ 
      T''' = [\text{RMW}](\System'', R, T'') \\\\
      \System''' = \System''[M := \System''.M.\Call{AddRMWEdge}{\code{GetNode}(R.\rf), \code{GetNode}(R)}] \\
      \System'''' = \System'''[\ALocs := \System'''.\ALocs(a).\code{pushback}(R)]
    }
    { \br{\System, \code{RMW}(a, \mo, F); \textit{ss}, T} \Yields \\\\ 
      \br{\System'''', l = F(l); R.v = \System''''.\NALocs(l); \code{Store}(l, a, \mo); \delta; \textit{ss}, T'''}}
  \end{mathpar}

  \textbf{[ATOMIC FENCE]}
  \begin{mathpar}
    \inferrule*
    {
      F.t = T.t \\ F.\mo = \mo \\
      T' = [\text{FENCE}](\System, F, T) \\
      \System' = \System[\FenceInfo := \System.\FenceInfo(t).\code{pushback}(F)  ]
    }
    { \br{\System, \code{Fence}(\mo); \textit{ss}, T} \Yields
      \br{\System', \delta; \textit{ss}, T'}}
  \end{mathpar}

  \caption{Semantics for atomic statements \label{fig:atomic-statement}}
}
\end{figure}

\begin{figure}[!htb]
{\footnotesize
  \begin{algorithmic}[1]
    \Procedure{BuildMayReadFrom}{$L$}
    \State \textit{ret} := $\emptyset$
    \If{$L.\mo == \code{seq\_cst}$}
      \State $S := \textit{last\_sc\_store}(L.a, L)$ \label{line:may-read-from-lastsc}
	\EndIf
	\mycomment{Maybe no need to keep the for all loop?}
    \ForAll{threads $t$}
      \State $\textit{stores} := \textit{stores}(t, L.a)$
      \State $\textit{base} := \{X \in \textit{stores} \mid \neg (X \stackrel{\hb}{\rightarrow} L) \lor (X \stackrel{\hb}{\rightarrow} L \land (\nexists Y \in \textit{stores} \text{ . } X \stackrel{\sbo}{\rightarrow} Y \stackrel{\hb}{\rightarrow} L) ) \}$ \label{line:may-read-from-base}
        \If{$L.\mo == \code{seq\_cst} \land S \neq$ null}
          \State $\textit{base} := \textit{base} \setminus \{ X \in \textit{stores} \mid X \stackrel{\sco}{\rightarrow} S \lor X \stackrel{\hb}{\rightarrow} S \}$ \label{line:may-read-from-scrm}
        \EndIf
      \State $\textit{ret} := \textit{ret} \cup \textit{base}$
    \EndFor
    \If{$L$ is rmw}
      \State $\textit{ret} := \{ X \in \textit{ret} \mid \text{no rmw has read from } X\}$
    \EndIf
    \State \Return \textit{ret}
    \EndProcedure
  \end{algorithmic}
  \caption{Pseudocode for computing \textit{may-read-from} sets \label{alg:may-read-from}}
}
\end{figure}

\begin{figure}[!htb]
{\footnotesize
  \begin{algorithmic}[1]
    \Procedure{WritePriorSet}{$S$}
    \State $\textit{priorset} := \emptyset$; $F_S := \textit{last\_sc\_fence}(S.t)$;
	  \textit{is\_sc\_store} := ( $S.\mo$ == \code{seq\_cst} )
	\If{\textit{is\_sc\_store}}
	  \State add $\textit{last\_sc\_store}(S.a, S)$ to \textit{priorset} \label{line:wpriorset-sc}
	\EndIf
    \ForAll{threads $t$}
	  \State $F_t := \textit{last\_sc\_fence}(t)$
	  \State $F_b := \textit{last}(\{ F \in \textit{sc\_fences}(t) | F_S \neq \text{null} \land F \stackrel{\sco}{\rightarrow} F_S \})$
	  \State $S_1 := \textit{last}(\{ X \in \textit{stores}(t, S.a) \mid \textit{is\_sc\_store}
	    \land F_t \neq \text{null} \land X \stackrel{\sbo}{\rightarrow} F_t \})$
	  \State $S_2 := \textit{last}(\{ X \in \textit{sc\_stores}(t, S.a) \mid F_S \neq \text{null} \land X \stackrel{\sco}{\rightarrow} F_S \})$
	  \State $S_3 := \textit{last}(\{ X \in \textit{stores}(t, S.a) \mid F_b \neq \text{null}
	    \land X \stackrel{\sbo}{\rightarrow} F_b \})$ \label{line:wpriorset-fence}
	  \State $S_4 := \textit{last}(\{ X \in \textit{load\_stores}(t, S.a) \mid X \stackrel{\hb}{\rightarrow} S \})$ \label{line:wpriorset-hb}
	  \State add $\textit{get\_write} \, ( \textit{last}(\{ S_1, S_2, S_3, S_4 \} ))$ to \textit{priorset} \label{line:wpriorset-last}
    \EndFor
	\State \Return \textit{priorset}
    \EndProcedure
  \end{algorithmic}

  \begin{algorithmic}[1]
    \Procedure{ReadPriorSet}{$L$, $S$}
    \State $\textit{priorset} := \emptyset$; $F_L := \textit{last\_sc\_fence}(L.t)$;
	  \textit{is\_sc\_load} := ( $L.\mo$ == \code{seq\_cst} )
    \ForAll{threads $t$}
	  \State $F_t := \textit{last\_sc\_fence}(t)$
	  \State $F_b := \textit{last}(\{ F \in \textit{sc\_fences}(t) \mid F_L \neq \text{null} \land F \stackrel{\sco}{\rightarrow} F_L \})$
      \State $S_1 := \textit{last}(\{ X \in \textit{stores}(t, L.a) \mid \textit{is\_sc\_load}
	    \land F_t \neq \text{null} \land X \stackrel{\sbo}{\rightarrow} F_t \})$ \label{line:rpriorset-s1}
	  \State $S_2 := \textit{last}(\{ X \in \textit{sc\_stores}(t, L.a) \mid F_L \neq \text{null} \land X \stackrel{\sco}{\rightarrow} F_L \})$ \label{line:rpriorset-s2}
	  \State $S_3 := \textit{last}(\{ X \in \textit{stores}(t, L.a) \mid F_b \neq \text{null} \land X \stackrel{\sbo}{\rightarrow} F_b \})$ \label{line:rpriorset-s3}
	  \State $S_4 := \textit{last}(\{ X \in \textit{load\_stores}(t, L.a) \mid X \stackrel{\hb}{\rightarrow} L \})$ \label{line:rpriorset-s4}
	  \State $A := \textit{get\_write}(\textit{last}(\{ S_1, S_2, S_3, S_4 \}))$ \label{line:rpriorset-last}
	  \If{$A \neq S$}
	    \State add $A$ to \textit{priorset}
	  \EndIf
    \EndFor
	\For{each $e$ in \textit{priorset}}
	  \If{$e$ is reachable from $S$ in $\Mograph$} \label{line:rpriorset-reject}
	    \State \Return ($\emptyset$, false)
	  \EndIf
	\EndFor
	\State \Return (\textit{priorset}, true)
    \EndProcedure
  \end{algorithmic}
  \caption{Pseudocode for computing \textit{priorsets} for atomic stores and loads \label{alg:priorset}}
}
\end{figure}

\subsection{Equivalent to Axiomatic Model}

We make our axiomatic model precise and prove the equivalence of our operational
and axiomatic models in \paper{Section A of our technical report~\cite{c11tester-arxiv}.}\techreport{Section~\ref{sec:equivalence} of the Appendix.}

\section{Implementation}

We next present several aspects of the \tool implementation.
\techreport{
Section~\ref{sec:pruning} presents \tool's support for limiting memory usage.
Section~\ref{sec:mixedmode} presents \tool's support for 
mixed mode accesses to a memory location.}
\techreport{Section~\ref{sec:scheduling} discusses the overheads of different
approaches to controlling thread schedules.
Section~\ref{sec:borrowing} describes how \tool implements thread
local storage with fiber-based scheduling.
Section~\ref{sec:staticinit} presents \tool's support for static
initializers. Section~\ref{sec:repeat} presents \tool's support for repeated execution.
}

\subsection{Pruning the Execution Graph}\label{sec:pruning}

While keeping the complete C/C++ execution graph and execution trace is
feasible for short executions and can help with debugging, for longer
executions their size eventually becomes too large to store in memory.
\Naively pruning the execution trace to retain the most recent actions
is not safe---an older store $S_A$ to an atomic location $X$ in the trace can be
modification ordered after a later store $S_B$ to $X$ in the trace.  If
a thread has already read from $S_A$, it cannot read from $S_B$ because it is modification ordered before $S_A$.
\Naively pruning $S_A$ from execution graph without also removing $S_B$ might erroneously produce an
invalid execution in which a thread reads from $S_A$ and then $S_B$.

\Tool supports two approaches to limiting memory usage: (1) a
conservative mode that limits the size of the execution graph with the
constraint that \tool must retain the ability to generate all possible
executions and (2) an aggressive mode that can potentially reduce
the set of executions that \tool can produce.

\paragraph{Conservative Mode} The key idea behind the conservative
mode is to compute a set of older stores that can no longer be read by
any thread and thus can be safely removed from the execution graph.
The basic idea is to compute the latest action $A_t$ for each thread
$t$ such that for the last action $L_{t'}$ in every other thread $t'$,
we have $A_t \stackrel{\hb}{\rightarrow} L_{t'}$.  If action $S$ is a store
that either happens before $A_t$ or is $A_t$, then any new loads from
the same memory location must either read from $S$
or some store that is modification ordered after $S$.
Thus any store $S_{\text{old}}$ that is modification ordered before the
store $S$ can
no longer be read from by any thread and can be safely pruned.

\Tool efficiently computes a clock vector $CV_{\text{min}}$ to identify such actions $A_t$ for each thread by
using the intersection operator, $\cap$, to combine the clock vectors
of all running threads.  We define the intersection operator $\cap$ as follows:
\[CV_1 \cap CV_2 \triangleq \lambda t.min(CV_1(t), CV_2(t)).\]

\Tool then searches for stores that happen before these operations.
It then uses the $\Mograph$ to identify old
stores to prune.  Finally, it prunes these stores and any loads that read from
them.

\paragraph{Aggressive Mode} If a thread fails to synchronize with
other threads, this can prevent \tool from freeing much of the execution graph or execution trace as such
a thread can potentially read from older stores in the execution trace and thus prevent freeing those stores.  In
the aggressive mode, the user provides a window of the trace that
\tool attempts to keep in the graph.  Simply deleting all memory
operations before that window is not sound as newer (with respect to
the trace) memory operations may be modification ordered before older
memory operations.  Thus removing older memory operations could cause
\tool to erroneously allow loads to read from stores they should not.

For a store $S$ outside of this window, \tool attempts to remove all
stores modification ordered before $S$.  Such stores can in some cases be inside of the window that \tool
attempts to preserve, but they must also be removed.  \Tool then removes
any loads that read from the removed stores.

\paragraph{Fences}

Release fences that happen before actions whose sequence numbers correspond to components of $CV_{\text{min}}$ are
not necessary to keep since every running thread has already synchronized with
a later point in the respective thread's execution.  Thus such release
fences can be safely removed.

After an acquire fence is executed, its effect is
summarized in the clock vector of subsequent actions in the same
thread.  Thus acquire fences can be safely removed.

Sequentially consistent fences that happen before $CV_{\text{min}}$
are no longer necessary since the happens-before relation will enforce
the same orderings.  Thus, such sequentially consistent fences can be
safely removed.

\techreport{
\subsection{Supporting Mixed Access Modes\label{sec:mixedmode}}

The C/C++ memory model is silent on the semantics of how atomic
accesses interact with non-atomics accesses to the same memory
location. Researchers have recognized this as a serious limitation of
the standard~\cite{batty2015problem}. It is necessary to handle
mixtures of atomics and non-atomics in \tool for three reasons: (1)
\code{atomic\_init} is implemented in the header files as a
non-atomic store and may race with concurrent atomic accesses to the same memory location, (2) memory can be reused
in C/C++, \eg via \code{malloc} and \code{free}, and the new use may use a memory location for a different
purpose, and (3) C/C++ programs may use non-atomic accesses to copy memory
that contains atomics, \eg via \code{realloc} or \code{memcpy}.  \Tool thus supports
non-atomic operations that access the same memory location as
an atomic as they must be tolerated provided that the accesses are ordered by
the happens-before relation.  If the accesses conflict and are not ordered by happens-before, \tool reports a data race.

Handling non-atomic stores poses a challenge.  For performance
reasons, it is important to implement non-atomic stores as simple
writes to memory.  But if a non-atomic store is later read by an
atomic load, then \tool must include that non-atomic store in the
modification order graph and other internal data structures.
The challenge is that by the time \tool observes the atomic load, it
has lost information about the non-atomic store.

\Tool uses a FastTrack~\cite{fasttrack}-like approach to race
detection.  It maintains a 64-bit shadow word for
each byte of memory.  The shadow word either contains 25-bit read and
write clocks and 6-bit read and write thread identifiers or a
reference to an expanded access record.  We use one bit in the shadow
word to record whether the last store to the address was from a
non-atomic or an atomic store.  If \tool performs an atomic access to a
memory location that was last written to by a non-atomic store, \tool
creates a special non-atomic access record and adds the access to the
modification order graph.

Many applications also contain legacy libraries that use pre-C/C++11
atomic operations such as LLVM intrinsics and volatile accesses.
\Tool supports converting such volatile accesses into atomic accesses
(with a user specific memory order) to allow code that incorporates legacy libraries to execute.
}

\paper{
\subsection{Race Detection}
\Tool uses a FastTrack~\cite{fasttrack}-like approach to race
detection with the per-thread happens-before clock vectors from the
operational semantics.  It uses the standard C/C++ definition of races.
Section 7.2 of our technical report~\cite{c11tester-arxiv} discusses
this in more detail.
}

\subsection{Scheduling}\label{sec:scheduling}

There are two general techniques for controlling the schedule for executing
threads.  The first technique is to map application threads to kernel
threads and then use synchronization constructs to control which
thread takes a step.  The second technique is to simulate application
threads with user threads or fibers that are all
mapped to one kernel thread.  While there is a proposal for user-space
control of thread scheduling that provides very low latency context
switches, unfortunately it still has not been implemented in the mainline Linux
kernel~\cite{googleuser} after six years.

We implemented a microbenchmark on x86 to measure the context switch costs
for several implementations of these two techniques.  Our
microbenchmark starts two threads or fibers and measures the time to
switch between these threads.
\paper{The experiment results are available in our technical report~\cite{c11tester-arxiv}.}
\techreport{Figure~\ref{fig:micro} reports the
results of these experiments.  We ran each experiment in two
configurations: (1) in the all-core configuration the microbenchmark
could use all 4 hardware cores and (2) in the single-core
configuration the microbenchmark was pinned to a single hardware thread.}

\techreport{
  \begin{figure}
{\footnotesize
\begin{center}
  \begin{tabular}{|l|l|l|}
  \hline
  Scheduling Approach & Time for & Time for\\
     & all cores & 1 core\\
  \hline
  Pthread condition variable & 1.95$\mu$s& 1.61$\mu$s\\
  Futex & 1.85$\mu$s&1.32$\mu$s\\
  \hline
  Spinning & 0.07$\mu$s& 15,976.7$\mu$s\\
  Spinning w/ yield & 0.21$\mu$s&0.54$\mu$s\\
  \hline
  Swapcontext & 0.34$\mu$s & 0.34$\mu$s\\
  Swapcontext w/ tls & 0.63$\mu$s & 0.63$\mu$s\\
  Setjmp/Longjmp & 0.01$\mu$s& 0.01$\mu$s\\
  Setjmp/Longjmp w/ tls& 0.30$\mu$s& 0.30$\mu$s\\
  \hline
  \end{tabular}
  \end{center}}
\caption{\label{fig:micro}Context Switch Costs}
\end{figure}
}

For the kernel threads, we implemented four approaches to
context switches.  The first approach uses standard pthread condition
variables and was generally the slowest approach.  The second approach
uses Linux futexes and is a little faster.  The next
two approaches use spinning to wait.  Simply spinning is very fast if
every thread has its own core.  As soon as two threads have to share a
core, this approach becomes 10,000$\times$ slower than the other approaches
because it has to wait for a scheduling epoch to occur to switch
contexts.  We also implemented a version that adds a yield call.  This
hurts performance if both threads run on their own core, but
significantly helps performance if threads share a core.  But in
general, spinning is problematic as idle threads keep cores busy.

For the fiber-based approaches, we used both \code{swapcontext} and
\code{setjmp} to implement fibers.  \code{Swapcontext} is significantly
slower than \code{setjmp} because it makes a system call to update the
signal mask.  An issue with these approaches is that neither call
updates the register that points to thread local storage.  Updating
this register requires a system call, and this slows down both fiber
approaches.  \techreport{We report context switches with this system call in the
``w/ tls'' entries.}

For practical implementation strategies, the fiber-based approach is
faster than kernel threads.  Thus, \tool uses
fibers implemented via \code{swapcontext} to simulate application threads.

\subsection{Thread Context Borrowing}\label{sec:borrowing}

A major challenge with implementing fibers is supporting thread local
storage.  The specification for thread local storage on
x86-64~\cite{tlslinux} is complicated and leaves many important
details implementation-defined and these details vary across different
versions of the standard library.  Generating a correct thread local
storage region for each thread is a significant effort as it requires
continually updating \tool code to support the current set of library
implementation strategies.  This is complicated by the fact that
creating the thread local storage may involve calling initializers and
freeing the thread local storage may involve calling destructors.

Instead, \tool implements a technique for borrowing the thread context
including the thread local storage from a kernel thread.  The idea is
that for each fiber \tool creates a real kernel thread and the fiber
borrows the kernel thread's entire context including its thread local
storage.  

\Tool implements thread context borrowing by first creating and
locking a mutex to protect the thread context and then creating a new
kernel thread to serve as a lending thread that lends its context to
\tool.  The lending thread then creates a fiber context and switches
to the fiber context.  The fiber context then transfers the lending
thread's context along with its thread local storage to the \tool.
Finally, the fiber context grabs the context mutex to wait for the
\tool to return its context.  Once the application thread is
finished, \tool returns the thread context to the lending thread by
releasing the context mutex.  The lending thread then switches back to
its original context, frees its fiber context, and then exits.
Migrating thread local storage on x86 requires a system call to change
the \emph{fs} register.  \Tool implements thread context borrowing for
x86, but the basic idea should work for any architecture.

\techreport{
\subsection{Static Initializers}\label{sec:staticinit}

Static initializers in C++ can and do create threads, perform atomic
operations, and call arbitrary functions in the C++ and pthread
libraries.  \Tool guards access to itself with initialization checks.
In the first call to a \tool routine, \tool initializes itself and
converts the current application thread into a fiber context.  It then
takes control of the execution and controls the remainder of the
program execution.  This allows \tool to support programs that perform
arbitrary operations in their static initializers.
}

\subsection{Repeated Execution}\label{sec:repeat}

\Tool supports repeatedly executing the same benchmark to find hard-to-trigger bugs.
It can be desirable for testing algorithms to maintain
state between executions to attempt to explore different program
behaviors across different executions.  \Tool maintains its internal
state across executions of the application under test and resets the
application's state between executions.

\Tool uses fork-based snapshots to restore the application to its
initial state.  \Tool uses the \code{mmap} library call to map a shared memory region to store its
internal state.  The data in this shared memory region persists across
different executions.  This state allows \tool to report data races
only once as opposed to reporting the same race on each execution.  It
also allows for the creation of smart plugins that explore
different behaviors across different executions.

\section{Evaluation\label{sec:eval}}
We compare \tool with both tsan11rec, a race detector that supports controlled execution \cite{tsan11rec} and tsan11 \cite{tsan11}, a race detector that relies on the operating system scheduler to control the scheduling of threads. We ran our experiments on an Ubuntu Linux 18.04 LTS machine with a 6 core Intel Core i7-8700K CPU and 64GB RAM.  
We first evaluated the above tools on buggy implementations of seqlock
and reader-writer lock to check whether all three tools can detect the
injected bugs.  Then we evaluated the three tools on both a set of five applications that make extensive use of C/C++ atomics and the data structure benchmarks used to evaluate CDSChecker previously \cite{oopsla2013}.

\techreport{
We were not able to build tsan11rec and tsan11 directly on our machine due to dependencies on legacy versions of software. Nevertheless, we compiled tsan11rec and tsan11 inside two docker containers whose base images were both Ubuntu 14.04 LTS. The tsan11rec-instrumented benchmarks were compiled with Clang v4.0 revision 286346, the tsan11-instrumented benchmarks were compiled with Clang v3.9 revision 375507, and the \tool-instrumented benchmarks were compiled with Clang v8.0 revision 346999. 
}

The way these three tools support multi-threading differs significantly. \Tool sequentializes thread executions and only allows one thread to execute at a single time, tsan11 allows multiple threads to execute in parallel, while tsan11rec falls in between---it sequentializes visible operations (such as atomics, thread operations, and synchronization operations) and runs invisible operations in parallel.  The closest tool to compare \tool with is tsan11rec because both \tool and tsan11rec support controlled scheduling, while results for tsan11 are also presented for completeness.  Although both tsan11 and tsan11rec execute all or some operations in parallel, we present a best effort comparison in the following.

\subsection{Benchmarks with Injected Bugs}

We have injected bugs into two commonly used data structures and verified
that both tsan11 and tsan11rec miss these bugs due to the restrictions
of their memory models and that the buggy executions contained cycles in 
$\hb \cup \rf \cup \mo \cup \sco$.

\paragraph{Seqlock}
We took the seqlock implementation from Figure 5 of Hans Boehm's 
MSPC 12 paper \cite{seqlocks}, made the writer correctly use
release atomics for the data field stores,
and injected a bug by weakening atomics that initially increment
the counter to relaxed memory ordering.

\paragraph{Reader-Writer Lock}
We also implemented a broken reader-writer lock where the write-lock operation incorrectly uses relaxed atomics.
The test case uses the read-lock to protect reads from atomic variables and the write-lock to protect writes to atomic variables.

\Tool was able to detect the injected bugs in the broken seqlock and reader-writer lock
with bug detection rates of 28.8\% and 55.3\%, respectively, in 1,000 runs.
However, tsan11 and tsan11rec failed to detect the bugs in 10,000 runs.

\subsection{Real-World Applications}

Ideally, we would evaluate the tools against real world applications
that make extensive use of C/C++ atomics.  However, to our knowledge,
no such standard benchmark suite exists so far.  So we gathered our benchmarks
through searching for benchmarks evaluated in previous work
as well as concurrent programs on GitHub.

The five large applications that we have gathered include: 
GDAX~\cite{gdax}, an in-memory copy of the order book of the GDAX
cryptocurrency exchange; Iris~\cite{iris}, a low-latency C++ logging
library; Mabain~\cite{mabain}, a key-value store library;
Silo~\cite{silocode,silopaper}, a multicore in-memory storage engine;
and the Firefox JavaScript engine
release
50.0.1.\footnote{https://ftp.mozilla.org/pub/firefox/releases/50.0.1/source/}
To make our results as reproducible as possible, we tested the
JavaScript engine using the offline version of JSBench
v2013.1.~\cite{jsbench} \footnote{https://plg.uwaterloo.ca/\textasciitilde
dynjs/jsbench/}

As the three tools supported multi-threading in different ways,
to make a fair comparison, we ran each experiment on application benchmarks in
both the all-core configuration, where all hardware cores could be utilized,
and the single-core configuration, where the tools were restricted
to running on a single CPU using the Linux command \code{taskset}.
As it is always trivial to parallelize testing by running several copies
of a tool in parallel, the rationale behind the single-core experiment is 
to compare the total CPU time used to execute a benchmark or the equivalent throughput 
under different tools.  However, to understand the performance benefits of parallelism for the other tools, 
we also ran experiments in the all-core configuration.
The performance of \tool does not vary much in two configurations, because
\tool only schedules one thread to run at a time. 

Table~\ref{table:benchmark-results} summarizes the average
and relative standard deviation (in parentheses)
of execution time or throughput for each of the five benchmarks
in the single-core and all-core configurations.
Table~\ref{table:benchmark-results} reports wall-clock time for Iris and Mabain.
The throughput of Silo is the aggregate throughput (\code{agg\_throughput})
reported by Silo, and the unit is ops/sec, \ie the number of database operations performed per second.
The throughput of GDAX is the number of iterations which the entire
data set is iterated over in 120s. 
The time and relative standard deviation
reported for JSBench are the statistics reported
by the python script in JSBench over 10 runs. For the other four benchmarks,
the average and relative standard deviation
of the time and throughput are calculated over 10 runs. 

\Tool is slower than tsan11 in all benchmarks except
Silo in the single-core configuration.
\Tool is faster than tsan11rec in all benchmarks except JSBench in
the all-core configuration.

Figure~\ref{fig:benchmark-results-plot} summarizes
speedups compared to tsan11 on the
single-core configuration for each tool under both configurations,
which are derived from data in Table~\ref{table:benchmark-results}.
Tsan11 on the single-core configuration is set as the baseline and
is omitted from Figure~\ref{fig:benchmark-results-plot}.

Based on the results in Figure~\ref{fig:benchmark-results-plot},
we further calculated the geometric mean of the speedup over the five
benchmarks for each tool under both configurations. 
According to the geometric means, \tool is 14.9$\times$ and 11.1$\times$ faster than
tsan11rec in the single-core configuration and all-core configuration, respectively.
\Tool is 1.6$\times$ and 3.1$\times$ slower than tsan11
in the single-core configuration and all-core configuration, respectively.

\techreport{Table~\ref{table:results-operation-count} presents the number of atomic operations
and normal accesses to shared memory locations executed by \tool for each benchmark. 
As the compiler pass of \tool was adapted from the LLVM ThreadSanitizer pass,
the number of atomic operations and normal accesses to shared memory executed by
tsan11 and tsan11rec should be relatively similar,
except for the two throughput-based benchmarks --- Silo and GDAX,
as the amount of work depends on how fast a tool is.}

\begin{figure}[!htbp]
  \centering
    \includegraphics[scale=0.5]{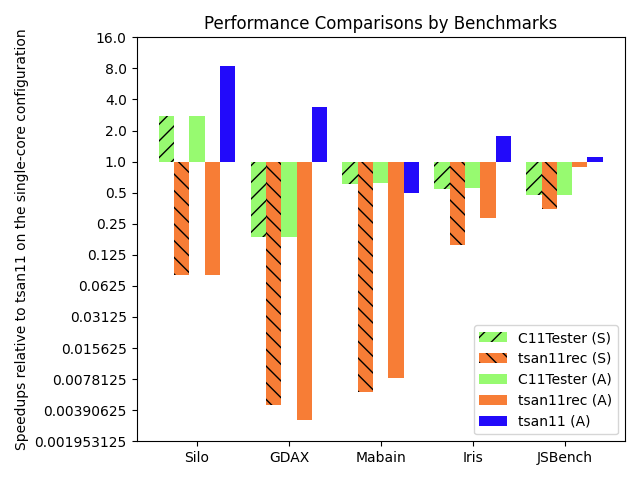}\vspace{-.3cm}
    \caption{Speedups compared to tsan11 on the single-core configuration for all three tools under both configurations, derived from Table~\ref{table:benchmark-results}.   The performance results of tsan11 on the single-core configuration is set as the baseline and is omitted in the Figure.  The larger values the faster the tools are.  The "(S)" label stands for the single-core configuration, and "(A)" stands for the all-core configuration.
    \label{fig:benchmark-results-plot}}

    \includegraphics[scale=0.45]{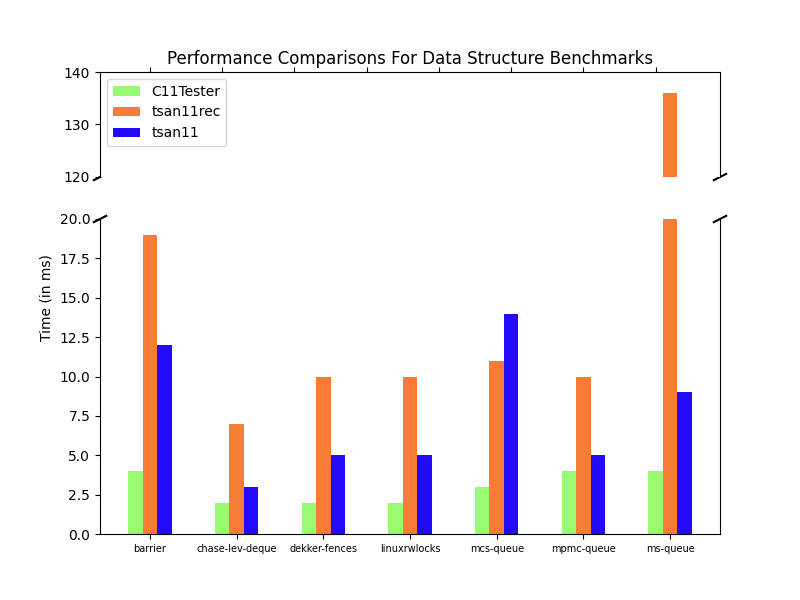}\vspace{-.5cm}
    \caption{Performance comparisons for data structure benchmarks, based on data in table~\ref{table:data-structure-results}.
    \label{fig:data-structure-results-plot}}
\end{figure}

\begin{table*}[!htb]
\centering
 \caption{Performance results for application benchmarks in the \textbf{single-core} and \textbf{all-core} configurations.  The results are averaged over 10 runs.  Relative standard deviation is reported in parentheses. Larger throughputs are better for throughput-based measurements, smaller times are better for time-based measurements.\label{table:benchmark-results}}
\vspace{-.3cm}
{\footnotesize
 \begin{tabular}{|l|r|r|r|r|r|r|l|}
  \hline
  & \multicolumn{3}{c|}{\textbf{Single-core Configuration}} & \multicolumn{3}{c|}{\textbf{All-core Configuration}} &  \\
  \textbf{Test} & \multicolumn{1}{c}{\textbf{\tool}} & \multicolumn{1}{c}{\textbf{tsan11rec}} & \multicolumn{1}{c|}{\textbf{tsan11}} & \multicolumn{1}{c}{\textbf{\tool}} & \multicolumn{1}{c}{\textbf{tsan11rec}} & \multicolumn{1}{c|}{\textbf{tsan11}} & \textbf{Measurement} \\
  \hline
  Silo   & 15267 (0.45\%) & 436 (2.52\%)   & 5496 (4.54\%)  & 15297 (1.17\%) & 438.3 (0.59\%) & 46688 (1.68\%) & Throughput (ops/sec)\\
  GDAX   & 2953  (1.80\%) & 69.3 (0.97\%)  & 15700 (0.12\%) & 2946  (1.64\%) & 49.4 (1.04\%)  & 53362 (11.4\%) & Throughput (\# of iterations)\\
  Mabain & 5.77  (0.25\%) & 593.4 (0.98\%) & 3.513 (1.15\%) & 5.69  (0.04\%) & 441.6 (0.69\%) & 7.00 (0.22\%)  & Time (in s) \\
  Iris   & 8.95  (1.46\%) & 31.31 (0.89\%) & 4.873 (1.64\%) & 8.86  (0.22\%) & 17.20 (1.05\%) & 2.725 (4.07\%) & Time (in s) \\
  JSBench& 1835  (0.26\%) & 2522 (1.41\%)  & 867.8 (0.21\%) & 1836  (0.35\%) & 970.7 (0.68\%) & 781.9 (0.61\%) & Time (in ms) \\
  \hline
 \end{tabular}
}
\end{table*}

\begin{table*}[!htb]
\centering
 \caption{Performance results for data structure benchmarks.  
  The time column gives the time taken to execute the test case once, averaged over 500 runs.  The rate column gives the percentage of executions in which the data race is detected among 500 runs.\label{table:data-structure-results}}
 \vspace{-.3cm}
{\footnotesize
\begin{tabular}{|l|ll|ll|ll|}
  \hline
  \textbf{Test} & \textbf{C11Tester} && \textbf{tsan11rec} && \textbf{tsan11} & \\
  & Time & rate & Time & rate & Time & rate \\
  \hline
  barrier         & 4ms & 76.6\%  & 19ms & 36.4\%  & 12ms & 0.0  \%  \\
  chase-lev-deque & 2ms & 94.6\%  &  7ms & 0.0 \%  & 3ms  & 0.0  \%  \\
  dekker-fences   & 2ms & 21.6\%  & 10ms & 41.4\%  & 5ms  & 53.2 \%  \\
  linuxrwlocks    & 2ms & 86.2\%  & 10ms & 53.4\%  & 5ms  & 1.6  \%  \\
  mcs-lock        & 3ms & 89.4\%  & 11ms & 71.4\%  & 14ms & 0.8  \%  \\
  mpmc-queue      & 4ms & 59.4\%  & 10ms & 58.2\%  & 5ms  & 0.4  \%  \\
  ms-queue        & 4ms & 100.0\% & 136ms& 100.0\% & 9ms  & 100.0\%  \\
  \hline
  Average         &      & 75.4\%  &      & 51.5\%  &      & 22.3\%  \\
  \hline
\end{tabular}
}
\end{table*}

\techreport{
\begin{table*}[h]
\centering
\caption{The number of atomic operations (including synchronization operations such as mutex and condition variable operations) and normal accesses to shared memory locations executed in each benchmark by \tool. \label{table:results-operation-count}}
\vspace{-.3cm}
{\footnotesize
\begin{tabular}{|l|r|r|r|r|r|}
  \hline
    & Silo & GDAX & Mabain & Iris & JSBench \\
  \hline
  \# normal memory accesses & 63.7M & 408.3M & 77.1M  & 35.1M & 5747M \\
  \# atomic operations      & 11.3M & 44.9M  & 2.98M  & 4.8M  & 8.02M \\
  \hline
\end{tabular}
}
\end{table*}
}

\paragraph{Silo} Silo~\cite{silocode,silopaper} is an in-memory database that is designed for performance and scalability for modern multicore machines.
The test driver we used is \code{dbtest.cc}.
We ran the driver for 30 seconds each run with option \code{"-t 5"},
\ie 5 threads in parallel. 

In the first part of the
experiment, Silo was compiled with invariant checking turned on. \Tool
found executions in which invariants were violated.
We found that it was because Silo used volatiles with gcc intrinsic atomics
to implement a spinlock and assumed stronger behaviors from volatiles than
\tool's default handling of volatiles as relaxed atomics.
The bug disappeared when we handled volatile loads and stores as load-acquire
and store-release atomics.  
Volatile variables were commonly used to implement atomic memory accesses
before C/C++11. 
However, this usage of volatile is technically incorrect, because the C++ standard
provides no guarantee when volatiles are mixed with atomics,
and weaker behaviors for volatiles can be exhibited by ARM processors. 

We ran both tsan11rec and tsan11 on Silo for 100
runs with 30s each run.  Tsan11rec was not able to reproduce the
weak behaviors that \tool discovered, while tsan11 could reproduce the weak
behaviors 35\% of the time.  Tsan11rec and tsan11 both found racy
accesses on volatile variables that were used to implement a spin
lock.  \Tool did not report an error message for the
volatile races because \tool intentionally elides race warnings for races involving volatiles and atomic accesses or races involving volatiles and volatiles
because volatiles are in practice still commonly used to implement
atomics.

When measuring performance for Silo, we turned off invariant checking.
We measured performances in terms of aggregate throughput reported by Silo. 
\Tool is faster than tsan11 in the single-core configuration,
because reporting data races caused significant overhead for tsan11 in the case of Silo.


\paragraph{Mabain} Mabain is a lightweight key-value store library~\cite{mabain}.  Mabain
contains a few test drivers that insert key-value pairs concurrently
into the Mabain system---we used
\code{mb\_multi\_thread\_insert\_test.cpp}.  All tools
discovered an application bug that caused assertions in the test driver to fail,
although tsan11 required us to set a different number of threads than our standard
test harness to detect it.
For performance measurements, we
turned off assertions in the test driver. All tools found data races in Mabain.

The application bug is as follows. 
The test driver has one asynchronous writer and a few workers.
The workers and the writer communicate via a shared queue protected by a lock. 
The writer consumes jobs (insertion into the database) in the queue
and insert values into the Mabain database, while the workers submit jobs into the queue.
When workers finish submitting all jobs into the queue, the writer is stopped.
However, there is no check to make sure that all jobs in the queue
have been cleared before the writer is stopped. 
Thus, after the writer is stopped, some values may not be found in the Mabain
database, causing assertion failures.

The time reported in Table~\ref{table:benchmark-results} was measured for
inserting 100,000 key-value pairs into the Mabain system. 

\paragraph{GDAX} GDAX~\cite{gdax} implements an in-memory copy of the order book for the GDAX cryptocurrency exchange using a lock-free skip list with garbage collection from the libcds library~\cite{libcds-url}.
The original GDAX fetches data from a server, but we have recorded
input data from a previous run and modified GDAX to read local
data. All tools reported data races in GDAX.

In our experiment, GDAX was run for 120s each time, during which
5 threads kept iterating over the data set. 
We counted the number of iterations the data set was iterated over by each tool
in each run and computed statistics based on 10 runs. 

\paragraph{Iris}  Iris~\cite{iris} is a low latency asynchronous C++ logging library that buffers data using lock-free circular queues.
The test driver we used to measure performance was \code{test\_lfringbuffer.cpp},
in which there is one producer and one consumer. To make the test driver finish in a
timely manner, we reduced the number of ITERATIONS to 1 million in the test driver.
All tools reported data races in Iris.


\paragraph{Firefox JavaScript Engine}

We compiled the Firefox JavaScript engine release 50.0.1 following the instructions for
building the JavaScript shell with Thread Sanitizer given by the developers of Firefox.
\footnote{https://developer.mozilla.org/en-US/docs/Mozilla/Projects/Thread\_Sanitizer}
We tested the JavaScript engine with the JSBench suite,
which contains 25 JavaScript benchmarks, sampled from
real-world applications.
The Python script of JSBench first calculated the arithmetic mean of all 25
benchmarks over 10 runs, and then took the geometric means of the 25
arithmetic mean, as reported in Table~\ref{table:benchmark-results}. 


\subsection{Data Structure Benchmarks}

To assess the ability of \Tool to discover data races, 
we also used the data structure benchmarks that were originally used to
evaluate CDSChecker and subsequently modified to evaluate tsan11 and tsan11rec.
We used the version of the benchmarks available at
\url{https://github.com/mc-imperial/tsan11}.  Note that sleep
statements were added to 6 of these benchmarks to induce some
variability in the schedules explored by the tsan11 \cite{tsan11}.  We
replicated the same timing strategy used in~\cite{tsan11} and
reported times that were the sum of the user time and system time
measured by the \code{time} command.  Due to differences in the
implementation of the sleep statement, sleep time is partially
included in \tool's user time and thus we removed the sleep statements
for \tool to make the comparison fair.
We executed the benchmarks in the all-core configuration to ensure that we did not put tsan11 at a disadvantage since it does not control the thread schedule.

Table~\ref{table:data-structure-results} summarizes the experiment
results for the data structure benchmarks.  The times reported in Table~\ref{table:data-structure-results} were averaged over 500 runs, and
the rate columns report data races detection rates based on
500 runs.  Out of 7 benchmarks, \tool detects data races with rates
higher than tsan11rec in 4 benchmarks and tsan11 in 5 benchmarks.
Tsan11 and tsan11rec did not detect races in chase-lev-deque, but \tool did.
All three tools always detected races in ms-queue.

\section{Related Work}

Related work falls into three categories: model checkers,
fuzzers, and race detectors.

\paragraph{Model Checkers}
In the context of weak hardware memory models, researchers have
developed stateful model
checkers~\cite{sparcmodel,vechevfmcad,dillmodel}.  Stateful model
checkers however are limited by the state explosion problem and have
the general problem of comparing abstractly equivalent but concretely
different program states.

Stateless model checkers have been developed for the C/C++ memory
model.  CDSChecker can model check real-world C/C++ concurrent data
structures~\cite{oopsla2013,toplascdschecker}.
More recent work has led to the development of
other model checking tools that can efficiently check fragments of the
C/C++ memory model~\cite{tracer,rcmc,genmc}.  Recent work on model
checking for sequential consistency has developed partial order
reduction techniques that only explore all reads-from relations and
do not need to explore all sequentially consistent orderings~\cite{optimalstatelessrf}.
Other tools such as Herd~\cite{herd}, Nitpick~\cite{nitpick}, and
CppMem~\cite{c11popl} are intended to help understand the behaviors of
memory models and do not scale to real-world data structures.

\techreport{
CHESS~\cite{chess} is designed to find and reproduce concurrency bugs
in C, C++, and C\#.  It systematically explores thread interleavings.
However, it can miss concurrency bugs for C/C++ as it does not reorder
memory operations.  Line-Up~\cite{lineup} extends CHESS to check for
linearization.  Like CHESS, it can miss bugs that are exposed by
reordering of memory operations.  The Inspect tool combines stateless
model checking and stateful model checking to model check C and C++
code~\cite{inspect1,inspect2,inspect3,inspect4}.  The Inspect tool
checks code using the sequential consistency model rather than the
weaker C/C++ memory model and therefore may miss concurrency bugs
arising from reordered memory operations.}

Dynamic Partial Order Reduction~\cite{dpor} and Optimal Dynamic
Partial Order Reduction~\cite{odpor} seek to make stateless
model checking more efficient by skipping equivalent executions. Maximal causal
reduction~\cite{statelessmcr} further refines the technique with the
insight that it is only necessary to explore executions in which threads
read different values.  Recent work has extended these algorithms to
handle the TSO and PSO memory
models~\cite{nidhugg,mcrpsotso,pldipsotso}.  SATCheck further develops
partial order reduction with the insight that it is only necessary to
explore executions that exhibit new behaviors~\cite{satcheck}.
CheckFence checks concurrent code by translating it into
SAT~\cite{checkfence}.  Despite these advances, model checking faces
fundamental limitations that prevent it from scaling to full
applications.

\paragraph{Fuzzers}

The Relacy race detector~\cite{relacy} explores thread interleavings
and memory operation reorderings for C++ code.  The Relacy race
detector has several limitations that cause it to miss executions
allowed by the C/C++ memory model.  Relacy imposes an execution order
on the program under test in which it executes the program.  Relacy
then derives the modification order from the execution order; it
cannot simulate (legal) executions in which the modification order is
inconsistent with the execution order.

\techreport{
Industry tools like the IBM ConTest tool support testing concurrent
software.  They work by injecting noise into the execution
schedule~\cite{contest}.  While such tools may increase the likelihood
of finding races, they do not precisely control the schedule.  They also do not handle weak memory models like the C/C++ memory model.
}

Adversarial memory increases the likelihood of observing weak memory
system behaviors for the purpose of testing~\cite{adversarialmemory}.
In the context of Java, prescient memory can simulate some of the weak
behaviors allowed by the Java memory model~\cite{prescientmemory}.
Prescient memory however requires that the entire application be
amenable to deterministic record and replay and uses a single
profiling run to generate future values limiting the executions it can
discover.

Concutest-JUnit extends JUnit with checks for concurrent unit tests
and support for perturbing schedules using randomized
waits~\cite{concutestjunit}.  Concurrit is a DSL designed to help
reproduce concurrency bugs~\cite{concurrit}.  Developers write code in
a DSL to help guide Concurrit to a bug reproducing schedule.
CalFuzzer more uniformly samples non-equivalent
thread interleavings by using techniques inspired by partial order
reduction~\cite{calfuzzer}.  These approaches are largely orthogonal
to \tool.

\paragraph{Race Detectors}

Several tools have been designed to detect data races in code that
uses standard lock-based concurrency
control~\cite{goldilocks,racerx,fasttrack,
  maximalracedetection,conflictexceptions}.  These tools
typically verify that all accesses to shared data are protected by a
locking discipline.  They miss higher-level semantic races that occur
when the locks allow unexpected orderings that produce incorrect
results.

\techreport{
Tsan11 extends the tsan tool to support a fragment of the C/C++11
memory model~\cite{tsan11}.  Tsan11 supports a restricted version of
the C/C++ memory model and cannot produce many of the behaviors that
real-world programs may exhibit.  In particular, it requires that the
modification order relation can be extended to a total order (that is
the order in which tsan executes the statements) and thus can only
produce executions in which the modification order for all memory
locations is a total order.  Tsan11 also does not control the thread
schedule---threads execute in whatever order happens to occur.
Tsan11rec extends tsan11 with support for controlled
execution~\cite{tsan11rec}.  It also has same limitation regarding the
fragment of the memory model it supports.}

\techreport{
In~\cite{Zhou2007}, the lockset algorithm~\cite{Savage1997} is
implemented in hardware. However, such data race detection algorithms
have a high false-positives rate due to their inferential nature as
well as their handling of a limited portfolio of synchronization
primitives, namely locks.  Moreover, these approaches were not
designed to detect errors in concurrent data structures based on
atomic operations.
}

\section{Conclusion\label{sec:conclusion}}

We have presented \tool, which implements a novel approach for
efficiently testing C/C++11 programs.  \Tool supports a larger
fragment of the C/C++ memory model than prior work while still
delivering competitive performance to prior systems.  \Tool uses a
constraint-based approach to the modification order that allows
testing tools to make decisions about the modification order implicitly when
they select the store that a load reads from.  \Tool includes a data
race detector that can identify races.  \Tool supports controlled
scheduling for C/C++11 at lower overhead than prior
systems.  Our evaluation shows that \tool can find bugs in all of our
benchmark applications including bugs that were missed by other tools.

\section*{Acknowledgments}

We thank the anonymous reviewers for their thorough and insightful comments.
We are especially grateful to our shepherd 	Caroline Trippel for her feedback.
We also thank Derek Yeh for his work on performance improvement for the \tool tool.
This work is supported by the National Science Foundation grants
CNS-1703598, OAC-1740210, and CCF-2006948.

\bibliographystyle{plain}
\bibliography{confstrs-long,paper}

\begin{thebibliography}{10}

\bibitem{cpp-draft-n4849}
N4849: Working draft, standard for programminglanguage c++.
\newblock
  \url{http://www.open-std.org/jtc1/sc22/wg21/docs/papers/2020/n4849.pdf}, Jan
  2020.

\bibitem{odpor}
Parosh Abdulla, Stavros Aronis, Bengt Jonsson, and Konstantinos Sagonas.
\newblock Optimal dynamic partial order reduction.
\newblock In {\em Proceedings of the 2014 Symposium on Principles of
  Programming Languages}, pages 373--384, 2014.

\bibitem{nidhugg}
Parosh~Aziz Abdulla, Stavros Aronis, Mohamed~Faouzi Atig, Bengt Jonsson, Carl
  Leonardsson, and Konstantinos Sagonas.
\newblock Stateless model checking for {TSO} and {PSO}.
\newblock In {\em Proceedings of the 21st International Conference on Tools and
  Algorithms for the Construction and Analysis of Systems}, pages 353--367,
  2015.

\bibitem{optimalstatelessrf}
Parosh~Aziz Abdulla, Mohamed~Faouzi Atig, Bengt Jonsson, Magnus L{\aa}ng,
  Tuan~Phong Ngo, and Konstantinos Sagonas.
\newblock Optimal stateless model checking for reads-from equivalence under
  sequential consistency.
\newblock {\em Proceedings of ACM on Programming Languages},
  3(OOPSLA):150:1--150:29, October 2019.

\bibitem{tracer}
Parosh~Aziz Abdulla, Mohamed~Faouzi Atig, Bengt Jonsson, and Tuan~Phong Ngo.
\newblock Optimal stateless model checking under the release-acquire semantics.
\newblock {\em Proceedings of the ACM on Programming Languages},
  2(OOPSLA):135:1--135:29, October 2018.

\bibitem{herd}
Jade Alglave, Luc Maranget, and Michael Tautschnig.
\newblock Herding cats: Modelling, simulation, testing, and data mining for
  weak memory.
\newblock {\em ACM Transactions on Programming Languages and Systems},
  36(2):7:1--7:74, July 2014.

\bibitem{gdax}
F.~Eugene Aumson.
\newblock gdax-orderbook-hpp.
\newblock \url{https://github.com/feuGeneA/gdax-orderbook-hpp}, June 2018.

\bibitem{batty2015problem}
Mark Batty, Kayvan Memarian, Kyndylan Nienhuis, Jean Pichon-Pharabod, and Peter
  Sewell.
\newblock The problem of programming language concurrency semantics.
\newblock In {\em European Symposium on Programming Languages and Systems},
  pages 283--307. Springer, 2015.

\bibitem{batty-memory-model-url}
Mark Batty, Scott Owens, Susmit Sarkar, Peter Sewell, and Tjark Weber.
\newblock \url{https://www.cl.cam.ac.uk/~pes20/cpp/cmm.pdf}, 2011.

\bibitem{c11popl}
Mark Batty, Scott Owens, Susmit Sarkar, Peter Sewell, and Tjark Weber.
\newblock Mathematizing {C++} concurrency.
\newblock In {\em Proceedings of the 38th Annual ACM SIGPLAN-SIGACT Symposium
  on Principles of Programming Languages}, 2011.

\bibitem{cpp11spec}
Pete Becker.
\newblock {ISO/IEC 14882:2011}, {Information} technology -- programming
  languages -- {C++}, 2011.

\bibitem{nitpick}
Jasmin~Christian Blanchette, Tjark Weber, Mark Batty, Scott Owens, and Susmit
  Sarkar.
\newblock Nitpicking {C++} concurrency.
\newblock In {\em Proceedings of the 13th International ACM SIGPLAN Symposium
  on Principles and Practices of Declarative Programming}, pages 113--124,
  2011.

\bibitem{mspc14}
Hans Boehm and Brian Demsky.
\newblock Outlawing ghosts: Avoiding out-of-thin-air results.
\newblock In {\em Proceedings of ACM SIGPLAN Workshop on Memory Systems
  Performance and Correctness}, pages 7:1--7:6, June 2014.

\bibitem{seqlocks}
Hans-J. Boehm.
\newblock Can seqlocks get along with programming language memory models?
\newblock In {\em Proceedings of the 2012 ACM SIGPLAN Workshop on Memory
  Systems Performance and Correctness}, pages 12--20, June 2012.

\bibitem{N3786}
Hans-J. Boehm.
\newblock N3786: Prohibiting ``out of thin air'' results in {C++14}.
\newblock
  \url{http://www.open-std.org/jtc1/sc22/wg21/docs/papers/2013/n3786.htm},
  September 2013.

\bibitem{N3710}
Hans-J. Boehm, Mark Batty, Brian Demsky, Olivier Giroux, Paul McKenney, Peter
  Sewell, Francesco~Zappa Nardelli, et~al.
\newblock N3710: Specifying the absence of ``out of thin air'' results
  ({LWG2265}).
\newblock
  \url{http://www.open-std.org/jtc1/sc22/wg21/docs/papers/2013/n3710.html},
  August 2013.

\bibitem{releasesequences}
Hans-J. Boehm, Olivier Giroux, and Viktor Vafeiades.
\newblock P0982r0: Weaken release sequences.
\newblock
  \url{http://www.open-std.org/jtc1/sc22/wg21/docs/papers/2018/p0982r0.html},
  April 2018.

\bibitem{checkfence}
Sebastian Burckhardt, Rajeev Alur, and Milo M.~K. Martin.
\newblock {CheckFence}: Checking consistency of concurrent data types on
  relaxed memory models.
\newblock In {\em Proceedings of the 2007 Conference on Programming Language
  Design and Implementation}, pages 12--21, 2007.

\bibitem{lineup}
Sebastian Burckhardt, Chris Dern, Madanlal Musuvathi, and Roy Tan.
\newblock Line-up: A complete and automatic linearizability checker.
\newblock In {\em Proceedings of the 2010 ACM SIGPLAN Conference on Programming
  Language Design and Implementation}, pages 330--340, 2010.

\bibitem{prescientmemory}
Man Cao, Jake Roemer, Aritra Sengupta, and Michael~D. Bond.
\newblock Prescient memory: Exposing weak memory model behavior by looking into
  the future.
\newblock In {\em Proceedings of the 2016 ACM SIGPLAN International Symposium
  on Memory Management}, pages 99--110, 2016.

\bibitem{satcheck}
Brian Demsky and Patrick Lam.
\newblock {SATCheck}: {SAT}-directed stateless model checking for {SC} and
  {TSO}.
\newblock In {\em Proceedings of the 2015 Conference on Object-Oriented
  Programming, Systems, Languages, and Applications}, pages 20--36, October
  2015.

\bibitem{mabain}
Changxue Deng.
\newblock Mabain: A fast and light-weighted key-value store library.
\newblock \url{https://github.com/chxdeng/mabain}, November 2018.

\bibitem{tlslinux}
Ulrich Drepper.
\newblock {ELF} handling for thread-local storage.
\newblock \url{https://akkadia.org/drepper/tls.pdf}, August 2013.

\bibitem{concurrit}
Tayfun Elmas, Jacob Burnim, George Necula, and Koushik Sen.
\newblock Concurrit: A domain specific language for reproducing concurrency
  bugs.
\newblock In {\em Proceedings of the 34th ACM SIGPLAN Conference on Programming
  Language Design and Implementation}, PLDI '13, pages 153--164, 2013.

\bibitem{goldilocks}
Tayfun Elmas, Shaz Qadeer, and Serdar Tasiran.
\newblock Goldilocks: A race and transaction-aware {Java} runtime.
\newblock In {\em Proceedings of the 2007 ACM SIGPLAN Conference on Programming
  Language Design and Implementation}, pages 245--255, 2007.

\bibitem{racerx}
Dawson Engler and Ken Ashcraft.
\newblock {RacerX}: Effective, static detection of race conditions and
  deadlocks.
\newblock In {\em Proceedings of the Nineteenth ACM Symposium on Operating
  Systems Principles}, pages 237--252, 2003.

\bibitem{fasttrack}
Cormac Flanagan and Stephen~N. Freund.
\newblock {FastTrack}: Efficient and precise dynamic race detection.
\newblock In {\em Proceedings of the 2009 ACM SIGPLAN Conference on Programming
  Language Design and Implementation}, pages 121--133, 2009.

\bibitem{adversarialmemory}
Cormac Flanagan and Stephen~N. Freund.
\newblock Adversarial memory for detecting destructive races.
\newblock In {\em Proceedings of the 2010 ACM SIGPLAN Conference on Programming
  Language Design and Implementation}, pages 244--254, 2010.

\bibitem{dpor}
Cormac Flanagan and Patrice Godefroid.
\newblock Dynamic partial-order reduction for model checking software.
\newblock In {\em Proceedings of the 2005 Symposium on Principles of
  Programming Languages}, pages 110--121, 2005.

\bibitem{statelessmcr}
Jeff Huang.
\newblock Stateless model checking concurrent programs with maximal causality
  reduction.
\newblock In {\em Proceedings of the 2015 Conference on Programming Language
  Design and Implementation}, pages 165--174, 2015.

\bibitem{maximalracedetection}
Jeff Huang, Patrick Meredith, and Grigore Rosu.
\newblock Maximal sound predictive race detection with control flow
  abstraction.
\newblock In {\em Proceedings of the 35th annual ACM SIGPLAN conference on
  Programming Language Design and Implementation (PLDI'14)}, pages 337--348.
  ACM, June 2014.

\bibitem{mcrpsotso}
Shiyou Huang and Jeff Huang.
\newblock Maximal causality reduction for {TSO} and {PSO}.
\newblock In {\em Proceedings of the 2016 ACM SIGPLAN International Conference
  on Object-Oriented Programming, Systems, Languages, and Applications}, pages
  447--461, 2016.

\bibitem{sparcmodel}
Bengt Jonsson.
\newblock State-space exploration for concurrent algorithms under weak memory
  orderings.
\newblock {\em SIGARCH Computer Architecture News}, 36(5):65--71, June 2009.

\bibitem{c11spec}
ISO JTC.
\newblock {ISO/IEC 9899:2011}, {Information} technology -- programming
  languages -- {C}, 2011.

\bibitem{libcds-url}
Max Khiszinsky.
\newblock \url{https://github.com/khizmax/libcds}, Dec 2017.

\bibitem{rcmc}
Michalis Kokologiannakis, Ori Lahav, Konstantinos Sagonas, and Viktor
  Vafeiadis.
\newblock Effective stateless model checking for {C/C++} concurrency.
\newblock {\em Proceedings of the ACM on Programming Languages},
  2(POPL):17:1--17:32, December 2017.

\bibitem{genmc}
Michalis Kokologiannakis, Azalea Raad, and Viktor Vafeiadis.
\newblock Model checking for weakly consistent libraries.
\newblock In {\em Proceedings of the 40th ACM SIGPLAN Conference on Programming
  Language Design and Implementation}, PLDI 2019, pages 96--110, 2019.

\bibitem{vechevfmcad}
Michael Kuperstein, Martin Vechev, and Eran Yahav.
\newblock Automatic inference of memory fences.
\newblock In {\em Proceedings of the Conference on Formal Methods in
  Computer-Aided Design}, pages 111--120, 2010.

\bibitem{l-clocks}
Leslie Lamport.
\newblock Time, clocks, and the ordering of events in a distributed system.
\newblock {\em Communications of the ACM}, 21(7):558--565, July 1978.

\bibitem{tsan11}
Christopher Lidbury and Alastair~F. Donaldson.
\newblock Dynamic race detection for {C++11}.
\newblock In {\em Proceedings of the 44th ACM SIGPLAN Symposium on Principles
  of Programming Languages}, POPL 2017, pages 443--457, New York, NY, USA,
  2017. ACM.

\bibitem{tsan11rec}
Christopher Lidbury and Alastair~F. Donaldson.
\newblock Sparse record and replay with controlled scheduling.
\newblock In {\em Proceedings of the 40th ACM SIGPLAN Conference on Programming
  Language Design and Implementation}, PLDI 2019, pages 576--593, 2019.

\bibitem{conflictexceptions}
Brandon Lucia, Luis Ceze, Karin Strauss, Shaz Qadeer, and Hans Boehm.
\newblock Conflict exceptions: Simplifying concurrent language semantics with
  precise hardware exceptions for data-races.
\newblock In {\em Proceedings of the 37th Annual International Symposium on
  Computer Architecture}, pages 210--221, 2010.

\bibitem{symbiosis}
Nuno Machado, Brandon Lucia, and Lu\'{\i}s Rodrigues.
\newblock Concurrency debugging with differential schedule projections.
\newblock {\em SIGPLAN Not.}, 50(6):586--595, June 2015.

\bibitem{cortex}
Nuno Machado, Brandon Lucia, and Lu\'{\i}s Rodrigues.
\newblock Production-guided concurrency debugging.
\newblock {\em SIGPLAN Not.}, 51(8), February 2016.

\bibitem{deloren}
Pablo Montesinos, Luis Ceze, and Josep Torrellas.
\newblock Delorean: Recording and deterministically replaying shared-memory
  multiprocessor execution efficiently.
\newblock {\em SIGARCH Comput. Archit. News}, 36(3):289--300, June 2008.

\bibitem{chess}
Madanlal Musuvathi, Shaz Qadeer, Piramanayagam~Arumuga Nainar, Thomas Ball,
  Gerard Basler, and Iulian Neamtiu.
\newblock Finding and reproducing {Heisenbugs} in concurrent programs.
\newblock In {\em Proceedings of the Eighth USENIX Symposium on Operating
  Systems Design and Implementation}, pages 267--280, 2008.

\bibitem{oopsla2013}
Brian Norris and Brian Demsky.
\newblock {CDSChecker}: Checking concurrent data structures written with
  {C/C++} atomics.
\newblock In {\em Proceedings of the 2013 Conference on Object-Oriented
  Programming, Systems, Languages, and Applications}, pages 131--150, October
  2013.

\bibitem{toplascdschecker}
Brian Norris and Brian Demsky.
\newblock A practical approach for model checking {C/C++11} code.
\newblock {\em ACM Transactions on Programming Languages and Systems},
  38(3):10:1--10:51, May 2016.

\bibitem{oota}
Peizhao Ou and Brian Demsky.
\newblock Towards understanding the costs of avoiding out-of-thin-air results.
\newblock {\em Proceedings of the ACM on Programming Languages Volume 2 Issue
  OOPSLA}, 2(OOPSLA):136:1--136:29, October 2018.

\bibitem{dillmodel}
Seungjoon Park and David~L. Dill.
\newblock An executable specification and verifier for relaxed memory order.
\newblock {\em IEEE Transactions on Computers}, 48(2):227--235, February 1999.

\bibitem{jsbench}
Gregor Richards, Andreas Gal, Brendan Eich, and Jan Vitek.
\newblock Automated construction of javascript benchmarks.
\newblock In {\em Proceedings of the 2011 ACM International Conference on
  Object Oriented Programming Systems Languages and Applications}, OOPSLA
  ’11, page 677–694, New York, NY, USA, 2011. Association for Computing
  Machinery.

\bibitem{concutestjunit}
Mathias~Guenter Ricken.
\newblock {\em A Framework for Testing Concurrent Programs}.
\newblock PhD thesis, Houston, TX, USA, 2011.
\newblock AAI3463989.

\bibitem{Savage1997}
Stefan Savage, Michael Burrows, Greg Nelson, Patrick Sobalvarro, and Thomas
  Anderson.
\newblock Eraser: A dynamic data race detector for multithreaded programs.
\newblock {\em ACM Transactions on Computer Systems}, 15:391--411, November
  1997.

\bibitem{calfuzzer}
Koushik Sen.
\newblock Effective random testing of concurrent programs.
\newblock In {\em Proceedings of the Twenty-second IEEE/ACM International
  Conference on Automated Software Engineering}, ASE '07, pages 323--332, 2007.

\bibitem{silocode}
Stephen Tu, Wenting Zheng, and Eddie Kohler.
\newblock Silo: Multicore in-memory storage engine.
\newblock \url{https://github.com/stephentu/silo}, March 2015.

\bibitem{silopaper}
Stephen Tu, Wenting Zheng, Eddie Kohler, Barbara Liskov, and Samuel Madden.
\newblock Speedy transactions in multicore in-memory databases.
\newblock In {\em Proceedings of the Twenty-Fourth ACM Symposium on Operating
  Systems Principles}, SOSP '13, pages 18--32, 2013.

\bibitem{googleuser}
Paul Turner.
\newblock User-level threads...with threads.
\newblock
  \url{https://blog.linuxplumbersconf.org/2013/ocw/system/presentations/1653/original/LPC\%20-\%20User\%20Threading.pdf\#4},
  2013.

\bibitem{contest}
Shmuel Ur.
\newblock Testing and debugging concurrent software – challenges and
  solutions.
\newblock
  \url{https://www.research.ibm.com/haifa/conferences/hvc2010/present/Testing_and_Debugging_Concurrent_Software.pdf},
  2010.

\bibitem{vafeiadis2013relaxed}
Viktor Vafeiadis and Chinmay Narayan.
\newblock Relaxed separation logic: A program logic for c11 concurrency.
\newblock In {\em Proceedings of the 2013 ACM SIGPLAN International Conference
  on Object-Oriented Programming, Systems, Languages, and Applications}, pages
  867--884, 2013.

\bibitem{relacy}
Dmitriy Vyukov.
\newblock Relacy race detector.
\newblock \url{http://relacy.sourceforge.net/}, October 2011.

\bibitem{inspect2}
Chao Wang, Yu~Yang, Aarti Gupta, and Ganesh Gopalakrishnan.
\newblock Dynamic model checking with property driven pruning to detect race
  conditions.
\newblock {\em ATVA LNCS}, (126--140), 2008.

\bibitem{inspect4}
Yu~Yang, Xiaofang Chen, Ganesh Gopalakrishnan, and Robert~M. Kirby.
\newblock Distributed dynamic partial order reduction based verification of
  threaded software.
\newblock In {\em Proceedings of the 14th International {SPIN} Conference on
  Model Checking Software}, pages 58--75, 2007.

\bibitem{inspect3}
Yu~Yang, Xiaofang Chen, Ganesh Gopalakrishnan, and Robert~M. Kirby.
\newblock Efficient stateful dynamic partial order reduction.
\newblock In {\em Proceedings of the Fifteenth International SPIN Workshop},
  pages 288--305, August 2008.

\bibitem{inspect1}
Yu~Yang, Xiaofang Chen, Ganesh Gopalakrishnan, and Chao Wang.
\newblock Automatic discovery of transition symmetry in multithreaded programs
  using dynamic analysis.
\newblock In {\em Proceedings of the 16th International {SPIN} Workshop on
  Model Checking Software}, pages 279--295, 2009.

\bibitem{pldipsotso}
Naling Zhang, Markus Kusano, and Chao Wang.
\newblock Dynamic partial order reduction for relaxed memory models.
\newblock In {\em Proceedings of the 36th ACM SIGPLAN Conference on Programming
  Language Design and Implementation}, pages 250--259, 2015.

\bibitem{Zhou2007}
Pin Zhou, Radu Teodorescu, and Yuanyuan Zhou.
\newblock {HARD}: Hardware-assisted lockset-based race detection.
\newblock In {\em Proceedings of the 2007 IEEE 13th International Symposium on
  High Performance Computer Architecture}, pages 121--132, 2007.

\bibitem{iris}
Xinjing Zhou.
\newblock Iris: A low latency asynchronous {C++} logging library.
\newblock \url{https://github.com/zxjcarrot/iris}, October 2015.

\end{thebibliography}

\paper{\clearpage
%
%
%
%
%



\appendix
\section{Artifact Appendix}

\subsection{Abstract}

The artifact contains a \code{c11tester-vagrant} directory and a \code{tsan11-tsan11rec-docker} directory.
The \code{c11tester-vagrant} directory is a vagrant repository that
compiles source codes for C11Tester, LLVM, the companion compiler pass,
and benchmarks for C11Tester.
The \code{tsan11-tsan11rec-docker} directory contains benchmarks and
a docker image with prebuilt LLVMs for tsan11 and tsan11rec.
We had attempted to install tsan11 and
tsan11rec in the same VM as C11Tester.  However, tsan11rec became significantly slower
and some benchmarks were even unrunnable under tsan11rec.  So we had to build tsan11,
tsan11rec, and benchmarks under the same environment as provided by their
artifact documentations.

\subsection{Artifact Check-List (Meta-Information)}

{\small
\begin{itemize}
  \item {\bf Algorithm: } Testing/race detection algorithm for C/C++ memory model.
  \item {\bf Program: } C11Tester.
  \item {\bf Compilation: } Clang 8.0.0 for C11Tester, Clang 3.9.0 for tsan11, and Clang 4.0.0 for tsan11rec.
  \item {\bf Transformations: } An LLVM pass.
  \item {\bf Binary: } Modified LLVMs for tsan11 and tsan11rec are included in the docker image.
  \item {\bf Run-time environment: } Ubuntu 18.04 for C11Tester and Ubuntu 14.04 for tsan11 and tsan11rec.
  \item {\bf Hardware: } An Intel x86 machine with 6 cores.
  \item {\bf Execution: } Automated via shell scripts.
  \item {\bf Metrics: } Execution time, data race detection rate, assertion detection rate.
  \item {\bf Output: } Numerical results printed in console.
  \item {\bf Experiments: } GDAX, Iris, Silo, Mabain, the Javascript Engine of Firefox, a broken seqlock, a broken reader-writer lock, and some data structure benchmarks.  We measure both the performance (execution time or throughput) and the ability to detect data races and assertions. 
  \item {\bf How much disk space required (approximately)?: } 10G for the VM that contains C11Tester and 15G for the docker container that contains tsan11 and tsan11rec.
  \item {\bf How much time is needed to prepare workflow (approximately)?: } About 40 minutes for compilation.
  \item {\bf How much time is needed to complete experiments (approximately)?: } 2 hours for C11Tester, 3 hours for tsan11, and 6.5 hours for tsan11rec.
  \item {\bf Publicly available?: } Yes.
  \item {\bf Code licenses (if publicly available)?: } GNU GPL v2.
  \item {\bf Data licenses (if publicly available)?: } Varies depending on benchmark.
  \item {\bf Workflow framework used?: } Vagrant \& scripts are provided to automate the measurements.
  \item {\bf Archived (provide DOI)?: } https://doi.org/10.1145/3410278
\end{itemize}

\subsection{Description}

\subsubsection{How to Access}

The artifact is available at: \url{https://doi.org/10.1145/3410278}

\subsubsection{Hardware Dependencies}
An Intel x86 CPUs with at least 6 cores and at least 40G RAM is required to reproduce results.  The VM for C11Tester requires 40G RAM because one particular benchmark
(GDAX) consumes 36G RAM under C11Tester. 
Experiments additionally require CPUs to have Intel VT-d support. 

\subsubsection{Software Dependencies}
C++ Compiler, CMake, Clang, LLVM Compiler Infrastructure, Docker, Vagrant, and VirtualBox.

\subsection{Installation}
First download the artifact and extract it.
The extracted file contains two folders: \code{c11tester-vagrant}
and \code{tsan11-tsan11rec-docker}.\\
\code{\$ cd c11tester-artifact} \\

To build C11Tester and benchmarks using Vagrant:\\
\code{\$ cd c11tester-vagrant} \\
\code{\$ vagrant up} \\


The \code{tsan11-tsan11rec-docker} folder contains a docker image named \code{tsan11-tsan11rec-image.tar.gz}
with prebuilt LLVMs for tsan11 and tsan11rec. 
For instructions on creating docker containers from the docker image,
please see the README.md file in the \code{tsan11-tsan11rec-docker} repository.

To find the IP address of the container (assuming the container is named tsan11-tsan11rec-container):\\
\code{\$ docker inspect tsan11-tsan11rec-container}\\

Then use \code{scp} to copy the \code{scripts} and \code{src} directories
in the \code{tsan11-tsan11rec-docker} folder
to the container (replace 172.17.0.2 by the container's IP address):\\
\code{\$ scp -i insecure\_key -r scripts root@172.17.0.2:/data}\\
\code{\$ scp -i insecure\_key -r src root@172.17.0.2:/data}\\

Logging into the container as root (replace 172.17.0.2 by the container's IP address): \\
\code{\$ ssh -i insecure\_key root@172.17.0.2}\\

After logging into the docker container, to build benchmarks
for tsan11 and tsan11rec: \\
\code{\# ./data/scripts/setup.sh}

\subsection{Experiment Workflow}
Scripts are provided to run experiments.
To run experiments for C11Tester, logging into the Vagrant VM: \\
\code{\$ cd \textasciitilde/c11tester-benchmarks} \\
\code{\$ ./do\_test\_all.sh} \\

To run experiments for tsan11, logging into the docker container: \\
\code{\# cd /data/tsan11-benchmarks} \\
\code{\# ./do\_test\_all.sh} \\

To run experiments for tsan11rec, inside the same docker container: \\
\code{\# cd /data/tsan11rec-benchmarks} \\
\code{\# ./do\_test\_all.sh} \\

\subsection{Evaluation and Expected Results}

Once the workflow is completed, the data race detection rates and assertion rates for
data structure benchmarks are printed in the console. 

For application benchmarks, the result for each benchmark is written to
log files (such \code{gdax.log} and \code{silo.log}, etc).  These log files
are stored in \code{all-core/} and \code{single-core/} directories under
the benchmark directories \code{c11tester-benchmarks}, \code{tsan11-benchmarks},
and \code{tsan11rec-benchmarks}. 

The \code{do\_test\_all.sh} script also executes the python script \code{calculator.py}
that prints out result summaries for all of five application benchmarks executed under
both the all-core and single-core configurations. 
Each benchmark directory has this python script.  If you wish to regenerate
result summaries from log files using the python script, you can first
go to one benchmark directory (we will use \code{c11tester-benchmarks} as an example here): \\
\code{\$ cd \textasciitilde/c11tester-benchmarks} \\
and type:\\
\code{\$ python calculator.py all-core} \\
or \\
\code{\$ python calculator.py single-core} \\
to print out result summaries for all of five application benchmarks
executed under the all-core of single-core configuration. 

\subsection{Experiment Customization}
In the benchmark directory, the two scripts
\begin{itemize}
\item \code{tsan11-missingbug/test.sh}
\item \code{cdschecker\_modified\_benchmarks/test.sh}
\end{itemize}
can be customized to run different times by changing the shell
variable \code{TOTAL\_RUN}.

The \code{run.sh} and \code{app\_assertion\_test.sh} scripts
in the benchmark directory accept an optional argument that specifies how many times
the test programs are run.  The default is 10 times. 
Besides that, you can also decide which test program are run by
modifying the \code{TESTS} variable in these two scripts.

\subsection{Notes}
Tsan11 may occasionally get stuck when testing Silo in the single-core configuration.
If this happens, we suggest to rerun Silo individually by customizing the 
\code{tsan11-benchmarks/run.sh} script.



}
\numberwithin{equation}{section}
\setcounter{theorem}{0}

\clearpage

\appendix
\mycomment{
\section{Proofs of Theorem~\ref{thm:main} \label{sec:appendix-proof}}

To prove the correctness of Theorem~\ref{thm:main}, we first prove three Lemmas.
Lemma~\ref{thm:lemma1} and Lemma~\ref{thm:lemma2} characterize some important
properties of \textit{mo-graph} clock vectors.  Lemma~\ref{thm:lemma-backward} proves
one direction in Theorem~\ref{thm:main}.  \textit{Mo-graph} clock vectors are
simply referred to as clock vectors in the following context. 

\begin{lemma} \label{thm:lemma1}
Let $C_0 \stackrel{\mo}{\rightarrow} C_1 \stackrel{\mo}{\rightarrow} ... \stackrel{\mo}{\rightarrow} C_{n-1} \stackrel{\mo}{\rightarrow} C_n$
be a path in a modification order graph $G$, such that
$CV_{C_0} \leq ... \leq CV_{C_n}$.
Then if any new edge $E$ is added to $G$ using procedures in Figure~\ref{fig:mopseudo},
it holds that 
\begin{align}
  CV_{C_0}' \leq ... \leq CV_{C_n}' \label{eq:lemma1-main}
\end{align}
for the updated clock vectors.
We define $CV_{C_i}' := CV_{C_i}$ if the values of $CV_{C_i}$ are not actually updated.
\end{lemma}

\begin{proof}
To simplify notation, we define $CV_i := CV_{C_i}$ for all $i \in \{0...,n\}$.
Let's first consider the case where no \code{rmw} edge is added, \ie the \textsc{AddRMWEdge}
procedure is not called.

By the definition of the union operator, each slot in clock vectors is monotonically increasing
when the \textsc{Merge} procedure is called. By the structure of procedure \textsc{AddEdge}'s
algorithm, a node $X$ is added to $Q$ if and only if this node's clock vector
is updated by the \textsc{Merge} procedure.

Let's assume that adding the new edge $E$ updates any of $CV_0, ..., CV_n$.
Otherwise, it is trivial.  Let $i$ be the smallest integer in $\{0, ..., n\}$
such that $CV_i$ is updated.  Then $CV_k' = CV_k$ for
all $k \in I := \{0, ..., i-1\}$, and we have
\begin{align}
  CV_0' \leq ... \leq CV_i'. \label{eq:lemma1-1}
\end{align}
If $i = 0$, then we take $I = \varnothing$. There are two cases.

\textbf{Case 1}: Suppose $CV_i' \leq CV_j$ for some $j \in \{i+1, ..., n\}$,
let $j_0$ be the smallest such integer.
Then $ CV_k' = CV_k$ for all $k \in \{j_0, ..., n\}$, as nodes $\{C_{j_0},...,C_n\}$ 
will not be added to $Q$ in the \textsc{AddEdge} procedure, and
\begin{align}
  CV_{j_0}' \leq ... \leq CV_n' \label{eq:lemma1-2}
\end{align}
holds trivially.  By line~\ref{addedge:merge} to line~\ref{addedge:merge-end}
in the \textsc{AddEdge} procedure, we have
\begin{align}
  CV_k' = CV_k \cup CV_{k-1}', \label{eq:lemma1-3}
\end{align}
for all $k \in S := \{i+1, ..., j_0-1\}$.  If $j_0$ happens to be $i+1$,
then take $S = \varnothing$.
And we have for all $k \in S$, $CV_{k-1}' \leq CV_k'$. 
Then combining with inequality (\ref{eq:lemma1-1}), we have
\[ CV_0' \leq ... \leq CV_i \leq ... \leq CV_{j_0-1}'. \]
Together with inequality (\ref{eq:lemma1-2}), we only need to
show that $CV_{j_0-1}' \leq CV_{j_0}'$ to complete the proof.

If $j_0 = i+1$, then we are done, because by assumption
$CV_i' \leq CV_{j_0} = CV_{j_0}'$.
If $j_0 > i+1$, then $CV_i' \leq CV_{j_0}$ and $CV_{i+1} \leq CV_{j_0}$ implies
that \[CV_{i+1}' = CV_{i+1} \cup CV_i' \leq CV_{j_0} = CV_{j_0}'. \]
Based on equation (\ref{eq:lemma1-3}), we can deduce in a similar way that
$CV_{i+2}' \leq ... \leq CV_{j_0-1}' \leq CV_{j_0}'$.

\textbf{Case 2}:
Suppose $CV_i \nleq CV_j$ for all $j \in \{i+1, ..., n\}$.
Then by line~\ref{addedge:merge} to line~\ref{addedge:merge-end}
in the \textsc{AddEdge} procedure, all nodes $\{C_i, ..., C_n\}$ are added to
$Q$ in the \textsc{AddEdge} procedure, and
\[ CV_k' = CV_k \cup CV_{k-1}', \]
for all $k \in S := \{i+1, ..., n\}$.  This recursive formula guarantees that
for all $k \in S$, \[ CV_{k-1}' \leq CV_k'. \]
Therefore, combining with inequality (\ref{eq:lemma1-1}), we have
\[ CV_0' \leq ... \leq CV_n'. \]

Now suppose the newly added edge $E$ is a \code{rmw} edge. 
If $E: X \xrightarrow{\textit{rmw}} C_i$ where $i \in \{0,...,n\}$
and $X$ is some node not in path $P$,
then the path $P$ remains unchanged and \Call{AddEdge}{$X$,$C_i$} is called.
Then the above proof shows that inequality (\ref{eq:lemma1-main}) holds. 
If $E:C_i \xrightarrow{\textit{rmw}} X$, then $C_i \stackrel{\mo}{\rightarrow} C_{i+1}$
is migrated to $X \stackrel{\mo}{\rightarrow} C_{i+1}$ by
line~\ref{addrmwedge:migrate-start} to line~\ref{addrmwedge:migrate-end}
in the \textsc{AddRMWEdge} procedure, and $C_i \stackrel{\mo}{\rightarrow} X$ is added.

If $X$ is not in path $P$, then path $P$ becomes
\[C_0 \stackrel{\mo}{\rightarrow} ... \stackrel{\mo}{\rightarrow} C_i \stackrel{\mo}{\rightarrow} X \stackrel{\mo}{\rightarrow} C_{i+1} \stackrel{\mo}{\rightarrow} ... \stackrel{\mo}{\rightarrow} C_n. \]
Since \Call{AddEdge}{$C_i$,$X$} is called, the same proof in the case without \code{rmw} edges
applies.  If $X$ is in path $P$, then $X$ can only be $C_{i+1}$
and the path $P$ remains unchanged.
Otherwise, a cycle is created and this execution is invalid. 
In any case, the same proof applies. 
\end{proof}

Let $\vec{x} = (x_1,x_2,...,x_n)$.
We define the projection function $U_i$ that extracts the $i^{\textit{th}}$
position of $\vec{x}$ as
$U_i(\vec{x}) = x_i,$ where we assume $i \leq n$.

\begin{lemma} \label{thm:lemma2}
Let $A$ be a store with sequence number $s_A$ performed
by thread $i$ in an acyclic modification order graph $G$.
Then $U_i(CV_A) = U_i(\perp_{CV_A}) = s_A$ throughout each execution that terminates. 
\end{lemma}

\begin{proof}
We will prove by contradiction.
Let $S = \{A_1, A_2, ...\}$ be the sequence of stores performed by thread $i$
with sequence numbers $\{s_1, s_2, ...\}$ respectively.
Suppose that there is a point of time in a terminating execution such that the first store $A_n$ 
in the sequence with $U_i(CV_{A_n}) > s_n$ appears.
Sequence numbers are strictly increasing and
by the \textsc{Merge} procedure, $U_i(CV_{A_n}) \in \{s_{n+1},s_{n+2},...,\}$.
Let $U_i(CV_{A_n}) = s_N$ for some $N > n$.

For $U_i(CV_{A_n})$ to increase to $s_N$ from $s_n$,
$CV_{A_n}$ must be merged with the clock vector of some node $X$ (\ie some store $X$)
in $G$ such that $U_i(CV_X) = s_N$.  Such node $X$ is modification ordered before $A_n$.

If $X$ is performed by thread $i$, then $X$ has to be the store
$A_N$, because $U_i(CV_{A_j})$ is unique for all stores $A_j$
in the sequence $S$ other than $A_n$.
Then $\perp_{CV_X} \geq \perp_{CV_{A_n}}$.  By the definition of initial values
of clock vectors and sequence numbers, $X$ happens after
and is modification ordered after $A_n$.
However, $X$ is also modification ordered before $A_n$, and we have a cycle in the graph $G$.
This is a contradiction.

If $X$ is not performed by thread $i$, then $U_i(\perp_{CV_X}) = 0$.
For $U_i(CV_X)$ to be $s_N$,
$X$ must be modification ordered after by some store
$Y$ in $G$ such that $U_i(CV_Y) = s_N$.
If $Y$ is done by thread $i$, then the same argument in the last paragraph
leads to a contradiction; otherwise, by repeating the same argument as in
this paragraph finitely many times (there are only a finite number
of stores in such a terminating execution), we would eventually deduce that
$X$ is modification ordered after some store by thread $i$. Hence, we would have
a cycle in $G$, a contradiction. 

Therefore, the claim is proved. 
\end{proof}

\begin{lemma} \label{thm:lemma-backward}
Let $A$ and $B$ be two nodes that write to the same location in
an acyclic modification order graph $G$.  If $B$ is reachable from $A$ in $G$,
then $CV_A \leq CV_B$.
\end{lemma}

\begin{proof}
Suppose that $B$ is reachable from $A$ in $G$.
Let $A \stackrel{\mo}{\rightarrow} C_1 \stackrel{\mo}{\rightarrow} ... \stackrel{\mo}{\rightarrow} C_{n-1} \stackrel{\mo}{\rightarrow} B$ be the shortest path $P$ from $A$ to $B$ in graph $G$.
To simplify notation, $X \stackrel{\mo}{\rightarrow} Y$ is abbreviated as
$X \rightarrow Y$ in the following.
As the \textsc{AddRMWEdge} procedure calls the \textsc{AddEdge} procedure
to create an \textit{mo} edge, we can assume that all the \textit{mo} edges
in $P$ are created by directly calling \textsc{AddEdge}. 

\textbf{Base Case 1}: Suppose the path $P$ has length 1, \ie $A$ immediately precedes $B$.
Then when the edge $A \rightarrow B$ was formed by calling \Call{AddEdge}{$A$,$B$},
$CV_B$ was merged with $CV_A$ in line~\ref{addedge:merge}
of the \textsc{AddEdge} procedure.  In other words,
$CV_B = CV_B \cup CV_A \geq CV_A.$

\textbf{Base Case 2}: Suppose the path $P$ has length 2, \ie $A \rightarrow C_1 \rightarrow B$. There are two cases:

(a) If $A \rightarrow C_1$ was formed first, then $CV_A \leq CV_{C_1}$.
When $C_1 \rightarrow B$ was formed, $CV_B$ was merged with $CV_{C_1}$ and $CV_{C_1} \leq CV_B$.
According to Lemma~\ref{thm:lemma1}, adding the edge $C_1 \rightarrow B$ or any edge
not in path $P$ (if any such edges were formed before $C_1 \rightarrow B$ was formed)
to $G$ would not break the inequality $CV_A \leq CV_{C_1}$.
It follows that $CV_A \leq CV_{C_1} \leq CV_B$.

(b) If $C_1 \rightarrow B$ was formed first, then 
$CV_{C_1} \leq CV_B$.  Based on Lemma~\ref{thm:lemma1}, this inequality remains true
when $A \rightarrow C_1$ was formed.  Therefore $CV_A \leq CV_{C_1} \leq CV_B$.

\textbf{Inductive Step}: 
Suppose that $B$ being reachable from $A$ implies that $CV_A \leq CV_B$ for
all paths with length $k$ or less, where $k$ is some number greater than 2.
We want to prove that the same holds for paths with length $k + 1$.
Let $P$ be a path from $A$ to $B$  with length $k+1$,
\[ P: A = C_0 \rightarrow C_1 \rightarrow ... \rightarrow C_{k} \rightarrow C_{k+1} = B. \]
We denote $A$ as $C_0$ and $B$ as $C_{k+1}$ in the following.

Let $E: C_i \rightarrow C_{i+1}$ be the last edge formed in path $P$, where $i \in \{0,...,k\}$.
Then before edge $E$ was formed, the inductive hypothesis
implies that $CV_{C_0} \leq ... \leq CV_{C_i}$
and $CV_{C_{i+1}} \leq ... \leq CV_{C_{k+1}}$,
because both $C_0 \rightarrow ... \rightarrow C_i$ and
$C_{i+1} \rightarrow ... \rightarrow C_{k+1}$ have length $k$ or less.
Lemma~\ref{thm:lemma1} guarantees that 
\begin{align*}
  CV_{C_0} &\leq ... \leq CV_{C_i}, \\
  CV_{C_{i+1}} &\leq ... \leq CV_{C_{k+1}}
\end{align*}
remain true if any edge not in path $P$ was added to $G$
as well as the moment when $E$ was formed.
Therefore when the edge $E$ was formed,
we have $CV_{C_i} \leq CV_{C_{i+1}}$, and
\[ CV_A = CV_{C_0} \leq ... \leq CV_{C_{k+1}} = CV_B. \]
\end{proof}

\begin{theorem}
Let $A$ and $B$ be two nodes that write to the same location in
an acyclic modification order graph $G$ for a terminating execution.
Then $CV_A \leq CV_B$ if and only if $B$ is reachable from $A$ in $G$.
\end{theorem}

\begin{proof}
The backward direction has been proved by Lemma~\ref{thm:lemma-backward}, so
we only need to prove the forward direction. 
Suppose that $CV_A \leq CV_B$. Let's first consider the situation
where the graph $G$ does not contain \code{rmw} edges.

\textbf{Case 1}:
$A$ and $B$ are two stores performed by the same thread with thread id $i$.
Then it is either $A$ happens before $B$ or $B$ happens before $A$. 
If $A$ happens before $B$, then $A$ precedes $B$ in the modification order
because $A$ and $B$ are performed by the same thread.
Hence $B$ is reachable from $A$ in $G$.  We want to show that the other
case is impossible.

If $B$ happens before $A$ and hence precedes $A$ in the modification order,
then $A$ is reachable from $B$. 
By Lemma~\ref{thm:lemma-backward}, $A$ being reachable from $B$ implies that
$CV_B \leq CV_A$.  Since $CV_A \leq CV_B$ by assumption, we deduce that
$CV_A = CV_B$.  This is impossible according to Lemma~\ref{thm:lemma2},
because each store has a unique sequence number and
$U_i(CV_A) = s_A \neq s_B = U_i(CV_B)$, implying that $CV_A \neq CV_B$.

\textbf{Case 2}: $A$ and $B$ are two stores done by different threads.
Suppose that $A$ is performed by thread $i$.
Let $CV_A = (...,s_A,...)$ and $CV_B = (...,t_b,...)$ where both $s_A$
and $t_b$ are in the $i^{\textit{th}}$ position.
By assumption, we have $0 < s_A \leq t_b$.

Since $B$ is not performed by thread $i$, we have $U_i(\perp_{CV_B}) = 0$.
We can apply the same argument similar to the second, third and fourth paragraphs
in the proof of Lemma~\ref{thm:lemma2} and deduce that
$B$ is modification ordered after $A$ or some store sequenced after $A$.
Since modification order is consistent with \textit{sequenced-before} relation,
if follows that $B$ is reachable from $A$ in graph $G$.

Now, consider the case where \code{rmw} edges are present.
Since adding a \code{rmw} edge from a node $S$ to a node $R$
transfers to $R$ all outgoing \textit{mo} edges coming from $S$ and adds a normal
\textit{mo} edge from $S$ to $R$.  Any updates in $CV_S$ are still propagated
to all nodes that are reachable from $S$. By the construction of \code{rmw} edges,
the above argument still applies.
\end{proof}
}

\section{Proof of Equivalence Between Operational and Axiomatic Models\label{sec:equivalence}}
We first present the formalization of our axiomatic model. 
We then show how to lift a trace produced by our operational model to an
axiomatic-style execution and prove that lifting the set of traces
produced by our operational model gives rise to executions that exactly
match the executions allowed by our restricted axiomatic model.
We refer to the axiomatic model that is based on the C++11 memory model
but incorporates the first and the third changes described in
Section~\ref{sec:restricted-model} below as the \textit{modified C++11 memory model}.
We refer to the axiomatic model presented in Section~\ref{sec:restricted-model}
as the \textit{restricted axiomatic model} or \textit{our axiomatic model}.
Our axiomatic model is stronger than the modified C++11 memory model. 

We also introduce some notations in this Section. 
Let $P$ be a program written in our language described in
Figure~\ref{fig:grammar} of the paper.  Let $\Consistent(P)$ denote
the set of executions allowed by the modified C++11 memory model,
$\rConsistent(P)$ denote the set of executions allowed by our axiomatic memory model,
and $\traces(P)$ denote the set of traces produced by our operational model.
We use $\trace{}$ to denote an individual trace, which is a finite sequence of state transitions,
\ie \[ \trace{} = \myState{0} \xrightarrow{\myTrans{1}} \myState{1} \xrightarrow{\myTrans{2}} ... \xrightarrow{\myTrans{m}} \myState{m} \]
The set of axiomatic-style executions obtained by lifting a trace $\trace{}$
is denoted as $\lift(\trace{})$.  Lifting a trace gives rise to
a set of executions because the extension of the $\Mograph$ is not unique,
as explained in Section~\ref{sec:lifting-traces}.
We write $\overline{\lift}(\sigma)$
when we wish to refer to a single execution in $\lift(\trace{})$.

\subsection{Restricted Axiomatic Model \label{sec:restricted-model}}
We present the formalization of our axiomatic model by making following changes
to the formalization of Batty \etal~\cite{batty-memory-model-url,c11popl}:

{\bf 1) Use the C/C++20 release sequence definition:}
This corresponds
to changing the definition of "rs\_element" (in Section 3.6 of their formalization)
by dropping the "same\_thread a rs\_head" term.

{\bf 2) Add $\hb \cup \sco \cup \rf$ is acyclic:} This is implemented by
adding the following to their formalization in Section 3:

\begin{itemize}
 \item Add the following definition:\\
 "let \textit{acyclic\_hb\_sc\_rf actions} $\hb$ $\sco$ $\rf$ =
 irrefl \textit{actions} tc ($\hb \cup \sco \cup \rf$)"

 \item Add the following term to the conjunct in Section 3.11:\\
 "acyclic\_hb\_sc\_rf \textit{Xo.actions $\hb$ Xw.$\sco$ Xw.$\rf$} $\land$"
\end{itemize}

{\bf 3) Strengthen consume atomics to acquire:} This is implemented
with the following two changes in Section 2.1:
\begin{itemize}
 \item Make is\_consume always false.
 \item Change the MO\_CONSUME case for \textit{is\_acquire a}
 to "is\_read \textit{a} $\lor$ is\_fence \textit{a}".
\end{itemize}

Therefore, the set of executions allowed by our restricted axiomatic model
can be expressed as:
\[ \rConsistent(P) = \Consistent(P) \land \hbox{\textit{acyclic}}(\hb \cup \sco \cup \rf). \]
Since we do not consider the consume memory ordering, the $\hb$
relation is the transitive closure of $\sbo$ and $\sw$.
In the simple language described in Figure~\ref{fig:grammar}, an $\sw$
edge is either an \textit{additional synchronizes with} ($\asw$) edge,
an $\rf$ edge, or a combination of $\sbo$ and $\rf$ edges
(release-acquire synchronization involving fences).  Therefore, we have
\[ \sbo \cup \asw \subseteq \hb \subseteq \sbo \cup \asw \cup \rf, \]
and we can deduce that
\[ \sbo \cup \asw \cup \sco \cup \rf \subseteq \hb \cup \sco \cup \rf \subseteq \sbo \cup \asw \cup \sco \cup \rf, \]
which implies that $\hb \cup \sco \cup \rf = \sbo \cup \asw \cup \sco \cup \rf$.
Therefore,
\[ \rConsistent(P) = \Consistent(P) \land \hbox{\textit{acyclic}}(\sbo \cup \asw \cup \sco \cup \rf). \]

\subsection{Lifting Traces} \label{sec:lifting-traces}
We need to extend our operational states with auxiliary labels in order to track events.
We define a label as $\textit{Label} \triangleq \{1,2,3,...\} \cup \{\perp\}$.
We extend $\ThrState$ with a \textit{last sequenced before} ($\lsb$) label and
a \textit{last additional synchronizes with} ($\lasw$) label to track $\sbo$ and $\asw$ relations,
and $\SysState$ with a \textit{last sequentially consistent} ($\lsc$) label to track
$\sco$ relations.  The $\lasw$ label stores the last instruction the parent thread
performs before forking a new thread.  Each load, store, or RMW element
will have an \textit{event} label representing its unique event id. 

The $\code{Load}$, $\code{Store}$, $\code{RMW}$, and $\code{Fence}$ in Figure~\ref{fig:grammar}
correspond to atomic load, store, RMW, and fence events in an execution.  
Loads from and stores to $\code{LocNA}$ correspond to non-atomic reads and writes in an execution. 
The event labels inside \textit{LoadElem}, \textit{StoreElem}, \textit{RMWElem},
and \textit{FenceElem} will match the event ids of 
their corresponding events in the execution.

We will describe how to lift traces in the following.  
The instructions referred to below are the ones that create events in an execution.  

When a thread $T$ performs an instruction and $T.\lsb \neq \perp$,
an $\sbo$ edge is created from $T.\lsb$ to the current instruction.  
Similarly, when a seq\_cst instruction is performed and $\System.\lsc \neq \perp$,
an $\sco$ edge is created from $\System.\lsc$ to the current seq\_cst instruction.
The $\rf$ edges can be created by inspecting traces and checking the $\rf$ fields of
\textit{LoadElem}s and \textit{RMWElem}s.

The $\asw$ edges can be created in two ways: a) When a thread $T$ performs a \code{Fork}
instruction, creating a new thread $T'$, the new thread $T'$ stores $T.\lsb$ in
the field $T'.\lasw$.  Then when $T'$ performs an instruction and $T'.\lasw \neq \perp$,
a $\asw$ edge is created;  b) When thread $T'$ has finished, the parent thread $T$
performs a \code{Join} instruction with the thread id of $T'$.
If $T'.\lsb \neq \perp$, an $\asw$ edge
is created when $T$ performs the next instruction.  

The $\Mograph$ of a trace models $\mo$ relations in an execution.
However, the $\Mograph$ at a memory location may sometimes only capture
a partial order over all stores to the location.  We know that a partial order can
always be extended to a total order.
Therefore, we can perform a topological sort of the $\Mograph$ at each memory location,
and extend the obtained $\mo$ relations to total orders if necessary. 
Then, we can create $\mo$ edges
in the lifted trace based on the extended $\mo$ relations. 
Since linear extensions of a partial order are not unique,
lifting a trace may give rise up multiple axiomatic-style executions. 

Since the operational model requires atomic loads to read from stores that have been
processed by the operational model, and the $\sbo$, $\sco$, and $\asw$ edges are constructed
to be consistent with the program order in the lifting process, we have that 
($\sbo \cup \asw \cup \rf \cup \sco$) is acyclic.  We summarize this relation
as the Lemma below. 

\begin{lemma} \label{lemma:acyclic}
Let $P$ be an arbitrary program and $\sigma \in \traces(P)$. 
Then for any $\Exec{} \in \lift(\sigma)$, the union of
$\sbo$, $\asw$, $\sco$, and $\rf$ edges in $\Exec{}$ is acyclic.
\end{lemma}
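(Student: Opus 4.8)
The plan is to show that $\sbo \cup \asw \cup \sco \cup \rf$ is contained in a single strict total order on the events of $\Exec{}$; acyclicity of the union then follows at once. The first observation is that, among the relations in the statement, none depends on the particular linear extension of $\System.M$ chosen during lifting: the $\sbo$, $\asw$, and $\sco$ edges are read off from the control flow of each thread, the fork/join structure recorded in the $\lsb$ and $\lasw$ labels, and the global $\lsc$ label, while the $\rf$ edges are read off from the $\rf$ fields of the \textit{LoadElem}s and \textit{RMWElem}s; only the $\mo$ edges depend on the chosen extension. Hence it suffices to fix one $\Exec{} \in \lift(\sigma)$. I would then let $\prec$ denote the total order on the event-creating instructions of $\sigma = \myState{0} \xrightarrow{\myTrans{1}} \cdots \xrightarrow{\myTrans{m}} \myState{m}$ given by the order in which the operational model processes them; on atomic events this coincides with the order of their auto-assigned sequence numbers.

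Next I would verify that every rule generating one of the four relations emits an edge oriented consistently with $\prec$. For $\sbo$: the lifting emits an edge from $T.\lsb$ to the instruction currently executed by thread $T$, and $T.\lsb$ is by construction the most recently processed event-creating instruction of $T$ before the current one, so $T.\lsb \prec$ the current instruction. For $\asw$ from \code{Fork}: the edge goes from $T.\lsb$, an instruction of the parent $T$ processed before the \code{Fork}, to the first event of the child $T'$, and no instruction of $T'$ can be processed before the \code{Fork} that spawns $T'$. For $\asw$ from \code{Join}: the edge goes from $T'.\lsb$, the last event of the terminated child $T'$, to the instruction of $T$ following the \code{Join}, and $T$'s \code{Join} of $T'$ is processed only after $T'$ has terminated, so $T'.\lsb$ precedes it. For $\sco$: the edge goes from $\System.\lsc$, the previous \code{seq\_cst} event, to the current one, processed later. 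For $\rf$: an atomic load $L$ (including the load phase of an RMW) reads only from some $S \in \textsc{BuildMayReadFrom}(L)$, and \textsc{BuildMayReadFrom} returns only elements of $\System.\ALocs(L.a)$, which at that point holds exactly the stores and RMWs already appended via \code{pushback}, i.e. fully processed before $L$; hence $S \prec L$. Conversely an RMW is appended to $\ALocs$ only at the end of processing it, so any later reader of that RMW is processed after it.

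Putting the cases together yields $\sbo \cup \asw \cup \sco \cup \rf \subseteq \prec$, and since $\prec$ is irreflexive and transitive, any cycle in the union would give an event $e$ with $e \prec e$, a contradiction; hence the union is acyclic. The hard part will be the careful, if routine, bookkeeping in the $\rf$ and $\asw$ cases against the precise transition rules of Figure~\ref{fig:atomic-statement} and the lifting construction of Section~\ref{sec:lifting-traces} --- in particular, verifying that the operational semantics never lets a load (or the load phase of an RMW) read from a store not yet appended to $\ALocs$, and dispatching the boundary cases of the main thread and a freshly forked thread, where $\lsb = \perp$ and the corresponding $\sbo$ or $\asw$ edge is simply not emitted.
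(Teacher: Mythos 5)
Your proposal is correct and follows essentially the same route as the paper, which justifies the lemma with a one-sentence observation that loads only read from stores already processed by the operational model and that the $\sbo$, $\asw$, and $\sco$ edges are constructed consistently with the processing order. Your embedding of all four relations into the total processing order $\prec$ is simply a careful elaboration of that same argument.
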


\subsection{Equivalence of Axiomatic and Operational Models}
Our goal is to show that for an arbitrary program $P$, the set of executions allowable by our restricted axiomatic model is equivalent to the set of executions we get by lifting traces that our operational model can produce, i.e.,
\[ \forall P \forall \Exec{} . \Exec{} \in \rConsistent(P) \Leftrightarrow \exists \sigma \in \traces(P). \Exec{} \in \lift(\sigma). \]

\begin{definition}
Let $P$ be an arbitrary program and $\Exec{}$ be an axiomatic-style execution of $P$.
We define $\oExec{}$ as the execution that only contains 
$\sbo$, $\asw$, $\sco$, and $\rf$ edges in $\Exec{}$ together with
events in $\Exec{}$.
\end{definition}

Given an execution $\Exec{} \in \rConsistent(P)$ that consists of $n$ events,
$\oExec{}$ is a DAG, which can be topologically sorted to give an ordering,
$\event{1}, ..., \event{n}$, that is consistent with the order
that events are added to $\Exec{}$ as the program is running.

\begin{figure}[!htbp]
  \centering
  \includegraphics[scale=0.5]{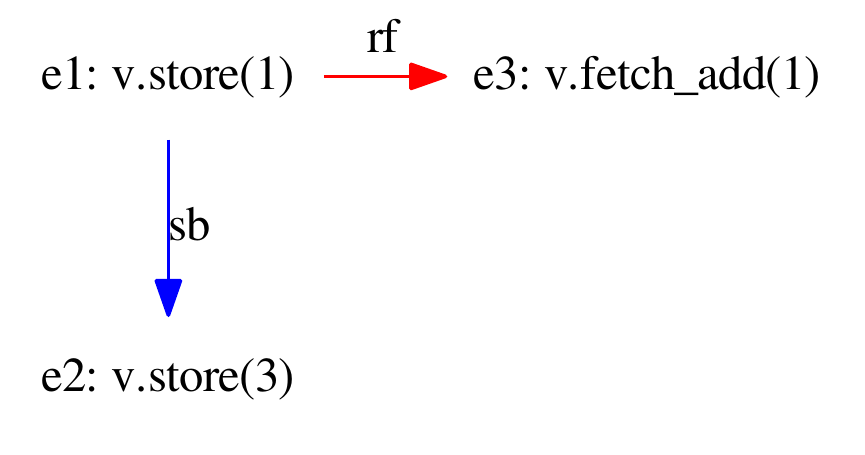}
  \caption{Example where the partial execution graph $\Exec{2}$ misses $\mo$ edges
    \label{fig:missing-mo}
  }
\end{figure}

Based on the topological sort, we define the \textit{partial execution graph}
$\Exec{i}$ of $\Exec{}$ as the execution that consists of the first $i$ events,
together with $\sbo$, $\asw$, $\sco$, $\rf$, and $\mo$ edges
such that the sources and destinations of included relations are events in $\Exec{i}$,
where $0 \leq i \leq n$.  $\Exec{0}$ is defined as the empty execution.

Since we do not consider $\mo$ edges in the topological sort, 
some partial execution graph $\Exec{i}$ may contain events
where there exists modification ordering between them in $\Exec{}$
but the $\mo$ edges are missing in $\Exec{i}$.
For example, in Figure~\ref{fig:missing-mo}, $\{e_1, e_2, e_3\}$ is a valid
topological sort of $\oExec{}$, and we also have
$e_1 \stackrel{\mo}{\rightarrow} e_3 \stackrel{\mo}{\rightarrow} e_2$,
but no $\mo$ edge is present in the partial execution graph $\Exec{2}$.
To deal with this issue, we define the \textit{modification order} at an atomic location $M$
in partial execution graph $\Exec{i}$ as a total order $\partialmo{i}{M}$
over events of $\Exec{i}$ that modifies $M$ such that $\partialmo{i}{M}$
is consistent with the modification order at location $M$ in
the complete execution graph $\Exec{}$.
For atomic modifications $X$ and $Y$
in $\partialmo{i}{M}$, if $X$ precedes $Y$,
then we write $X \stackrel{\partialmo{i}{M}}{\rightarrow} Y$. 

The equivalence proof between axiomatic and operational models contains two directions,
and we will break it down into Lemma~\ref{lemma:forward} and Lemma~\ref{lemma:backward}.
In the proofs of the two lemmas below, for a store event $\event{i}$ in $\Exec{}$,
there is a corresponding node in the $\Mograph$ of the equivalent trace in the
operational model.
Although $\event{i}$ is technically an event in an axiomatic execution,
we sometimes abuse the notation and use $\event{i}$ to refer to the corresponding
node in the $\Mograph$ of the equivalent trace.
If the node corresponding to $\event{i}$ is modification ordered before the node
corresponding to $\event{j}$ in a $\Mograph$, then we may say
$\event{i} \stackrel{\mo}{\rightarrow} \event{j}$ exists in the $\Mograph$.

In the forward direction (Lemma~\ref{lemma:forward}), the proof strategy is to apply induction on the construction of partial execution graphs.
More specifically, if $\Exec{i}$ is a partial execution graph of $\Exec{}$
such that there exists at least one trace $\trace{i}$ where
$\Exec{i} \in \lift(\trace{i})$, 
then when $\Exec{i}$ is extended to $\Exec{i+1}$, we can construct
a trace $\trace{i+1}$ that is an extension of $\trace{i}$ such that
$\Exec{i+1} \in \lift(\trace{i+1})$, 

\begin{lemma} \label{lemma:forward}
Let $P$ be an arbitrary program, and $\Exec{} \in \rConsistent(P)$ be an execution.
Then there exists a trace $\trace{} \in \traces(P)$ such that $\Exec{} \in \lift(\trace{})$.
\end{lemma}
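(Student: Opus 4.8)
The plan is to follow the strategy sketched just before the lemma: fix $\Exec{} \in \rConsistent(P)$ with $n$ events and induct on the construction of the partial execution graphs $\Exec{0}, \Exec{1}, \ldots, \Exec{n} = \Exec{}$. By the defining constraint of the restricted model $\hb \cup \sco \cup \rf$ is acyclic, and by the derivation in Section~\ref{sec:restricted-model} this relation equals $\sbo \cup \asw \cup \sco \cup \rf$, so $\oExec{}$ is a DAG; fix a topological order $\event{1}, \ldots, \event{n}$ of it. I would build the trace incrementally, producing after step $i$ an operational trace $\trace{i} = s_0 \xrightarrow{t_1} \cdots \xrightarrow{t_i} s_i$ that realizes the first $i$ events and satisfies $\Exec{i} \in \lift(\trace{i})$, while maintaining the stronger invariant that the $\Mograph$ of $s_i$ is $\mo$-consistent, i.e.\ every edge it induces between events of $\Exec{i}$ is an $\mo$ edge of $\Exec{}$. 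The base case $\trace{0} = s_0$ with the empty $\Mograph$ is immediate.

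For the inductive step, given $\trace{i}$, let $t$ be the thread of $\event{i+1}$. First I would show $t$ is enabled in $s_i$: every event sequenced-before $\event{i+1}$ in $t$ precedes it in the topological order and has therefore been processed, and any thread $t$ joins has completed for the same reason, so $\textit{next}(s_i, t)$ generates exactly $\event{i+1}$. If $\event{i+1}$ is a store or a fence, the operational rule fires unconditionally; for a store the new $\Mograph$ node has no outgoing edges, so no cycle can appear. If $\event{i+1}$ is a load or RMW with $X \stackrel{\rf}{\rightarrow} \event{i+1}$ in $\Exec{}$, I must verify: (a) $X$ has already been processed, which holds since $X \stackrel{\rf}{\rightarrow} \event{i+1}$ puts $X$ before $\event{i+1}$ in the topological order; (b) $X$ lies in \textsc{BuildMayReadFrom}$(\event{i+1})$; and (c) \textsc{ReadPriorSet}$(\event{i+1}, X)$ returns \textit{true}, i.e.\ no node of $\textit{ReadPriorSet}(X)$ is reachable from $X$ in the $\Mograph$.

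Points (b) and (c) are where the axiomatic consistency predicate does the work. For (b): $\neg(\event{i+1} \stackrel{\hb}{\rightarrow} X)$ since otherwise $X \stackrel{\rf}{\rightarrow} \event{i+1} \stackrel{\hb}{\rightarrow} X$ contradicts acyclicity of $\hb \cup \sco \cup \rf$; the absence of an intervening $\hb$-ordered store follows from write-read coherence; the seq\_cst side conditions on line~\ref{line:may-read-from-lastsc} and line~\ref{line:may-read-from-scrm} are exactly the C++11 restrictions on seq\_cst loads; and when $\event{i+1}$ is an RMW, no other RMW in $\Exec{}$ reads from $X$ (two RMWs cannot both immediately follow $X$ in $\mo$), so $X$ also survives the RMW filter. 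For (c): each edge $Y \rightarrow X$ that \textsc{ReadPriorSet} would add is, by the construction of that procedure (statements 4--6 of Section 29.3 of the C++11 standard together with the coherence implications of Figure~\ref{fig:mo_implications}), forced by an axiom to satisfy $Y \stackrel{\mo}{\rightarrow} X$ in $\Exec{}$; since the invariant says the $\Mograph$ contains only $\mo$-consistent edges, nothing $\mo$-after $X$ is reachable from $X$, so no such $Y$ is reachable and the procedure returns \textit{true}. Adding $\event{i+1}$ and its edges --- and for an RMW the \code{rmw} edge and the edge migration, which are sound because RMW atomicity in $\Exec{}$ makes the RMW the immediate $\mo$-successor of $X$ --- keeps every $\Mograph$ edge $\mo$-consistent, re-establishing the invariant. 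Finally $\Exec{i+1}$ extends $\Exec{i}$ only by $\event{i+1}$ and its incident $\sbo$, $\asw$, $\sco$, $\rf$ edges, which the lifting procedure of Section~\ref{sec:lifting-traces} reconstructs verbatim from $\trace{i+1}$, and the $\mo$ edges of $\Exec{i+1}$ are a linear extension of the $\Mograph$ of $s_{i+1}$ (the restriction of $\mo$ to $\Exec{i+1}$ being one such extension), so $\Exec{i+1} \in \lift(\trace{i+1})$. Taking $\trace{} = \trace{n}$, a terminating trace since all $n$ events are processed and the program halts, gives $\Exec{} \in \lift(\trace{})$.

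I expect the main obstacle to be the case analysis behind (c): proving uniformly --- across ordinary coherence, RMW atomicity, fences, and seq\_cst atomics --- that every $\mo$ edge the operational model derives from \textsc{ReadPriorSet}/\textsc{WritePriorSet} and the per-thread per-location access lists is genuinely entailed by the consistency predicate of Batty et~al.\ restricted to the partial execution, so the over-approximation never adds an edge the axioms do not force, and dually that those axioms supply exactly the membership and no-cycle facts needed in (b) and (c). A minor additional point is that the chosen topological order should also respect any scheduling restriction the operational model places on a run of consecutive release/relaxed stores of one thread; since such stores are totally ordered by $\sbo$, one can always pick a topological sort that does not split such a run.
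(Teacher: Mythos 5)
Your proposal is correct and follows essentially the same route as the paper's proof: induction over a topological sort of $\oExec{}$ (justified by acyclicity of $\sbo \cup \asw \cup \sco \cup \rf$), incrementally extending the trace by scheduling the thread of $\event{i+1}$, using the axioms (coherence, the seq\_cst rules of Section 29.3, RMW atomicity) to show the read-from store survives \textsc{BuildMayReadFrom} and that \textsc{ReadPriorSet} cannot detect a cycle, while maintaining that the $\Mograph$ only records constraints entailed by $\Exec{}$'s modification order so that $\Exec{i+1}$ remains a valid lifting. The invariant you state explicitly ($\Mograph$ edges are $\mo$-consistent with $\Exec{}$) is exactly the paper's ``the $\Mograph$ of $\trace{i+1}$ can be extended to the modification orders in $\Exec{i+1}$,'' and the case analysis you flag as the main remaining work (paragraphs on stores, loads, RMWs, fences, and seq\_cst) is precisely what the paper's proof carries out in detail.
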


\begin{proof}
Let $P$ be an arbitrary program, and $\Exec{} \in \rConsistent(P)$ be an execution.
Let $\event{1}, ..., \event{n}$ be a topological sort of $\oExec{}$.
We will prove by induction on the construction of partial execution graphs of $\Exec{}$
as described in our proof strategy.

Base case: When $i = 0$, $\Exec{0}$ is the empty execution.
We can take the initial trace $\trace{0}$ that is
the initial state of the program $P$ without any transitions,
and $\Exec{0} \in \lift(\trace{0})$.  At this point the $\Mograph$ is empty as well. 

Inductive step: Suppose that for some $i < n$,
we have constructed the partial execution graph $\Exec{i}$ 
and that there exists some trace $\trace{i}$ such that $\Exec{i} \in \lift(\trace{i})$.
We will assume throughout the proof that the instructions used for the state
transitions match the events being added.  Also note that each visible
instruction adds a context switch after it, so there is always the
ability to change to the required thread.  We will
extend $\Exec{i}$ by adding the next event $\event{i+1}$, and show that we
can construct a $\trace{i+1}$ that is the extension of $\trace{i}$
such that $\Exec{i+1} \in \lift(\trace{i+1})$.
In the following, we will first analyze five different cases of incoming edges
to $\event{j}$, and then show that we can extend the $\Mograph$ of $\trace{i+1}$
to the modification orders in $\Exec{i+1}$. 

\paragraph{A.1}
If $\event{i+1}$ has no incoming edges, then $\event{i+1}$ must be the event $\event{1}$,
corresponding to the first visible instruction in the initial thread.  
Because if $\event{i+1} \neq \event{1}$, there must be an $\sbo$ edge 
or an $\asw$ edge coming to $\event{i+1}$.  We can construct $\trace{i+1} = \trace{1}$ 
by letting the initial thread execute until the instruction corresponding to $\event{1}$ is executed. 

\paragraph{A.2}
If $\event{i+1}$ has an incoming $\sbo$ edge from some event $\event{j}$, 
then from the last state of $\trace{i}$, we must switch to the thread 
that executes $\event{j}$ and continue until the instruction corresponding to $\event{i+1}$ is executed, to obtain $\trace{i+1}$. 

\paragraph{A.3}
If $\event{i+1}$ has an incoming $\asw$ edge, 
then we have two scenarios: $\event{i+1}$ is the first event in a newly created thread; 
or a thread is finishing and joining onto its parent thread,
and $\event{i+1}$ is an event in the parent thread.
In any case,  let $\event{j}$ be the source of the $\asw$ edge,
and $t$ be the thread performing $\event{j}$. 

In the first case, there must be a \code{Fork} instruction after $\event{j}$ and before the next visible instruction in thread $t$.  
The event $\event{j}$ must also be the last event completed in thread $t$, 
because otherwise, the source of the $\asw$ edge would be some event other than $\event{j}$.  
To produce the trace $\trace{i+1}$, we can switch to thread $t$ 
and run the program until the \code{Fork} is done, switch to the newly created thread, 
and run until $\event{i+1}$ is done.

In the second case, the thread $t$ is finishing,
and there is no more visible instruction in thread $t$.  
To produce the trace $\trace{i+1}$, we can switch to thread $t$ and run until $t$ finishes, 
then switch to the parent thread, and run until $\event{i+1}$ is done.  
We must encounter a \code{Join} instruction before $\event{i+1}$ is done, 
because otherwise the destination of the $\asw$ edge would be some event other than $\event{i+1}$. 

\paragraph{A.4}
If $\event{i+1}$ has an incoming $\sco$ edge from some event $\event{j}$, 
then it is similar to the case of $\sbo$.  To obtain $\trace{i+1}$, 
we will switch to the thread that performs $\event{i+1}$ and continue until $\event{i+1}$ is done.  
There shall be no seq\_cst events sequenced before $\event{i+1}$ that has not yet be performed, 
because otherwise, there cannot be an $\sco$ edge from $\event{j}$ to $\event{i+1}$ in $\Exec{i+1}$. 

\paragraph{A.5}
If $\event{i+1}$ has an incoming $\rf$ edge from some event $\event{j}$, 
then we need to show that it is valid for $\event{i+1}$
to read from $\event{j}$ in $\trace{i+1}$.  
Let $\event{i+1}$ be an RMW or atomic read at $M$. 
We claim that $\event{j}$ belongs to the set constructed by
\textsc{BuildMayReadFrom} procedure when the operational model processes the
instruction that creates $\event{i+1}$.
The \code{for} loop in the procedure considers the thread that performs $\event{j}$.
If event $\event{j}$ does not happens before $\event{i+1}$, 
then $\event{j} \in \textit{base}$ at line~\ref{line:may-read-from-base}.  
If $\event{j}$ happens before $\event{i+1}$, then there cannot be any event $\event{k}$ 
that modifies $M$ and that $\event{j} \stackrel{\sbo}{\rightarrow} \event{k} \stackrel{\hb}{\rightarrow} \event{i+1}$,
because in that case, Write-Read Coherence (CoWR) would forbid
$\event{i+1}$ from reading from $\event{j}$ in $\Exec{i+1}$.  
Therefore, $\event{j} \in \textit{base}$ at line~\ref{line:may-read-from-base}.
Now we will show that $\event{j}$ is not removed at line~\ref{line:may-read-from-scrm} 
when $\event{i+1}$ has seq\_cst memory ordering.  
If the last seq\_cst modification of $M$ that precedes $\event{i+1}$ in the total order of $\sco$, 
\ie $S$ at line~\ref{line:may-read-from-lastsc}, does not exist, then we are done.  
Assume such $S$ exists.  According to Section 29.3 statement 3 of the C++11 standard, 
$\event{i+1}$ either reads from $S$ or some non-seq\_cst modification of $M$ 
that does not happen before $S$.
The fact that $\event{i+1}$ reads from $\event{j}$ in $\Exec{i+1}$ 
implies that $\event{j}$ is either $S$ or a non-seq\_cst modification that does not
happen before $S$.  Hence, $\event{j}$ is not removed from $\textit{base}$ 
at line~\ref{line:may-read-from-scrm}.  If $\event{i+1}$ is an RMW, then $\event{j}$ 
has not been read by any other RMW, because no two RMWs 
can read from the same modification in $\Exec{i+1}$.  Hence, $\event{j}$ is in the set 
returned from the \textsc{BuildMayReadFrom} procedure.

We still need to show that having $\event{i+1}$ read from $\event{j}$ does not
create a cycle in the $\Mograph$.
The discussion in paragraph $\textit{B.2}$ below about 
how an atomic load updates the $\Mograph$ shows that having $\event{i+1}$ read from $\event{j}$, 
the updated $\Mograph$ is still consistent with modification orders in $\Exec{i+1}$.
Thus, $\event{i+1}$ reading from $\event{j}$ does not create a cycle in $\Mograph$.
We defer the proof to paragraph \textit{B.2}. 
\\

Now we will show that the $\Mograph$ of $\trace{i+1}$ can be extended to
the modification orders in $\Exec{i+1}$.
The $\Mograph$ is updated when the operational model processes an atomic store,
load, or RMW.  So we will assume that $\event{i+1}$
corresponds to an atomic store, load, or RMW.  
Otherwise, the modification orders in $\Exec{i+1}$ are the same as those in $\Exec{i}$, 
and the $\mo$ edges in $\Mograph$ are not updated, and hence $\Mograph$ of $\trace{i+1}$ 
can be extended to the modification orders in $\Exec{i+1}$ by inductive hypothesis.

\paragraph{B.1}
Suppose $\event{i+1}$ is an atomic store that modifies atomic location $M$.  
We consider two cases: $\event{i+1}$ is the last element in $\partialmo{i+1}{M}$; or
$\event{i+1}$ is not the last element $\partialmo{i+1}{M}$.  

In the first case, $\event{i+1}$ is the last element in the modification order $\partialmo{i+1}{M}$ 
of $\Exec{i+1}$.  Let $\event{j}$ be the second last element in $\partialmo{i+1}{M}$.  
Since $\event{j}$ precedes $\event{i+1}$ in modification order, the modification ordering
is either forced by coherence rules, consistent with $\sco$ relations, 
consistent with Section 29.3 statement 7 of C++11 standard, or not forced by any relations in $\Exec{i+1}$.  
If the modification ordering between $\event{j}$ and $\event{i+1}$
is forced by coherence rules under $\hb$ and $\rf$ relations in $\Exec{i+1}$, 
then the coherence rules being inferred can only be Coherence of Write-Write (CoWW) 
or Coherence of Read-Write (CoRW), because the other two coherence rules require $\rf$ relations 
that are not present in $\Exec{i+1}$.  For CoWW, line~\ref{line:wpriorset-hb} 
in the \textsc{WritePriorSet} procedure considers the atomic store $X$
corresponding to $\event{j}$ and adds it to \textit{priorset}.
We claim that the store $X$ will not be filtered out
by the \textit{last} function call in line~\ref{line:wpriorset-last},
because otherwise there would be an atomic store event $\event{k}$
(different from $\event{i+1}$)
sequenced after $\event{j}$, contradicting the assumption that
$\event{j}$ is the second last element in $\partialmo{i+1}{M}$.
The case for CoRW is similar. 
If the modification ordering between $\event{j}$ and $\event{i+1}$
is consistent with $\sco$ relations, 
then $\event{j}$ and $\event{i+1}$ are both seq\_cst atomic stores 
and line~\ref{line:wpriorset-sc} in the \textsc{WritePriorSet} procedure considers such case.  
If the modification ordering between $\event{j}$ and $\event{i+1}$
is consistent with Section 29.3 statement 7 of C++11 standard, 
then this case is dealt with at line~\ref{line:wpriorset-fence}
in the \textsc{WritePriorSet} procedure.
If the modification ordering between $\event{j}$ and $\event{i+1}$
is not forced by any relations in $\Exec{i+1}$, 
then the $\Mograph$ does not contain a $\mo$ edge from $\event{j}$ to $\event{i+1}$,
and we are free to extend $\Mograph$ to include $\partialmo{i+1}{M}$.
In fact, the algorithm \textsc{WritePriorSet} may add $\mo$ edges from events 
modification ordered before $\event{j}$ to $\event{i+1}$.  
This is not a problem, because adding such edges does not introduce modification ordering 
not present in the modification order of $\Exec{i+1}$. 

In the second case, let $\event{j}$ be the event immediately preceding $\event{i+1}$ 
in $\partialmo{i+1}{M}$, and $\event{k}$ be the event immediately succeeding $\event{i+1}$ 
in $\partialmo{i+1}{M}$.  Without loss of generality, assume $\event{k}$
is the last event in $\partialmo{i+1}{M}$.
The modification ordering between $\event{j}$ and $\event{i+1}$ 
could be any one of the cases discussed in the first case in this paragraph.
So we do not repeat the same argument.  However, since all other events in $\Exec{i+1}$ 
including $\event{k}$ come before $\event{i+1}$ in the topological order, 
there is no chain of $\rf$, $\sbo$, $\asw$, and $\sco$ edges that come from $\event{i+1}$ 
to any other event in $\Exec{i+1}$.  Thus, the only possibility is that no relations in $\Exec{i+1}$ 
force the existence of the modification ordering between $\event{i+1}$ and $\event{k}$.  
Hence, the $\mo$ edge between $\event{i+1}$ and $\event{k}$ does not exist
in the $\Mograph$ of $\trace{i+1}$, and
we are free to extend $\Mograph$ to include $\partialmo{i+1}{M}$.
If $\event{k}$ is not the last event in $\partialmo{i+1}{M}$, then the same argument
applies to any event modification ordered after $\event{k}$ in $\partialmo{i+1}{M}$.

\paragraph{B.2}
Suppose that $\event{i+1}$ is an atomic load that reads from event $\event{j}$ at atomic location $M$.  
Adding event $\event{i+1}$ does not change the modification order at $M$
from $\Exec{i}$ to $\Exec{i+1}$, but performing the instruction corresponding
to $\event{i+1}$ may change the $\Mograph$ from $\trace{i}$ to $\trace{i+1}$.  
We will show that the $\Mograph$ in $\trace{i+1}$ can still be extended 
to the modification orders in $\Exec{i+1}$.
Line~\ref{line:rpriorset-s1},~\ref{line:rpriorset-s2},and~\ref{line:rpriorset-s3}
in \textsc{ReadPriorSet} procedure consider statements 5, 4, and 6 in Section 29.3 of the standard.  
For each thread, if such $S_1$, $S_2$, and $S_3$ exist, the events 
corresponding to them are all modification ordered before $\event{j}$ in $\Exec{i+1}$.  
Therefore, having $\mo$ edges from $S_1$, $S_2$, and $S_3$ to $\event{j}$ 
in $\Mograph$ does not conflict with the modification orders in $\Exec{i+1}$.  
Line~\ref{line:rpriorset-s4} in the \textsc{ReadPriorSet} procedure considers 
Write-Read Coherence (CoWR) and Read-Read Coherence (CoRR).  
If an $\mo$ edge from some event $\event{k}$ to $\event{j}$ in $\Mograph$
is induced by CoWR and CoRR, then $\event{k}$ must be modification
ordered before $\event{j}$ in $\Exec{i+1}$ by the standard. 
Hence, when extending $\trace{i}$ to $\trace{i+1}$, the newly created $\mo$ edges
in $\Mograph$ do not conflict with modification orders in $\Exec{i+1}$.
Because the $\Mograph$ in $\trace{i}$ can be extended to the modification orders
of $\Exec{i}$ by inductive hypothesis, so can the $\Mograph$ in $\trace{i}$
be extended to the modification orders of $\Exec{i+1}$.

\paragraph{B.3}
Suppose that $\event{i+1}$ is an atomic RMW that reads from $\event{j}$.  
An RMW modifies the $\Mograph$ in three phases: performing an atomic load, 
migrating $\mo$ edges using the AddRMWEdge procedure, and performing an atomic store.  
We also consider two cases.

In the first case, $\event{i+1}$ is the last element 
in $\partialmo{i+1}{M}$ of $\Exec{i+1}$.
The first and third phases have been discussed in paragraphs \textit{B.1} and \textit{B.2}.
In the second phase, since $\event{i+1}$ is the last element in $\partialmo{i+1}{M}$, 
no edges in $\Mograph$ are migrated.
Then an $\mo$ edge is added from $\event{j}$ to $\event{i+1}$ in $\Mograph$, 
which does not conflict with the modification order $\partialmo{i+1}{M}$, because
an atomic RMW is immediately modification ordered after the modification it reads from.  

In the second case, $\event{i+1}$ is not the last element in $\partialmo{i+1}{M}$.  
Since $\event{i+1}$ reads from $\event{j}$, $\event{j}$ must immediately 
precede $\event{i+1}$ in $\partialmo{i+1}{M}$.  The first phase of the RMW is equivalent to 
an atomic load.  In the second phase, any outgoing $\mo$ edges from $\event{j}$
will be migrated to outgoing $\mo$ edges from $\event{i+1}$, 
and an $\mo$ edge is added from $\event{j}$ to $\event{i+1}$ in $\Mograph$.  
The third phase is the same as the second case of an atomic store, 
except that for an event $\event{k}$ modification ordered after $\event{i+1}$ in $\partialmo{i+1}{M}$, 
there may exist $\mo$ edges from $\event{i+1}$ to $\event{k}$ in $\Mograph$ 
due to edge migrations in the second phase. 

In both cases, the $\Mograph$ of $\trace{i+1}$ does not contain $\mo$ edges
that conflict with modification orders in $\Exec{i+1}$
Hence, the $\Mograph$ of $\trace{i+1}$ can be extended to include modification
orders in $\Exec{i+1}$.

Considering all above cases in paragraphs \textit{B.1}, \textit{B.2},
and \textit{B.3}, the $\Mograph$ of $\trace{i+1}$ can be extended to
the modification orders in $\Exec{i+1}$, and the proof completes.
\end{proof}

\begin{lemma} \label{lemma:backward}
Let $P$ be an arbitrary program, and $\trace{} \in \traces(P)$ be a trace.
Then for all $\Exec{} \in \lift(\trace{})$, we have $\Exec{} \in \rConsistent(P)$.
\end{lemma}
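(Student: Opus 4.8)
The plan is to verify, for an arbitrary $\Exec{} \in \lift(\trace{})$, the two ingredients of $\rConsistent(P)$: that $\sbo \cup \asw \cup \sco \cup \rf$ is acyclic in $\Exec{}$, and that $\Exec{} \in \Consistent(P)$. The first is exactly Lemma~\ref{lemma:acyclic}, and by the inclusion $\sbo \cup \asw \subseteq \hb \subseteq \sbo \cup \asw \cup \rf$ established in Section~\ref{sec:restricted-model} it is equivalent to acyclicity of $\hb \cup \sco \cup \rf$; it also immediately gives that $\hb$ is irreflexive, discharging the happens-before-consistency conjunct. So everything reduces to checking the remaining conjuncts of the (modified) Batty \etal{} consistency predicate one at a time against the operational semantics of Figure~\ref{fig:atomic-statement}, the clock-vector rules of Figure~\ref{fig:sem-hbcv}, and the lifting construction.

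Before that case analysis I would record one preliminary fact: for every prefix of $\trace{}$, the happens-before clock vectors $\TCV$, $\RF$, $\Frel{}$, $\Facq{}$ represent exactly the $\hb$ relation of the corresponding lifted partial execution --- in particular the $\RF$/$\Frel{}$/$\Facq{}$ treatment of release sequences coincides with the axiomatic C/C++20 $\sw$ definition, and the acquire/release-fence rules contribute to $\hb$ exactly as in the standard. Granting this, each $\hb$-guard used by the operational model --- the test $\neg(X \stackrel{\hb}{\rightarrow} L)$ and the $(\nexists Y \ldots)$ clause in \textsc{BuildMayReadFrom} (line~\ref{line:may-read-from-base}) and the $\stackrel{\hb}{\rightarrow}$ conditions in \textsc{WritePriorSet}/\textsc{ReadPriorSet} --- transfers directly to the corresponding statement about $\Exec{}$. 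Well-formedness of $\rf$ (each atomic read reads from exactly one already-processed same-location store, with matching value) then follows from the [ATOMIC LOAD] and [ATOMIC RMW] rules, which choose the source from \textsc{BuildMayReadFrom}$(L)$ and set $\NALocs[l := S.v]$; well-formedness of threads, $\sbo$, and the fork/join-generated $\asw$ edges follows from how the semantics steps each thread through its program and from the $\lsb$/$\lasw$/$\lsc$ bookkeeping of the lifting.

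The heart of the proof is coherence (CoWW, CoWR, CoRW, CoRR), consistency of $\mo$ with $\hb$, RMW atomicity, and the sequential-consistency axioms. The invariant to prove is that whenever any of these rules forces a modification ordering $X \stackrel{\mo}{\rightarrow} Y$ between two same-location writes of $\Exec{}$, an $\mo$ edge witnessing it has actually been inserted into $\Mograph$: for a store or RMW by \textsc{WritePriorSet}, whose terms at line~\ref{line:wpriorset-sc} (seq-cst stores), line~\ref{line:wpriorset-fence} (the Section 29.3 fence clauses) and line~\ref{line:wpriorset-hb} ($\hb$-prior writes) feed \textsc{AddEdges}; for a load or RMW by \textsc{ReadPriorSet} at lines~\ref{line:rpriorset-s1}--\ref{line:rpriorset-s4}; and \textsc{AddRMWEdge} additionally makes an RMW the immediate $\mo$-successor of the store it reads from while migrating its outgoing edges, which yields RMW atomicity. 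By Theorem~\ref{thm:main} the reachability checks the operational model performs through $\Mograph$ clock vectors --- in particular the cycle test of \textsc{ReadPriorSet} at line~\ref{line:rpriorset-reject} --- are exact, so the $\Mograph$ built along $\trace{}$ is acyclic; hence its per-location reachability partial order admits the total extension $\mo$ that lifting chooses, and that extension respects every forced edge, giving the coherence and $\mo$-versus-$\hb$ conjuncts. For the seq-cst axioms, $\sco$ being a strict total order over seq-cst actions is immediate from the $\lsc$ chain; the Section 29.3 seq-cst-load restriction is enforced by lines~\ref{line:may-read-from-lastsc} and~\ref{line:may-read-from-scrm} of \textsc{BuildMayReadFrom}; and consistency of $\sco$ with $\hb$ and $\mo$ plus the fence restrictions of Section 29.3 statements 4--7 come from the $F_S$, $F_L$, $F_b$, $S_1$, $S_2$, $S_3$ terms in the priorset procedures together with the Seq-cst/MO and Seq-cst-Write-Read implications of Figure~\ref{fig:mo_implications}.

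I expect the main obstacle to be twofold. First, the preliminary claim that the clock-vector machinery of Figure~\ref{fig:sem-hbcv} computes precisely the axiomatic $\hb$ --- including the weakened C/C++20 release-sequence semantics and the fence rules --- is itself a nontrivial induction that underpins every transfer of an operational guard into an axiomatic statement. Second, the bookkeeping in the heart of the proof is delicate: one must match each term of \textsc{WritePriorSet} and \textsc{ReadPriorSet} to the exact coherence shape or Section 29.3 clause it discharges, argue that the $\textit{last}(\cdot)$ selections never drop a needed edge (using that a dropped witness is $\sbo$-dominated by a kept one, which is therefore $\mo$-above it), and handle the interaction of RMW edge migration with the priorset edges. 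Once these are in place the remaining conjuncts are routine, and combining $\Exec{} \in \Consistent(P)$ with acyclicity of $\hb \cup \sco \cup \rf$ yields $\Exec{} \in \rConsistent(P)$.
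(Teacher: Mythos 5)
Your plan is sound and reaches the same conclusion by checking the same ingredients, but it organizes the key step differently from the paper. The paper fixes a particular $\Exec{} \in \lift(\trace{})$ up front, stipulates that the $\Mograph$s of all trace prefixes are extended consistently with $\Exec{}$, and then does an induction over transitions showing each lifted prefix $\olift(\trace{i})$ is a valid partial execution graph of $\Exec{}$; the crux there is Claim~1 (any store the operational model lets a load read from is also permitted axiomatically), proved by contrapositive/contradiction over CoWR, CoRR and the Section~29.3 clauses, with separate subcases for orderings that exist only because of the extension of stores left unordered in the $\Mograph$. You instead verify the conjuncts of the consistency predicate of the final lifted execution directly, with the central device being the invariant that every axiom-forced $\mo$ ordering is witnessed by a (chain of) $\Mograph$ edge(s) inserted by \textsc{WritePriorSet}/\textsc{ReadPriorSet}/\textsc{AddRMWEdge}, so that acyclicity of the final graph (via the Theorem~\ref{thm:main}-backed reachability checks) plus the fact that the lifted $\mo$ linearly extends it yields coherence, $\mo$-vs-$\hb$, RMW atomicity, and the seq-cst axioms uniformly for every member of $\lift(\trace{})$; you also isolate as a preliminary lemma that the clock vectors of Figure~\ref{fig:sem-hbcv} compute exactly the axiomatic $\hb$, which the paper leaves implicit. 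What each buys: your forced-edge-invariant formulation sidesteps the paper's ``extend intermediate graphs consistently with $\Exec{}$'' bookkeeping and its unordered-store contradiction subcases, and treats all executions in $\lift(\trace{})$ at once; the paper's transition-indexed induction is what actually certifies your invariant (in particular your claim that the $\textit{last}(\cdot)$ selections never drop a needed edge, which requires exactly the per-line matching of priorset terms to coherence shapes and 29.3 clauses, using CoWW/CoWR transitivity through already-recorded edges), so in fleshing out your invariant you would end up doing essentially the same case analysis, just relocated. Both arguments share Lemma~\ref{lemma:acyclic} for the acyclicity conjunct, and neither yours nor the paper's spells out the $\hb$-exactness of the clock-vector rules, so making that explicit as you propose is a genuine (if demanding) strengthening rather than a deviation.
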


\begin{proof}
In the backward direction, we want to show that given a program $P$ and
a trace $\trace{}$ produced by the operational model, then any execution
$\Exec{} \in \lift(\trace{})$  obtained by lifting the trace $\trace{}$
is an element of $\rConsistent(P)$.
We will prove by induction on the construction of the partial trace $\trace{i}$.
Specially, if we have $\Exec{k} \in \lift(\trace{i})$,
where $\Exec{k}$ is a partial execution graph of $\Exec{}$
based on a topological sort of $\oExec{}$,
then when $\trace{i}$ is extended to $\trace{i+1}$, we have $\Exec{k} \in \lift(\trace{i+1})$
or $\Exec{k+1} \in \lift(\trace{i+1})$, where $\Exec{k+1}$ is also a partial execution
graph of $\Exec{}$ subject to the same topological sort.

Let $P$ be an arbitrary program and $\trace{} \in \traces(P)$ be a trace produced by our operational model.
Let $\Exec{} \in \lift(\trace{})$ be an execution obtained by lifting the trace $\trace{}$. 
In Section~\ref{sec:lifting-traces}, we have argued that when only considering
$\sbo$, $\asw$, $\rf$ and $\sco$ edges, any execution obtained by lifting a trace is acyclic.  In the process of lifting traces, some transitions create events while
some do not.  Label the events in $\Exec{}$ in the order that they are created by 
transitions in $\trace{}$ is a natural topological sort of $\oExec{}$.
\emph{All the partial execution graphs described below are based on
this natural topological sort.  We also have the $\Mograph$ of all partial traces
$\trace{i}$ be extended in a way that is consistent with $\Exec{}$ in the lifting process.}
We will use $\olift(\trace{i})$ to refer to the specific execution in $\lift(\trace{i})$
whose $\Mograph$ extension is consistent with that of $\Exec{}$.

Base case: when $i = 0$, $\trace{0}$ is the empty trace, which is the initial state
of the operation model for $P$.  Thus, $\olift(\trace{0})$ is
the empty execution graph $\Exec{0}$, which is a valid partial execution graph of $E$.

Inductive step: suppose we have constructed a partial trace $\trace{i}$ of $\trace{}$
and that $\olift(\trace{i}) = \Exec{k}$,
where $\Exec{k}$ is a partial execution graph of $\Exec{}$. 
We will show that when $\trace{i}$ is extended to $\trace{i+1}$ by executing the next
transition $\myTrans{i+1}$, we have either $\olift(\trace{i+1}) = \Exec{k}$ or 
$\olift(\trace{i+1}) = \Exec{k+1}$, where $\Exec{k+1}$ is a partial execution graph 
of $\Exec{}$. 

We will consider different cases for the transition $\myTrans{i+1}$ below. 

\paragraph{Invisible Instruction}
If the transition $\myTrans{i+1}$ is an invisible instruction,
such as an \code{if} statements, an assignment to non-atomic locations,
and the empty statement $\epsilon$, then it leaves $\Exec{k}$ unchanged,
and $\olift(\trace{i+1}) = \Exec{k}$.  For \code{if} statements, the partial
trace at this point determines a branch to take, and proving that taking
the branch will produce a valid partial execution graph when lifted
comes down to proving the rest of cases analyzed.

\paragraph{Visible Instruction}
If the next transition $\myTrans{i+1}$ creates a new event $\event{k+1}$
in the lifting process, we have $\olift(\trace{i+1}) = \Exec{k+1}$.
Moreover, new $\sbo$, $\asw$, $\sco$ edges, and updates in the modification
orders may also be added to $\Exec{k}$ to form $\Exec{k+1}$.
We already show that $\Exec{k+1}$ has acyclic $\sbo \cup \asw \cup \rf \cup \sco$ edges.
So we only need to show that $\Exec{k+1}$ is a partial execution graph
of $\Exec{}$.

We will discuss the newly added $\sbo$ and $\asw$ edges first.
Suppose the transition $\myTrans{i+1}$ is a general visible instruction
that corresponds to the event $\event{k+1}$.
We will consider new $\sbo$ and $\asw$ edges that may be added to $\Exec{k}$
when lifting $\trace{i+1}$.
Suppose that $\myTrans{i+1}$ is performed by thread $t$
and is not the first visible instruction in thread $t$.
Then an $\sbo$ edge will be drawn from the last visible event performed by $t$
to $\event{k+1}$.  Suppose that $\myTrans{i+1}$ is the first visible instruction
performed by thread $t$.  If $t$ is the main thread, then no new edges to
$\event{k+1}$ will be added during lifting.  If $t$ is not the main thread, 
then the parent thread $t'$ that created $t$ must have performed a \code{Fork} instruction,
and an $\asw$ edge from the last visible instruction sequenced
before the \code{Fork} instruction to $\event{k+1}$ will be added to $\Exec{k}$, to obtain
$\Exec{k+1}$. 
It is clear that the way we construct $\sbo$ and $\asw$ edges in lifting traces
is consistent with the C++ axiomatic model.

\paragraph{Visible Instruction (Atomic Store)}
If the next transition $\myTrans{i+1}$ is an atomic store statement at location $M$,
it will create an $\textit{StoreElem}$ that corresponds to the event $\event{k+1}$.
We will focus on the changes to modification orders and $\sco$ relations.
If $\event{k+1}$ is a seq\_cst store, lifting $\trace{i+1}$ will cause an $\sco$
edge to be added from the last seq\_cst event to $\event{k+1}$. 
Line~\ref{line:wpriorset-sc} in the \textsc{WritePriorSet} procedure ensures that
$\mo$ edges in the $\Mograph$ conform with $\sco$ relations by adding an $\mo$ edge
from the last seq\_cst store at $M$ to $\event{k+1}$ in the $\Mograph$
if the last seq\_cst store at $M$ exists.
So modification orders confirm with $\sco$ relations in $\Exec{k+1}$. 
Since the operation model may only add incoming $\mo$ edges to $\event{k+1}$
in the $\Mograph$ when processing $\myTrans{i+1}$, then modification orders
in $\Exec{k+1}$ do not have cycles.
By Lemma~\ref{lemma:acyclic}, $\sbo \cup \asw \cup \rf \cup \sco$ in $\Exec{}$,
it follows that $\sco$ relations conform with $\hb$ relations
in $\Exec{k+1} = \olift(\trace{i+1})$.
Line~\ref{line:wpriorset-fence} and line~\ref{line:wpriorset-hb} in the 
\textsc{WritePriorSet} procedure ensure that $\mo$ edges induced by seq\_cst
fences, CoRW and CoWW are added to the $\Mograph$.
Therefore, the new modification orders in $\Exec{k+1}$ induced by
the changes in the $\Mograph$ in the lifting process conform with CoRW, CoWW,
Section 29.3 statement 7 of the C++11 standard, and $\sco$ relations. 
Conformity with CoRW and CoWW implies that $\mo$ conforms with $\hb$.

Now, if $\event{k+1}$ is the last element in $\partialmo{k+1}{M}$, then the above
discussion shows that $\Exec{k+1}$ is a valid partial execution graph of $\Exec{}$
based on the natural topological sort.
It is also possible that $\event{k+1}$ is not the last element in $\partialmo{k+1}{M}$.
Let $\event{j}$ be any event modification ordered after $\event{k+1}$ in
$\partialmo{k+1}{M}$. Note that $\event{j}$ is topologically ordered before $\event{k+1}$.
Since the \textsc{WritePriorSet} procedure only adds incoming $\mo$ edges 
to $\event{k+1}$, no $\mo$ or chain of $\mo$ edges from $\event{k+1}$ to $\event{j}$
exists in $\Mograph$.  Since the operational model forbids cycles in $\Mograph$,
the final $\Mograph$ of $\trace{}$ is free of cycles, 
and the modification orders in the final $\Mograph$
at each location are extended to a total order in $\Exec{}$ during lifting.
Because we assume that $\Mograph$ of all partial traces $\trace{i}$ be extended in a way
that is consistent with $\Exec{}$ in the lifting process, we can conclude that
the modification ordering between $\event{k+1}$ and $\event{j}$ do not cause
any cycles in modification orders of $\Exec{k+1}$ and is only added to make $\partialmo{k+1}{M}$
a total order.  Therefore, $\Exec{k+1}$ is a valid partial execution graph of $\Exec{}$.

\paragraph{Visible Instruction (Atomic Load)}
If the transition $\myTrans{i+1}$ is an atomic load statement at location $M$,
it creates an $\textit{LoadElem}$ that corresponds to the event $\event{k+1}$.
To obtain $\Exec{k+1} = \olift(\trace{i+1})$, a new $\rf$ edge is added to $\Exec{k}$, and the
modification orders at $M$ may be updated.  Suppose that $\event{k+1}$
reads from $\event{j}$, where $\event{j}$ is topologically ordered before $\event{k+1}$.
We make the following claim:

\textbf{(Claim 1)} Any valid store that the operational model allows the
\textit{LoadElem} corresponding to $\event{k+1}$
to read from is also valid for $\event{k+1}$ to read from under our axiomatic model.

We will prove this claim by contradiction or contrapositive using case analysis.
We will assume our axiomatic model forbids $\event{k+1}$ from reading from $\event{j}$. 

\textit{Case 1:} If having $\event{k+1}$ read from $\event{j}$ violates CoWR,
then there exists an event $\event{l}$ in $\Exec{k+1}$
such that $\event{j} \stackrel{\partialmo{k+1}{M}}{\rightarrow} \event{l}$ and
$\event{l} \stackrel{\hb}{\rightarrow} \event{k+1}$. 
We will first assume that $\event{j} \stackrel{\mo}{\rightarrow} \event{l}$
exists in the $\Mograph$ of $\trace{i}$.
The \textsc{ReadPriorSet} procedure iterates over each thread, and when considering
the thread that performs $\event{l}$, line~\ref{line:rpriorset-s4} finds either 
the store $\event{l}$, any store sequenced after $\event{l}$,
or any load sequenced after $\event{l}$.
Then the store $A$ in line~\ref{line:rpriorset-last} of \textsc{ReadPriorSet}
is the store $\event{l}$,
a store sequenced after $\event{l}$, or a store read by a load sequenced after $\event{l}$.
In any case, based on CoWW, CoWR and the inductive hypothesis
that $\Exec{k}$ is a valid partial execution graph of $\Exec{}$, we can deduce that
the $\mo$ edge (or the equivalent chain of $\mo$ edges) 
$\event{j} \stackrel{\mo}{\rightarrow} A$ exists in the $\Mograph$ of $\trace{i}$.
Since $A$ is reachable from $\event{j}$, line~\ref{line:rpriorset-reject}
in the \textsc{ReadPriorSet} procedure forbids the \textit{LoadElem}
corresponding to $\event{k+1}$ from reading from the store corresponding to
$\event{j}$ in the operational model.  Then we prove the Claim 1 by contrapositive. 

However, it is also possible that $\event{j}$ and $\event{l}$ are unordered
in the $\Mograph$ of $\trace{i}$.  Then the modification ordering between $\event{j}$
and $\event{l}$ in $\Exec{k+1}$ is due to the extension of the final
$\Mograph$ of $\trace{}$ in $\Exec{}$, as the $\Mograph$ of
all partial traces $\trace{i}$ are extended in a way that is consistent with $\Exec{}$. 
Having $\event{k+1}$ read from $\event{j}$ will add $\mo$ edges so that
$\event{l} \stackrel{\mo}{\rightarrow} \event{j}$ exists in the $\Mograph$ of $\trace{i+1}$
and the final $\Mograph$ of $\trace{}$. 
This is a contradiction because the extension of the final $\Mograph$ is only
required between two unordered stores in $\Mograph$.  Therefore, we prove
the Claim 1 by contradiction. 

\textit{Case 2:} If having $\event{k+1}$ read from $\event{j}$ violates CoRR, the proof is similar
to the case of CoWR. 

\textit{Case 3:} If having $\event{k+1}$ read from $\event{j}$ violates Section 29.3
statement 4 in the C++11 standard,
then in $\Exec{k+1}$, there exists a fence $X$ sequenced before $\event{k+1}$.
Let $X'$ be the last seq\_cst store preceding $X$ in $\sco$ in the partial execution
graph $\Exec{k+1}$.
Then $\event{j}$ is either some store modification ordered before $X'$
or is a seq\_cst store that precedes $X'$ in $\sco$.
Consider that $\event{j} \stackrel{\sco}{\rightarrow} X'$ in $\Exec{k+1}$.
Then the inductive hypothesis implies that $\event{j} \stackrel{\mo}{\rightarrow} X'$
in the $\Mograph$ of $\trace{i}$, as line~\ref{line:wpriorset-sc} in the \textsc{WritePriorSet}
procedure adds $\mo$ edges between seq\_cst stores at the same location in $\Mograph$.
When the \textsc{ReadPriorSet} procedure iterates over the thread that performs $X'$,
line~\ref{line:rpriorset-s2} returns $S_2$ as $X'$,
and line~\ref{line:rpriorset-last} will return $A$ as either $X'$,
a store sequenced after $X'$, or a store read from by a load sequenced after $X'$. 
In any case, CoWW, CoWR, and the inductive hypothesis guarantees that
$X' \stackrel{\mo}{\rightarrow} A$ in the $\Mograph$ of $\trace{i}$.
Therefore, we have $\event{j} \stackrel{\mo}{\rightarrow} A$
in the $\Mograph$ of $\trace{i}$, and line~\ref{line:rpriorset-reject}
in the \textsc{ReadPriorSet} procedure forbids $\event{k+1}$ from reading
from $\event{j}$ in the operational model.

Consider that $\event{j}$ is modification ordered before $X'$ in $\Exec{k+1}$. 
If $\event{j} \stackrel{\mo}{\rightarrow} X'$ exists in the $\Mograph$ of $\trace{i}$,
then it the same as the last paragraph.  
If $\event{j}$ and $X'$ are two unordered stores in the $\Mograph$
of $\trace{i}$, then we can deduce the same contradiction as in the analysis of
CoWR violation in \textit{Case 1}.

\textit{Case 4:} If having $\event{k+1}$ read from $\event{j}$ violates statement 5 or statement 6
of Section 29.3 in the C++11 standard, then the proof is similar to the analysis
of the violation of statement 4 in \textit{Case 3} by considering line~\ref{line:rpriorset-s1}
or line~\ref{line:rpriorset-s3} in the \textsc{ReadPriorSet} procedure. 
So we do not present it here. 

\textit{Case 5:} If having $\event{k+1}$ read from $\event{j}$ violates Section 29.3 statement 3
in the C++11 standard, then $\event{k+1}$ has seq\_cst ordering, the last seq\_cst $X$
store at $M$ that precedes $\event{k+1}$ in $\sco$ exists, and $\event{j}$
is either a seq\_cst store that precedes $X$ is $\sco$ or a store that happens before
$X$.  Then $\event{j}$ will be removed from the $\textit{may-read-from}$ set
in line~\ref{line:may-read-from-scrm} of the \textsc{BuildMayReadFrom} procedure, 
and $\event{j}$ is not a valid store for $\event{k+1}$ to read from in the operational
model. 

We have completed the proof of Claim 1 by analyzing the above five cases. 
Claim 1 shows that $\Exec{k+1}$ has valid $\rf$ edges.
We will next show the acyclicity for modification orders and conformity for $\sco$ edges
in $\Exec{k+1}$.  Suppose $\event{k+1}$ reads from some event $\event{j}$.
Establishing this $\rf$ relation adds incoming $\mo$ edges to $\event{j}$
to the $\Mograph$ of $\trace{i}$.  We claim that the updated $\Mograph$ is free of cycles.
If adding these edges causes a cycle in the $\Mograph$ of $\trace{i+1}$,
then the cycle contains only one of the newly added edges,
and line~\ref{line:rpriorset-reject} in the \textsc{ReadPriorSet} procedure
should have forbidden $\event{k+1}$ from reading from $\event{j}$.  Therefore, 
the $\Mograph$ of $\trace{i+1}$ is free of cycles. 
Lines~\ref{line:rpriorset-s1} to~\ref{line:rpriorset-s4} in the \textsc{ReadPriorSet}
procedure considers $\mo$ edges
that are enforced by statements 5, 4, and 6 of Section 29.3 in the C++11 standard,
CoRR, and CoWR.  Line~\ref{line:rpriorset-last} filters out redundant $\mo$ edges,
as once the $\mo$ edges that are not filtered out are added, then the $\mo$ edges
that are filtered out will follow from the transitivity of $\mo$ edges.
Since the $\Mograph$ of $\trace{i+1}$ is acyclic, modification orders in $\Exec{k+1}$
is also acyclic in the lifting process.  
If $\event{k+1}$ has seq\_cst ordering, then an $\sco$ edge will be drawn from the last
seq\_cst event to $\event{k+1}$ in $\Exec{k+1}$, this $\sco$ edge conforms with 
modification orders as $\event{k+1}$ is a load and not an element in $\partialmo{k+1}{M}$. 
However, we also need to show that modification orders in $\Exec{k+1}$
conform with other $\sco$ edges. 
Because we assume conformity of modification orders with $\sco$ edges in $\Exec{k}$, if
any cycle exists in union of $\sco$ and $\partialmo{k+1}{M}$ in $\Exec{k+1}$,
the cycle must involve one of the newly added modification ordering.
Suppose a cycle $C$ exists and it contains events $X$
and $\event{j}$ where $X \stackrel{\partialmo{k+1}{M}}{\rightarrow} \event{j}$ 
is a newly added modification ordering in $\Exec{k+1}$.
Since all events in $\partialmo{k+1}{M}$ are atomic stores, the two endpoints of
any maximal chain of $\sco$ edges (could also be an $\sco$ edge) in $C$ must be
seq\_cst atomic stores.  Let the two endpoints be events $Z_1$ and $Z_2$.
Then, the relation $Z_1 \stackrel{\mo}{\rightarrow} Z_2$ or
$Z_2 \stackrel{\mo}{\rightarrow} Z_1$ whichever conforms with the $\sco$ edges
must exist in the $\Mograph$ of $\trace{i}$.
No $\stackrel{\partialmo{k+1}{M}}{\rightarrow}$ edges in $C$ can be due to the extensions
of unordered stores in $\Mograph$ of $\trace{i+1}$,
because we assume that the $\Mograph$ of $\trace{}$ is extended without cycles.
Therefore, we must have $\event{j} \stackrel{\mo}{\rightarrow} X$
in the $\Mograph$ of $\trace{i}$.
Then, the relation $X \stackrel{\mo}{\rightarrow} \event{j}$ cannot exist in
the $\Mograph$ of $\trace{i+1}$, as it makes the $\Mograph$ of $\trace{i+1}$ acyclic.
However, $X \stackrel{\partialmo{k+1}{M}}{\rightarrow} \event{j}$ cannot be due to
the extension of unordered stores in the $\Mograph$ of $\trace{i+1}$.
Therefore, we have a contradiction, and $\sco$ conforms with modification orders
in $\Exec{k+1}$.  Therefore, we have $\Exec{k+1}$ is a valid partial execution
graph of $\Exec{}$.

\paragraph{Visible Instruction (Atomic RMW)}
If the transition $\myTrans{i+1}$ is an atomic RMW statement at location $M$,
it creates an $\textit{RMWElem}$ that corresponds to the event $\event{k+1}$.
An atomic RMW is both a load and a store except that the standard requires that
RMW operations shall always read the last value (in the modification order) written
before the write associated with the RMW operation.
In the operational model, the \textsc{BuildMayReadFrom} procedure forbids two
RMWs to read from the same store.
Denote the store that $\event{k+1}$ reads from as $\event{j}$. 
Then the \textsc{AddRMWEdge} procedure adds all outgoing $\mo$ edges from $\event{j}$
to the set of outgoing edges of $\event{k+1}$ and adds an $\mo$ edge from $\event{j}$
to $\event{k+1}$ to form the $\Mograph$ of $\trace{i+1}$.
Therefore, when lifting $\trace{i+1}$, $\event{j}$
is immediately modification ordered before $\event{k+1}$ in $\Exec{k+1}$. 
Hence, $\Exec{k+1}$ is a valid partial execution graph of $\Exec{}$. 

If the transition $\myTrans{i+1}$ is a fence instruction, it creates a
\textit{FenceElem} which corresponds to the event $\event{k+1}$.  
Lifting $\trace{i+1}$ adds $\event{k+1}$ to $\Exec{k}$ but does not
create $\rf$ edges or change modification ordering.  If $\event{k+1}$ has
seq\_cst ordering, an $\sco$ edge from the last seq\_cst event (if exists)
to $\event{k+1}$ will be created, and $\sco$ conforms with $\hb$ and modification
orders in $\Exec{k+1}$.  Thus, $\Exec{k+1}$ is a valid partial execution graph of $\Exec{}$.

We have completed the proof of induction by case analysis. 
\end{proof}

\section{Additional Data}
\begin{table*}[h]
\centering

\caption{Performance results (in ms) for individual JavaScript benchmarks in the JSBench suite for tsan11, tsan11rec, and \tool under two configurations.  Smaller times are better.  The "Memory Accesses" columns report the number of atomic operations (including synchronization operations such as mutex and condition variable operations) and normal accesses to shared memory locations executed by individual JavaScript benchmarks under \tool. \label{table:result-additional-1}}
{\small
\begin{tabular}{|l|rr|rr|rr|r|r|}
  \hline
  & \multicolumn{6}{c|}{\textbf{Performance Results (in ms)}} & \multicolumn{2}{c|}{\textbf{\# Memory Accesses}} \\
  & \multicolumn{6}{c|}{} & \multicolumn{2}{c|}{\textbf{By \tool}} \\
  \hline
  & \multicolumn{2}{c|}{\textbf{tsan11}} & \multicolumn{2}{c|}{\textbf{tsan11rec}} & \multicolumn{2}{c|}{\textbf{\tool}} &&  \\
  JavaScript benchmark & 1 core & all cores & 1 core & all cores & 1 core & all cores & \# non-atomics & \# atomics\\
  \hline
  amazon/chrome       & 368     &       348     &       1054    &       383     &	767	&	766	&	92M	&	99M	\\
  amazon/chrome-win   & 368     &       349     &       1053    &       385     &	767	&	767	&	90M	&	99M	\\
  amazon/firefox      & 362     &       341     &       1085    &       368     &	718	&	718	&	71M	&	73M	\\
  amazon/firefox-win  & 351     &       331     &       1038    &       357     &	700	&	701	&	67M	&	71M	\\
  amazon/safari       & 400     &       372     &       1170    &       414     &	778	&	779	&	77M	&	86M	\\
  facebook/chrome     & 2323    &       2017    &       7693    &       2669    &	4421	&	4434	&	471M	&	449M	\\
  facebook/chrome-win & 3715    &       3135    &       11463   &       3772    &	7084	&	7092	&	607M	&	850M	\\
  facebook/firefox    & 1458    &       1303    &       5219    &       1835    &	3002	&	3015	&	278M	&	318M	\\
  facebook/firefox-win& 818     &       721     &       2858    &       944     &	1619	&	1628	&	149M	&	189M	\\
  facebook/safari     & 3639    &       3034    &       14009   &       4508    &	7360	&	7375	&	577M	&	1,036M	\\
  google/chrome       & 1616    &       1488    &       5619    &       1901    &	3358	&	3365	&	333M	&	399M	\\
  google/chrome-win   & 1579    &       1447    &       5489    &       1866    &	3260	&	3249	&	400M	&	359M	\\
  google/firefox      & 962     &       899     &       2544    &       1069    &	1984	&	1986	&	171M	&	173M	\\
  google/firefox-win  & 1123    &       1033    &       3383    &       1265    &	2279	&	2283	&	214M	&	208M	\\
  google/safari       & 1445    &       1320    &       4829    &       1670    &	2943	&	2941	&	273M	&	319M	\\
  twitter/chrome      & 618     &       588     &       1078    &       645     &	1305	&	1306	&	154M	&	141M	\\
  twitter/chrome-win  & 620     &       587     &       1073    &       644     &	1310	&	1310	&	158M	&	151M	\\
  twitter/firefox     & 175     &       165     &       275     &       178     &	376	&	376	&	50M	&	47M	\\
  twitter/firefox-win & 174     &       164     &       277     &       177     &	373	&	374	&	50M	&	47M	\\
  twitter/safari      & 466     &       443     &       880     &       490     &	956	&	963	&	103M	&	106M	\\
  yahoo/chrome        & 1638    &       1358    &       6191    &       2000    &	4294	&	4308	&	339M	&	663M	\\
  yahoo/chrome-win    & 1345    &       1115    &       4953    &       1633    &	3172	&	3176	&	234M	&	496M	\\
  yahoo/firefox       & 1638    &       1363    &       6250    &       2016    &	4292	&	4283	&	338M	&	663M	\\
  yahoo/firefox-win   & 883     &       747     &       3460    &       1100    &	1895	&	1888	&	113M	&	320M	\\
  yahoo/safari        & 1635    &       1370    &       6263    &       2019    &	4292	&	4291	&	338M	&	661M	\\
  \hline
\end{tabular}
}
\end{table*}

Table~\ref{table:result-additional-1} reports some detailed statistics about
the 25 JavaScript benchmarks in the JSBench suite.

\end{document}